\newtheorem{proposition}{Proposition}
\newtheorem{theorem}{Theorem}
\newcommand{\todom}[1]{} % {\textcolor{blue}{TODO: #1}}
\def\BibTeX{{\rm B\kern-.05em{\sc i\kern-.025em b}\kern-.08em
    T\kern-.1667em\lower.7ex\hbox{E}\kern-.125emX}}
\begin{document}

\title{A Unified Distributed Algorithm for Hybrid Near-Far Field Activity Detection in Cell-Free Massive MIMO\\
}

\author{\IEEEauthorblockN{Jingreng Lei,  Yang Li, Ziyue Wang, Qingfeng Lin, Ya-Feng Liu, and Yik-Chung Wu}

\thanks{Received 18 September 2025; revised 26 March 2026; accepted 23 May 2026.  The work of Jingreng Lei and Yang Li was supported in part by the National Natural Science Foundation of China (NSFC) under Grant 62571086, and in part by Guangdong Basic and Applied Basic Research Foundation under Grant 2025A1515011658. The work of Ya-Feng Liu was supported in part by the NSFC under Grant 12371314. An earlier version of this paper was presented at the IEEE Globecom 2025. [DOI: 10.1109/GLOBECOM59602.2025.11431942]. The associate editor coordinating the review of this article and approving it for publication was Prof. Theodoros Tsiftsis. (Corresponding author: Yang Li.)}
\thanks{Jingreng Lei is with the Department of Electrical and Computer Engineering, The University of Hong Kong, Hong Kong, and also with the School of Computing and Information Technology, Great Bay University, Dongguan~523000, China (e-mail: leijr@eee.hku.hk).}
\thanks{Yang Li is with the School of Computing and Information Technology, Great Bay University, Dongguan 523000, China, and also with Dongguan Key Laboratory for Intelligence and Information Technology, Dongguan 523000, China (e-mail: liyang@gbu.edu.cn).}
\thanks{Ziyue Wang is with the State Key Laboratory of Scientific and Engineering Computing, Institute of Computational Mathematics and Scientific/Engineering Computing, Academy of Mathematics and Systems Science, Chinese Academy of Sciences, Beijing 100190, China (e-mails: ziyuewang@lsec.cc.ac.cn).}
\thanks{Qingfeng Lin and Yik-Chung Wu are with the Department of Electrical and Computer Engineering, The University of Hong Kong, Hong Kong (e-mail: \{qflin, ycwu\}@eee.hku.hk). }
\thanks{Ya-Feng Liu is with the Ministry of Education Key Laboratory of Mathematics and Information Networks, School of Mathematical Sciences, Beijing University of Posts and Telecommunications, Beijing 102206, China (e-mail: yafengliu@bupt.edu.cn).}
}

\maketitle

\begin{abstract} 
A great amount of endeavor has recently 
been devoted to activity detection for 
massive machine-type communications in 
cell-free multiple-input multiple-output (MIMO) systems. However, as the number of antennas at the  
access points (APs) increases, the Rayleigh 
distance that separates the near-field and 
far-field regions also expands, rendering 
the conventional assumption of sole far-field 
propagation impractical. To address 
this challenge, this paper establishes a  
covariance-based formulation that can effectively capture the statistical property of hybrid near-far field channels. Based on this formulation, we theoretically reveal that increasing the proportion of near-field channels enhances the detection performance. Furthermore,  we propose a distributed algorithm, where each AP performs local activity detection and only exchanges the detection results to the central processing unit, thus significantly reduces the computational complexity and the communication overhead. Not only with convergence guarantee, the proposed algorithm is unified in the sense that it can handle single-cell or cell-free systems with either near-field or far-field devices as special cases.
Simulation results validate 
the theoretical analyses and demonstrate 
the superior performance of the proposed  
approach compared with existing methods.
\end{abstract}

\begin{IEEEkeywords}
Cell-free massive MIMO, distributed activity detection, grant-free random access, hybrid near-far field communications, machine-type communications.
\end{IEEEkeywords}

\section{Introduction}
With the rapid development of the Internet of Things (IoT), massive machine-type communications (mMTC) are expected to play a crucial role in the sixth-generation (6G) vision of ubiquitous connectivity\cite{Yafeng}. To meet stringent low-latency requirements, grant-free random access has emerged as a promising solution\cite{shahab2020grant,yangge}, where devices transmit data without obtaining permission from the access points (APs). However, a key challenge in grant-free random access lies in the device activity detection, as large number of potential IoT devices prevents orthogonality among the signature sequences of different devices. 

Current studies on activity detection can be broadly divided into two categories. In the first line of research, by exploiting the sporadic nature of mMTC, compressed sensing (CS)-based methods can be employed to solve the joint activity detection and channel estimation problem~\cite{yangge3,gao2023compressive,zhanghao,Chen2018,Ke20,Chen2019,Senel2018,Mei21,Ai22,zhanghao2}. Another line of research, known as the covariance-based approach, identifies active devices by leveraging the channel covariance matrix without explicitly estimating the channels\cite{Haghighatshoar2018,chenIcc,Lin2022,li2022asynchronous,lin2024intelligent,Ren2025}. Both theoretically and empirically, it has been demonstrated that the covariance-based approach generally outperforms the \mbox{CS-based} approach\cite{chen2021phase}.

Activity detection problem has been studied in both single cell scenario as well as the more general cell-free massive multiple-input multiple-output (MIMO), where all APs are connected to a central processing unit (CPU) via fronthaul links for joint signal processing\cite{Shao2020-2,ganesan2021clustering,lin2024communication}. The cell-free  architecture eliminates traditional cell boundaries, and multiple received signals of a user can be harnessed to improve detection performance compared to the single cell setting. Nevertheless, as the array aperture enlarges, the boundary between the near-field and the far-field regions, characterized by the Rayleigh distance, also expands~\cite{lu2024tutorial,cuilaoshi,wangzhe,liu2025nearfield,lu2025block}. \textcolor{black}{This leads to a  scenario, previously unexamined, in which  a device may appear as near-field to certain APs and as far-field to other APs, as shown in Fig.~\ref{fig:Hybrid}.} Due to the potential mixed identity of a user, the conventional assumption of the 
far-field propagation alone in existing activity detection\cite{yangge3,gao2023compressive,zhanghao,Chen2018,Ke20,Chen2019,Senel2018,Mei21,Ai22,zhanghao2,Haghighatshoar2018,chenIcc,Lin2022,li2022asynchronous,lin2024intelligent,Ren2025} becomes inapplicable. \textcolor{black}{In particular, unlike far-field channels that are formulated as planar wavefronts and exhibit independent and identically distributed (i.i.d.) entries across antennas, near-field channels are with spherical wavefronts, where the channel depends on both the angle and the distance of the device~\cite{Cui1,Cui2,Cui3}. Therefore, the activity detection problem in the presence of near-field channels becomes substantially different from the conventional far-field setting and calls for a new treatment.}

Given the superior performance of the covariance-based approach for activity detection in conventional far-field cell-free massive MIMO\cite{ganesan2021clustering,li2022asynchronous,lin2024communication}, this paper aims to investigate the covariance-based approach when hybrid near-far field channels are present. The essential step for developing the covariance-based approach is to derive the probability density function (PDF) of the received signals. However, unlike the conventional far-field scenario where the received signal at each antenna is i.i.d. Gaussian distributed\cite{fengler2021non}, the hybrid near-far field channels are more complicated, making the channel covariance matrix very different from that in existing works~\cite{Haghighatshoar2018,chenIcc,Lin2022,li2022asynchronous,lin2024intelligent,Ren2025}. Consequently, the rank-one update commonly 
employed in the covariance-based method\cite{ganesan2021clustering,li2022asynchronous,wang2024scalinglaw} cannot be applied. Furthermore, as a particular device may be a near-field user to some APs but a
far-field user to other APs, it is not straightforward   to design a distributed algorithm that can leverage the parallel computation across multiple APs.

\begin{figure}[t!]
    \centering
    \includegraphics[width=0.8\linewidth]{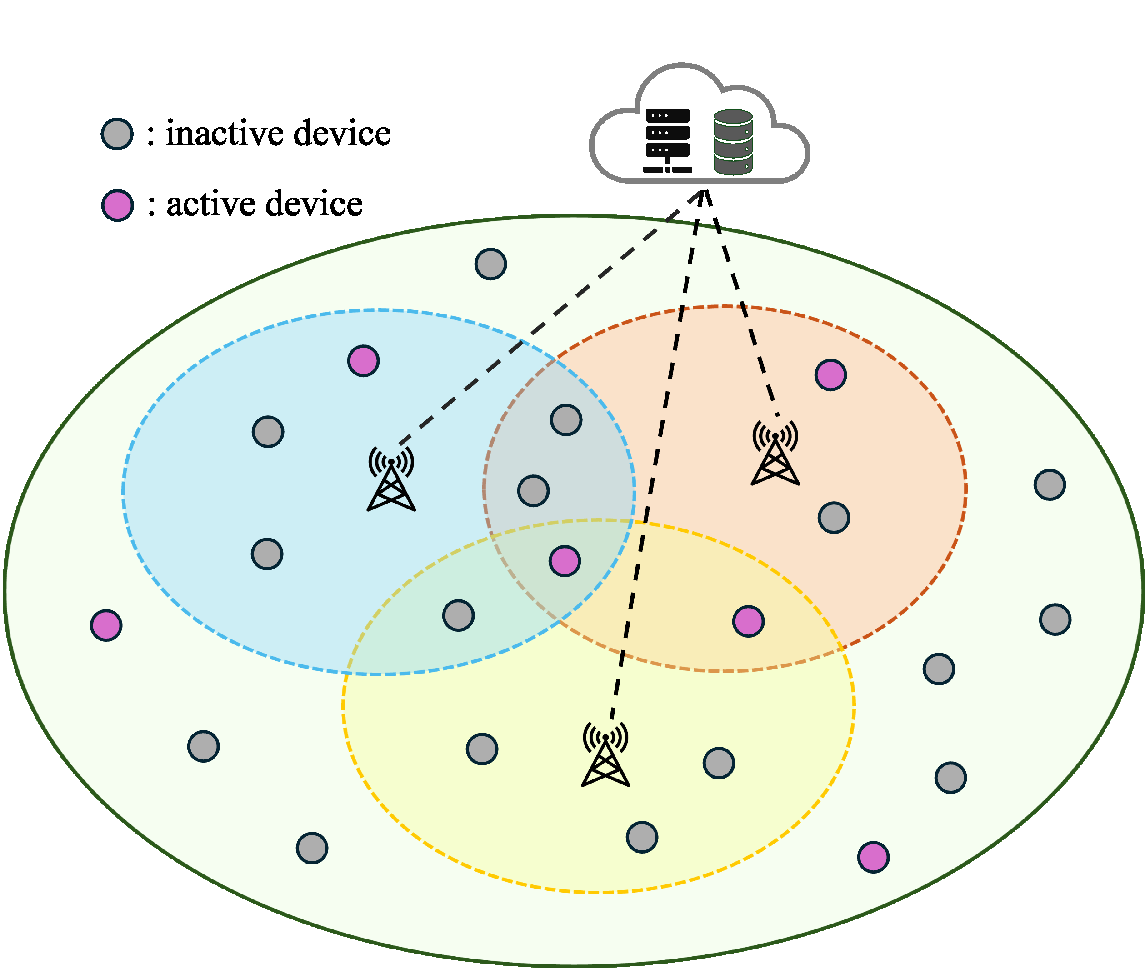}
    \caption{\textcolor{black}{Hybrid near-far field activity detection in cell-free massive MIMO.}}
    \label{fig:Hybrid}
\end{figure}  

Beyond the problem formulation and algorithm design, another fundamental theoretical question emerges: \textcolor{black}{how do hybrid near-far field channels affect the detection performance?} While the  near-field and far-field regions exhibit different electromagnetic propagation characteristics, existing works on activity detection \cite{Haghighatshoar2018,wang2024scalinglaw,chen2021phase, lin2024intelligent} merely focus on the performance analysis for the conventional far-field communications. This necessitates the study of the detection performance when hybrid near-far field channels are involved.

Despite the above challenges, this paper presents a systematic study on the hybrid near-far field activity detection in cell-free massive MIMO. In particular, we first derive the PDF of the received signals so that the statistical property of the hybrid near-far field channels is accurately captured. The newly established received signal model unifies various special cases such as single-cell or cell-free case with either near-field or far-field devices. Applying the established signal model to the covariance-based activity detection formulation, we theoretically analyze how the hybrid near-far field channels affect the detection performance from the perspective of identifiability. This insight provides a fundamental understanding of hybrid near-far field communications in the context of activity detection.

In addition to theoretical insight, we also propose a unified distributed algorithm to solve the formulated activity detection problem. Specifically, we introduce dual variables and penalty term such that the whole-network joint detection problem can be decomposed into multiple parallel detection subproblems, each executed at an AP. To efficiently handle the local detection problem at each AP, a novel coordinate descent (CD) algorithm based on the Sherman-Morrison-Woodbury update with Taylor expansion is developed. Furthermore, we provide theoretical convergence analysis to show that the proposed distributed algorithm is guaranteed to converge to a stationary point of the original centralized covariance-based activity detection problem.  Notably, by simply changing the system parameters
(e.g., number of APs, number of antennas, wavelength of
the carrier), the proposed distributed algorithm directly covers various scenarios, including the activity detection for the single-cell near-field or far-field, and the cell-free near-field or far-field communications. This versatility makes the proposed algorithm widely applicable in various communication scenarios.  

With the information exchange between APs and the CPU only involving local detection results instead of the original received signals, the  computational complexity and communication overhead of the proposed distributed algorithm is significantly reduced compared to the centralized approach. This is shown by theoretical analysis and confirmed by simulations, where the proposed algorithm achieves detection performance close to that of
the centralized approach while requiring much fewer bits for
fronthaul transmission and significantly shorter computation
time. Furthermore, simulation results show superior performance of the
proposed algorithm compared with existing benchmarks, and corroborate the theoretical analysis that increasing the proportion of near-field
channels leads to better detection performance. 

The rest of the paper is organized as follows. Section~\ref{sec:system} introduces the hybrid near-far field system model and problem formulation. Section~\ref{sec:detection_performance} theoretically analyzes how hybrid near-far field channels impact the detection performance. Section~\ref{sec:distributed} presents the unified distributed algorithm.  Section~\ref{convergence analysis} establishes the convergence guarantee of the proposed distributed algorithm. Simulation results are provided in Section~\ref{sec:simulation}. Finally, the paper is concluded in Section~\ref{sec:conclusion}.

\section{System Model and Problem Formulation}
\label{sec:system}
\subsection{System Model}

\textcolor{black}{We consider an uplink cell-free massive MIMO system with \( M \) APs and \( N \) IoT devices, as depicted in Fig.~\ref{fig:Hybrid}}. Each AP is equipped with a $K$-antenna uniform linear array with a half-wavelength space, and each IoT device utilizes a single antenna. All \( M \) APs communicate with a CPU through fronthaul links. The channels 
between the APs and the devices are assumed to undergo quasi-static block fading, which  remains unchanged within each coherence block, but may vary across different blocks. \textcolor{black}{The Rayleigh distance is defined as $2D^2/\lambda_\text{c}$, where $D$ is the aperture of the antenna array and $\lambda_\text{c}$ is the wavelength of the carrier. Devices located within this distance from an AP are considered near-field users, while the remaining devices are considered far-field users.} Notice that a device can appear as a near-field user for some APs and far-field user for other APs at the same time. 

For each AP $m$, let $\mathcal{U}_m$ denote the subset containing  $N_{m,\text{near}}$  devices in the near-field region, and $\mathcal{U}^{\text{c}}_m$ denote the complement set containing the remaining $N - N_{m,\text{near}}$ devices in the far-field region. If  device $n$ is located in the far-field region of  AP $m$,  the uplink channel  can be modeled as\cite{ganesan2021clustering}:
\begin{equation}
\label{far-field chanel}
   \mathbf{h}_{m,n}\sim\mathcal{CN}(\mathbf{0},g_{m,n}\mathbf{I}_{K}),~\forall~n \in \mathcal{U}^{\text{c}}_m,
\end{equation}
 where $g_{m,n}$ is 
the large-scale fading coefficient. On the other hand, if   device $n$ is located in the near-field region of AP $m$, its uplink channel consists of a deterministic line-of-sight (LoS) channel and a statistical multi-path non-line-of-sight (NLoS) channel induced by $L_{m}$ scatters\cite{wang2024beamfocusing}. Then, the near-field channel can be modeled as~\cite[Eq.~(132)]{liuNearFieldCommunicationsTutorial2023}:
 \begin{align}
	\label{near-field channel model}
    \mathbf{h}_{m,n} \!\!=\! \underbrace{\beta_{m,n} \mathbf{b}(\mathbf{r}_{m,n})}_{\text{LoS}} \!+ \sum_{\ell=1}^{L_{m}}\underbrace{ \varphi_{m,\ell}\tilde{\beta}_{m,n,\ell} \mathbf{b}(\tilde{\mathbf{r}}_{m,\ell})}_{\text{NLoS}}, \forall~n\in\mathcal{U}_m,
 \end{align}
 \noindent where $\beta_{m,n}$ and $\tilde{\beta}_{m,n,\ell}$ denote the LoS and the NLoS channel gains, respectively,  $\varphi_{m,l}\sim\mathcal{CN}(0,\sigma_{m,l}^2)$ is the reflection coefficient of the $l$-th scatterer of AP $m$ with variance $\sigma_{m,l}^2$, $\mathbf{r}_{m,n} $ denotes the distances between device $n$ and all the $K$ antennas of AP $m$, $\tilde{\mathbf{r}}_{m,l}$ denotes the distances between scatter $l$ and the $K$ antennas of AP $m$, and $\mathbf{b}(\cdot)$ denotes the near-field array response, which is given by
 \begin{align}
    \mathbf{b}(\mathbf{r}) =
    \big[ & e^{-j\frac{2\pi}{\lambda_\text{c}}r_1}, 
          e^{-j\frac{2\pi}{\lambda_\text{c}}r_2}, \dots, e^{-j\frac{2\pi}{\lambda_\text{c}}r_K} \big]^{\text{T}},
\end{align}

\noindent where $\mathbf{r}\triangleq[r_1, r_2,..., r_K]^\text{T}$ denotes \textcolor{black}{a $K$-length distance vector determined by the positions of the source and the antenna array elements, applicable to both $\mathbf{r}_{m,n}$ and $\tilde{\mathbf{r}}_{m,\ell}$ in~(\ref{near-field channel model}).} 

\textcolor{black}{In this paper, we consider a typically stationary scenario of mMTC \cite{Edith1},~\cite{Edith2}, where the IoT devices are stationary. This is reasonable in many practical scenarios, with IoT devices commonly deployed in fixed locations to monitor or
control specific physical phenomena, such as temperature, humidity, and structural health~\cite{Edith1}. Once deployed, they remain in fixed positions to ensure consistent and accurate data collection. Furthermore, a stationary deployment simplifies the setup and maintenance of IoT networks, which requires less frequent physical intervention for maintenance or replacement~\cite{Edith2}. Therefore, the locations of the devices and the scatterers can be acquired through environment sensing or calibration procedures \cite{Wymeersch2020} during the initial network registration phase. Based on the location information, the channel gains can be computed from the path-loss model \cite{liuNearFieldCommunicationsTutorial2023}, \cite{Dong2022}, \cite{Wang2026}.}

To detect the device activity, each device $n$ is pre-assigned a distinct signature sequence with length $L$, i.e., $\mathbf{s}_n \in \mathbb{C}^L$.  Given the sporadic nature of mMTC, only a small fraction of the  $N$  devices remain active within each coherence block. Let $a_n\in \{0,1\}$ denote the binary activity variable for device $n$ ($a_n= 1$ means user $n$ is active).  Then, the received signal at AP \( m \) can be written as

\begin{align}
\label{receive signal}
\mathbf{Y}_m & = \sum_{n \in \mathcal{U}_m} a_n \mathbf{s}_n \mathbf{h}_{m,n}^\text{T} + \sum_{n \in \mathcal{U}_m^{\text{c}}} a_n \mathbf{s}_n \mathbf{h}_{m,n}^\text{T} + \mathbf{W}_m,
\end{align}

\noindent where  \( \mathbf{W}_m\in \mathbb{C}^{L\times K}\) is the additive i.i.d. Gaussian noise at AP \( m \) with each entry following $\mathcal{CN}(0,\varsigma^2_m)$ and $\varsigma^2_m$ denoting the noise variance.

\subsection{Problem Formulation}
\textcolor{black}{Based on the location information, we can calculate the covariance matrix for the near-field channel in~(\ref{near-field channel model}) as}
\textcolor{black}{
\begin{align}
	\label{near-field channel covariance}
	\mathbf{R}_{m,n} &= \mathbb{E}\!\left[\left(\mathbf{h}_{m,n} - \beta_{m,n}\mathbf{b}(\mathbf{r}_{m,n})\right)\!\left(\mathbf{h}_{m,n} - \beta_{m,n}\mathbf{b}(\mathbf{r}_{m,n})\right)^{\!\text{H}}\right]\nonumber \\
	&= \sum_{\ell=1}^{L_m}\sum_{\ell'=1}^{L_m} \tilde{\beta}_{m,n,\ell}\tilde{\beta}_{m,n,\ell'}^{*}\, \mathbb{E}[\varphi_{m,\ell}\varphi_{m,\ell'}^{*}]\, \mathbf{b}(\tilde{\mathbf{r}}_{m,\ell})\mathbf{b}^{\text{H}}(\tilde{\mathbf{r}}_{m,\ell'})\nonumber \\
	&= \sum_{\ell=1}^{L_m} \sigma_{m,\ell}^2\, |\tilde{\beta}_{m,n,\ell}|^2\, \mathbf{b}(\tilde{\mathbf{r}}_{m,\ell})\mathbf{b}^{\text{H}}(\tilde{\mathbf{r}}_{m,\ell}),~\forall~n\in\mathcal{U}_m,
\end{align}
where the second equality follows by substituting the NLoS component of~(\ref{near-field channel model}) and noting that $\tilde{\beta}_{m,n,\ell}$ and $\mathbf{b}(\tilde{\mathbf{r}}_{m,\ell})$ are deterministic. The third equality uses the fact that $\mathbb{E}[\varphi_{m,\ell}\varphi_{m,\ell'}^{*}] = \sigma_{m,\ell}^2$ if $\ell = \ell'$ and $0$ otherwise.} \textcolor{black}{According to~(\ref{near-field channel model}), the only randomness in the near-field channel comes from the reflection coefficient $\varphi_{m,\ell}$, which is a complex Gaussian random variable\cite{liuNearFieldCommunicationsTutorial2023}, \cite{Zou2024},\cite{Bjornson2017}. Therefore, as a linear combination of Gaussian random variables, the near-field channel is a complex Gaussian vector:}
\begin{align}
    \label{near-field channel}
    \mathbf{h}_{m,n} \sim \mathcal{CN}(\beta_{m,n} \mathbf{b}(\mathbf{r}_{m,n}), \mathbf{R}_{m,n}),~\forall~n\in\mathcal{U}_m.
\end{align}
Given that the far-field channels and noise are also complex Gaussian random variables, while $\{{a_n}\}_{n=1}^{N}$ can be viewed as deterministic unknown, the received signal $\mathbf{Y}_m$ in (\ref{receive signal}) follows a complex Gaussian distribution as well. 

However, since $\mathbf{R}_{m,n}$ for near-field device in (\ref{near-field channel covariance}) is not diagonal, this makes  the columns of $\mathbf{Y}_m$   not independent. The joint distribution of  $\mathbf{Y}_m$ is therefore not the product of the distributions from each individual column. Thus, we consider its vectorized form $\mathbf{y}_m = \text{vec}(\mathbf{Y}_m)$, which can be written as
\begin{align}
	\label{vectorized form}
\mathbf{y}_m &= \sum_{n \in \mathcal{U}_m} a_n \mathbf{h}_{m,n} \otimes \mathbf{s}_n +\sum_{n \in \mathcal{U}_m^\text{c}} a_n \mathbf{h}_{m,n} \otimes \mathbf{s}_n + \mathbf{w}_m,
\end{align}

\noindent \textcolor{black}{where $\otimes$ denotes the Kronecker product, which vectorizes each $L\times K$ outer product $\mathbf{s}_n \mathbf{h}_{m,n}^\text{T}$ in (\ref{receive signal}) into an $LK$-dimensional vector, and \( \mathbf{w}_m \) is the vectorized noise.} From (\ref{vectorized form}), we have $\mathbf{y}_m\sim\mathcal{CN}(\bar{\mathbf{y}}_{m},\mathbf{C}_{m})$, where the  mean and the covariance matrix are calculated based on (\ref{far-field chanel}) and (\ref{near-field channel}), which are given by
\begin{align}
\label{expectation and covariance}
    \bar{\mathbf{y}}_{m} &=  \sum_{n \in \mathcal{U}_m} a_n \left(\beta_{m,n} \mathbf{b}(\mathbf{r}_{m,n})\right) \otimes \mathbf{s}_n, \notag \\
    \mathbf{C}_{m} &= \sum_{n \in \mathcal{U}_m} a_n \mathbf{R}_{m,n} \otimes (\mathbf{s}_n \mathbf{s}_n^\text{H}) + \sum_{n \in \mathcal{U}_m^\text{c}} a_n g_{m,n} \mathbf{I}_K \otimes (\mathbf{s}_n \mathbf{s}_n^\text{H}) \notag \\
    &\quad\quad + \varsigma^2_m\mathbf{I}_{LK}.
\end{align}

\noindent  Then, the joint PDF of the received signals \( \{\mathbf{y}_m\}_{m=1}^M \)  is given by
\begin{align}
\label{joint PDF}
&p( \{\mathbf{y}_m\}_{m=1}^M ; \mathbf{a})\notag\\
 = &\prod_{m=1}^M \frac{1}{\left|\pi \mathbf{C}_{m}\right |} \exp\left(-(\mathbf{y}_m - \bar{\mathbf{y}}_{m})^\text{H} \mathbf{C}_{m}^{-1} (\mathbf{y}_m - \bar{\mathbf{y}}_{m})\right),
\end{align}
\noindent where $\mathbf{a} \triangleq [a_1, a_2, \ldots, a_N]^\text{T}$ denotes the activity status vector. 

Our goal is to minimize the negative log-likelihood function \( -\log p( \{\mathbf{y}_m\}_{m=1}^M ; \mathbf{a}) \), which is formulated as
\begin{align}
 \label{eq:optimization_problem}
\underset{\mathbf{a}\in [0, 1]^{N}}{\text{min}}\sum_{m=1}^M \left\{\log|\mathbf{C}_{m}| \!+ \!(\mathbf{y}_m \!-\!\bar{\mathbf{y}}_{m})^\text{H}\mathbf{C}_{m}^{-1} (\mathbf{y}_m \!- \!\bar{\mathbf{y}}_{m})\right\}.
\end{align}
Once problem (\ref{eq:optimization_problem}) is solved, the device activity is detected as \(\hat{a}_{n} = \mathbb{I}(a_{n} \geq \gamma)\), where $\mathbb{I}(\cdot)$ is an indicator function, and \(\gamma\) serves as a threshold within the range \([0, 1]\) to balance the probability of missed detection (PM) and the probability of false alarm (PF)~\cite{ganesan2021clustering}.

\textcolor{black}{\emph{Remark 1 (Extension to Multi-Antenna Users):} When the devices are equipped with multiple antennas, the signature sequence of each device can be transmitted with the help of beamforming. Specifically, let $N_\text{t} \ge 1$ denote the number of antennas at each device. Let $\mathbf{H}_{m,n} \in \mathbb{C}^{K \times N_\text{t}}$ denote the channel matrix from device $n$ to AP $m$, where each column of $\mathbf{H}_{m,n}$ independently follows the same distribution as the single-antenna channel $\mathbf{h}_{m,n}$. Let $\mathbf{w}_n \in \mathbb{C}^{N_\text{t}}$ with $\|\mathbf{w}_n\|_2 = 1$ denote the given beamforming vector at device $n$. Consequently, the effective channel from device $n$ to AP $m$ can be written as $\mathbf{h}_{m,n}^{\text{eff}} = \mathbf{H}_{m,n} \mathbf{w}_n$. Since $\mathbf{h}_{m,n}^{\text{eff}}$ is a linear combination of independent complex Gaussian columns of $\mathbf{H}_{m,n}$, it also follows a complex Gaussian distribution. Specifically, for far-field devices, since each column of $\mathbf{H}_{m,n}$ follows $\mathcal{CN}(\mathbf{0}, g_{m,n}\mathbf{I}_K)$, we have
\begin{align*}
\mathbf{h}_{m,n}^{\text{eff}} \sim \mathcal{CN}(\mathbf{0},\, g_{m,n}\mathbf{I}_K),~\forall~n \in \mathcal{U}^{\text{c}}_m,
\end{align*}
which has the same distribution as the single-antenna channel. For near-field devices, since each column of $\mathbf{H}_{m,n}$ follows $\mathcal{CN}(\beta_{m,n} \mathbf{b}(\mathbf{r}_{m,n}),\, \mathbf{R}_{m,n})$, we have
\begin{align*}
\mathbf{h}_{m,n}^{\text{eff}} \sim \mathcal{CN}\!\left((\mathbf{1}_{N_\text{t}}^\text{T} \mathbf{w}_n)\, \beta_{m,n} \mathbf{b}(\mathbf{r}_{m,n}),\, \mathbf{R}_{m,n}\right),~\forall~n\in\mathcal{U}_m,
\end{align*}
where the covariance matrix remains the same as the single-antenna case, and the mean differs only by a known scalar factor $\mathbf{1}_{N_\text{t}}^\text{T} \mathbf{w}_n$. Consequently, with the effective channel $\mathbf{h}_{m,n}^{\text{eff}}$, the mean $\bar{\mathbf{y}}_{m}$ in~(8) becomes
\begin{align*}
\bar{\mathbf{y}}_{m} =  \sum_{n \in \mathcal{U}_m} a_n \left(\mathbf{1}_{N_\text{t}}^\text{T} \mathbf{w}_n\right) \beta_{m,n} \mathbf{b}(\mathbf{r}_{m,n}) \otimes \mathbf{s}_n,
\end{align*}
while the covariance matrix $\mathbf{C}_{m}$ in~(\ref{expectation and covariance}) remains unchanged. Therefore, the detection problem can still be formulated as problem~(\ref{eq:optimization_problem}), and hence can be solved by the subsequent proposed algorithms.}

\textcolor{black}{\emph{Remark 2 (LoS or Correlated Far-Field Channel):} The Rayleigh fading model (\ref{far-field chanel}) is adopted here because far-field devices are typically at a greater distance from the AP, where the LoS path is more likely to be obstructed. When the far-field channels also contain a LoS component, e.g., $\mathbf{h}_{m,n} \sim \mathcal{CN}(\bar{\mathbf{h}}_{m,n},\, g_{m,n}\mathbf{I}_K)$, the mean $\bar{\mathbf{y}}_m$ in~(\ref{expectation and covariance}) would include an additional term $\sum_{n \in \mathcal{U}_m^{\text{c}}} a_n \bar{\mathbf{h}}_{m,n} \otimes \mathbf{s}_n$, while the covariance matrix $\mathbf{C}_m$ retains the same, and thus the proposed formulation still remains applicable. Futhermore, when the far-field channels are spatially correlated, i.e., $\mathbf{h}_{m,n} \sim \mathcal{CN}(\mathbf{0}, \mathbf{R}_{m,n}^{\text{far}})$ with a general covariance matrix $\mathbf{R}_{m,n}^{\text{far}}$, the proposed formulation and algorithm remain applicable by replacing $g_{m,n}\mathbf{I}_K$ with $\mathbf{R}_{m,n}^{\text{far}}$ in the expression of $\mathbf{C}_m$ in~(\ref{expectation and covariance}). The extension is also valid even when the far-field channel contains both LoS component and is spatially correlated.  }

\section{Impact of Hybrid Near-Far Field Channels on Detection Performance}
\label{sec:detection_performance}
Before solving problem (\ref{eq:optimization_problem}), we first analyze how the hybrid near-far field channels affect the detection performance  based on the established joint PDF (\ref{joint PDF}). For notational simplicity, we  introduce the following unified symbol. Let 
\begin{align}
\mathbf{\Xi}_{m,n} &=
\begin{cases} 
\mathbf{R}_{m,n}, & \text{if } n \in \mathcal{U}_m, \\
g_{m,n}\mathbf{I}_K, & \text{if } n \in \mathcal{U}_m^{\text{c}}.
\end{cases}
\end{align}
Then, we can define a matrix $\mathbf{X}_{m,n}=\mathbf{\Xi}_{m,n}^{\frac{1}{2}} \otimes \mathbf{s}_n$ with
\begin{align}
\label{defineX_mn}
 \mathbf{X}_{m,n} \mathbf{X}_{m,n}^\text{H} = \mathbf{\Xi}_{m,n} \otimes \left( \mathbf{s}_n \mathbf{s}_n^\text{H} \right).
\end{align}

To achieve good detection performance, the true activity indicator vector $\mathbf{a}^{\circ}$ should be uniquely identifiable. Specifically, there should not exist another vector $\tilde{\mathbf{a}} \neq \mathbf{a}^{\circ}$ that yields  $p(\{\mathbf{y}_m\}_{m=1}^M ; \tilde{\mathbf{a}}) = p(\{\mathbf{y}_m\}_{m=1}^M ; \mathbf{a}^{\circ}).$ Since the joint PDF (\ref{joint PDF}) is multivariate Gaussian, this equivalence requires identical means and covariance matrices.  According to the second equation of (\ref{expectation and covariance}), when the covariance matrices are identical, we have 
\begin{equation}
	\label{eq:identifiability}
    \sum_{n=1}^N \!(\tilde{a}_n\!-\!a_n^{\circ}) \mathbf{X}_{m,n}\mathbf{X}_{m,n}^\text{H}\!=\!\mathbf{O},
\end{equation}
where $\tilde{a}_n$ and $a_n^{\circ}$ denote the $n$-th entry of $\tilde{\mathbf{a}}$ and $\mathbf{a}^{\circ}$, respectively, and  $\mathbf{O}$ denotes the zero matrix. 

Letting $\boldsymbol{\xi} =[\xi_1, \xi_2, \dots, \xi_N]^\text{T}=\tilde{\mathbf{a}} - \mathbf{a}^{\circ}$, $\boldsymbol{\psi}_{m,n}= \text{vec}(\mathbf{X}_{m,n}\mathbf{X}_{m,n}^\text{H}\!)$, and $\boldsymbol{\Psi}_m =[\boldsymbol{\psi}_{m,1}, \boldsymbol{\psi}_{m,2},\dots, \boldsymbol{\psi}_{m,N}]$, (\ref{eq:identifiability}) becomes
\begin{equation}
	\label{eq:identifiability_matrix}
 \boldsymbol{\Psi}_m \boldsymbol{\xi}  = \mathbf{0}.
\end{equation}
 Let
\begin{align}
    \mathcal{V} &= \{\boldsymbol{\xi} \in \mathbb{R}^N | \boldsymbol{\Psi}_m\boldsymbol{\xi} = \mathbf{0}, \boldsymbol{\xi} \neq \mathbf{0}, \forall~m =1,2,...,M\}, \label{identify 1}\\
    \mathcal{D} &= \{\boldsymbol{\xi} \in \mathbb{R}^N | \xi_n \geq 0 \text{ if } a_n^{\circ} = 0, \xi_n \leq 0 \text{ if } a_n^{\circ} = 1\}\label{feasible set},
\end{align}
where  (\ref{identify 1}) defines the solution set containing all $\tilde{\mathbf{a}}\neq \mathbf{a}^{\circ}$ that yields (\ref{eq:identifiability_matrix}), while (\ref{feasible set}) defines the feasible set of problem (\ref{eq:optimization_problem}). Consequently, the identifiability of $\mathbf{a}^{\circ}$  requires  \(\mathcal{V}\cap \mathcal{D} = \emptyset \). From (\ref{identify 1}), we can observe that  if the columns of  $\boldsymbol{\Psi}_m$ are closer to orthogonality, then the condition $\mathcal{V} \cap \mathcal{D} = \emptyset$ is more likely to hold, which ensures the identifiability of $\mathbf{a}^{\circ}$~\cite{jianwu2008}. Then, we further analyze the orthogonality among columns of $\boldsymbol{\Psi}_m$. Specifically, we define the cosine similarity between any two columns $\boldsymbol{\psi}_{m,n}$ and $\boldsymbol{\psi}_{m,n'}$ of $\boldsymbol{\Psi}_m$ as
\begin{align}
	\label{pho similarity}
\rho_{n,n'} = \frac{\boldsymbol{\psi}_{m,n}^{\text{H}} \boldsymbol{\psi}_{m,n'}}{\|\boldsymbol{\psi}_{m,n}\|_2 \|\boldsymbol{\psi}_{m,n'}\|_2}.
\end{align}

The following proposition demonstrates that the hybrid near-far field channels enhance the orthogonality among columns of $\boldsymbol{\Psi}_m$, thereby improving the detection performance.

\begin{proposition}\label{prop:mutual_similarity}
Consider any two devices $n$ and $n'$ in the system. The cosine similarity between the corresponding columns $\boldsymbol{\psi}_{m,n}$ and $\boldsymbol{\psi}_{m,n'}$ of $\boldsymbol{\Psi}_m$ satisfies the following ordering based on the channel types:
\begin{align}
	\rho_{\text{NF-NF}} \leq \rho_{\text{FF-FF}},\,\, \rho_{\text{NF-FF}} < \rho_{\text{FF-FF}},
\end{align}
where $\rho_{\text{NF-FF}}$, $\rho_{\text{NF-NF}}$, and $\rho_{\text{FF-FF}}$ denote the cosine similarities for near-field to far-field, near-field to near-field, and far-field to far-field device pairs, respectively. 
\end{proposition}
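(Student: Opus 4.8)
The plan is to reduce the cosine similarity $\rho_{n,n'}$ to the product of a factor depending only on the signature sequences $\mathbf{s}_n,\mathbf{s}_{n'}$ (hence identical in all three channel-type scenarios) and a ``matrix factor'' that is exactly the Frobenius-inner-product cosine similarity between the two covariance matrices $\mathbf{\Xi}_{m,n}$ and $\mathbf{\Xi}_{m,n'}$, and then to bound this matrix factor in each of the three cases. With this reduction in hand, the three inequalities become elementary bounds on traces of products of Hermitian positive semidefinite matrices.

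\emph{Step~1 (reduction).} Since $\boldsymbol{\psi}_{m,n} = \text{vec}(\mathbf{X}_{m,n}\mathbf{X}_{m,n}^{\text{H}})$, I would combine the identity $\text{vec}(\mathbf{A})^{\text{H}}\text{vec}(\mathbf{B}) = \text{tr}(\mathbf{A}^{\text{H}}\mathbf{B})$ with (\ref{defineX_mn}) to get $\boldsymbol{\psi}_{m,n}^{\text{H}}\boldsymbol{\psi}_{m,n'} = \text{tr}\big((\mathbf{\Xi}_{m,n}\otimes\mathbf{s}_n\mathbf{s}_n^{\text{H}})(\mathbf{\Xi}_{m,n'}\otimes\mathbf{s}_{n'}\mathbf{s}_{n'}^{\text{H}})\big)$, and then use $(\mathbf{A}\otimes\mathbf{B})(\mathbf{C}\otimes\mathbf{D}) = (\mathbf{A}\mathbf{C})\otimes(\mathbf{B}\mathbf{D})$ and $\text{tr}(\mathbf{A}\otimes\mathbf{B}) = \text{tr}(\mathbf{A})\text{tr}(\mathbf{B})$ to obtain $\boldsymbol{\psi}_{m,n}^{\text{H}}\boldsymbol{\psi}_{m,n'} = |\mathbf{s}_n^{\text{H}}\mathbf{s}_{n'}|^2\,\text{tr}(\mathbf{\Xi}_{m,n}\mathbf{\Xi}_{m,n'})$; in particular $\|\boldsymbol{\psi}_{m,n}\|_2^2 = \|\mathbf{s}_n\|_2^4\,\text{tr}(\mathbf{\Xi}_{m,n}^2)$ because $\mathbf{\Xi}_{m,n}$ is Hermitian. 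Dividing yields
\begin{equation*}
\rho_{n,n'} = \frac{|\mathbf{s}_n^{\text{H}}\mathbf{s}_{n'}|^2}{\|\mathbf{s}_n\|_2^2\,\|\mathbf{s}_{n'}\|_2^2}\cdot\underbrace{\frac{\text{tr}(\mathbf{\Xi}_{m,n}\mathbf{\Xi}_{m,n'})}{\sqrt{\text{tr}(\mathbf{\Xi}_{m,n}^2)\,\text{tr}(\mathbf{\Xi}_{m,n'}^2)}}}_{\triangleq\,\mathcal{T}}.
\end{equation*}
The first factor is the same regardless of whether the devices are near- or far-field, so the ordering of $\rho_{n,n'}$ is governed entirely by $\mathcal{T}$. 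Moreover, since $\mathbf{\Xi}_{m,n},\mathbf{\Xi}_{m,n'}\succeq\mathbf{O}$ we have $\text{tr}(\mathbf{\Xi}_{m,n}\mathbf{\Xi}_{m,n'})\ge 0$, and Cauchy--Schwarz in the Frobenius inner product gives $\mathcal{T}\le 1$; hence $\mathcal{T}\in[0,1]$ in all cases.

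\emph{Step~2 (case analysis).} If both devices are far-field, $\mathbf{\Xi}_{m,n} = g_{m,n}\mathbf{I}_K$ and $\mathbf{\Xi}_{m,n'} = g_{m,n'}\mathbf{I}_K$; the large-scale coefficients cancel and $\mathcal{T} = \text{tr}(\mathbf{I}_K)/\sqrt{\text{tr}(\mathbf{I}_K)\text{tr}(\mathbf{I}_K)} = 1$, so $\rho_{\text{FF-FF}}$ attains the common upper bound. If both are near-field, $\mathbf{\Xi}_{m,n} = \mathbf{R}_{m,n}$ and $\mathbf{\Xi}_{m,n'} = \mathbf{R}_{m,n'}$, so $\mathcal{T}\le 1$ is just the bound of Step~1 and therefore $\rho_{\text{NF-NF}}\le\rho_{\text{FF-FF}}$. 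If $n$ is near-field and $n'$ is far-field, $\mathbf{\Xi}_{m,n} = \mathbf{R}_{m,n}$, $\mathbf{\Xi}_{m,n'} = g_{m,n'}\mathbf{I}_K$; the $g_{m,n'}$ cancels and $\mathcal{T} = \text{tr}(\mathbf{R}_{m,n})/\sqrt{K\,\text{tr}(\mathbf{R}_{m,n}^2)}$, which by Cauchy--Schwarz applied to $\text{tr}(\mathbf{R}_{m,n}) = \langle\mathbf{R}_{m,n},\mathbf{I}_K\rangle_{F}$ satisfies $\mathcal{T}\le 1$, with equality if and only if $\mathbf{R}_{m,n}$ is a positive multiple of $\mathbf{I}_K$. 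But by (\ref{near-field channel covariance}), $\mathbf{R}_{m,n}$ is a sum of $L_m$ rank-one matrices, so $\mathrm{rank}(\mathbf{R}_{m,n})\le L_m<K$ and $\mathbf{R}_{m,n}$ is singular; the equality case is thus excluded, giving $\mathcal{T}<1$ and $\rho_{\text{NF-FF}}<\rho_{\text{FF-FF}}$. Combining the three cases proves the stated ordering.

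\emph{Main obstacle.} The only step that is not routine is the strictness of $\rho_{\text{NF-FF}}<\rho_{\text{FF-FF}}$, i.e.\ ruling out the isotropic degeneracy $\mathbf{R}_{m,n}\propto\mathbf{I}_K$. The rank bound $L_m<K$ settles it in the mMTC regime of interest; more generally one can argue that the near-field array responses $\mathbf{b}(\tilde{\mathbf{r}}_{m,\ell})$ at physically distinct scatterer locations cannot combine into a scalar multiple of the identity. The remainder is the vec/Kronecker/trace bookkeeping of Step~1 and two applications of Cauchy--Schwarz.
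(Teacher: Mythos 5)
Your proposal is correct and follows essentially the same route as the paper: factor $\rho_{n,n'}$ into the signature-sequence term times the Frobenius cosine similarity of $\mathbf{\Xi}_{m,n}$ and $\mathbf{\Xi}_{m,n'}$, then observe that this matrix factor equals $1$ for FF--FF, is at most $1$ for NF--NF by Cauchy--Schwarz, and is strictly below $1$ for NF--FF. The only divergence is the strictness justification: you exclude $\mathbf{R}_{m,n}\propto\mathbf{I}_K$ via the rank bound $L_m<K$, which the paper never assumes (and which fails in some of its simulated settings with $L_m=K=8$), whereas the paper simply invokes the non-diagonality of $\mathbf{R}_{m,n}$ --- your hedged fallback to the structure of the array responses covers this equally well.
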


\begin{proof}
	Please see Appendix~\ref{app1}.
\end{proof}
Based on Proposition~\ref{prop:mutual_similarity}, we summarize the impact of hybrid near-far field channels in Fig.~\ref{fig:impact_of_hybrid_near_far_field_channels}.

\begin{figure}[t!]
    \centering
    \includegraphics[width=1\linewidth]{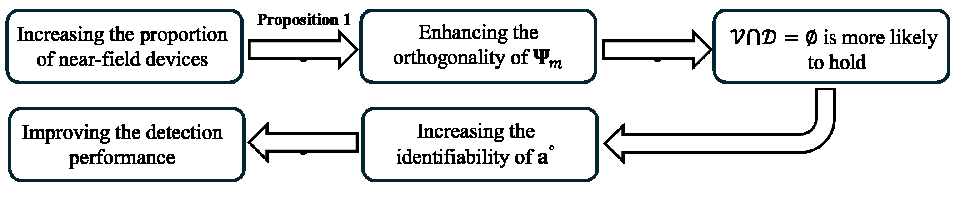}
    \caption{\textcolor{black}{Illustration of how increasing the proportion of near-field devices improves the detection performance through the enhanced orthogonality.}}
    \label{fig:impact_of_hybrid_near_far_field_channels}
\end{figure}

\section{A Unified Distributed  Activity Detection Algorithm}
\label{sec:distributed}
In this section, we develop a unified distributed algorithm to tackle problem (\ref{eq:optimization_problem}), which includes the activity detection for the single-cell near-field or far-field, and the cell-free near-field or far-field communications as special cases. 

\subsection{Distributed Algorithm Framework}
Unlike centralized method where the CPU handles all computations, the proposed distributed algorithm operates at both the APs and the CPU. Each AP conducts their local detection and then transmits the detection result to the CPU. Then, the CPU  aggregates these results and distributes the combined information back to all APs to prepare for the subsequent iteration.

To enable distributed processing, we first reformulate problem (\ref{eq:optimization_problem}) as an equivalent consensus form:
\begin{subequations}
\label{distributed formulation}
\begin{align}
&\min_{\left \{{\boldsymbol{\theta}_{m}}\right \}_{m=1}^{M}, \mathbf {a}\in \left [{0,1}\right]^{N}}\, 
\sum _{m=1}^{M}f_{m}\left ({\boldsymbol{\theta}_{m}}\right)
\\&\qquad\quad ~\,\text {s.t.}~\qquad \boldsymbol{\theta}_{m}=\mathbf {a},\quad \forall~m=1,2,\ldots,M,
\end{align}    
\end{subequations}
where  $f_{m}\left ({\boldsymbol{\theta}_{m}}\right)= \log|\tilde{\mathbf{C}}_{m}| + (\mathbf{y}_m - \tilde{\bar{\mathbf{y}}}_{m})^\text{H} \tilde{\mathbf{C}}_{m}^{-1} (\mathbf{y}_m - \tilde{\bar{\mathbf{y}}}_{m})$, and $\tilde{\mathbf{C}}_{m}$ and $ \tilde{\bar{\mathbf{y}}}_{m}$ follow the same structure as $\mathbf{C}_{m}$ and $\bar{\mathbf{y}}_{m}$ but with $\boldsymbol{\theta}_{m}$ replacing \(\mathbf{a}\) in (\ref{expectation and covariance}). Note that $f_m(\cdot)$ only depends on local parameters at the $m$-th AP, making it suitable for distributed optimization.

To solve problem (\ref{distributed formulation}) in a distributed manner, we construct its augmented Lagrangian function:
\begin{align}
\label{lagrangian}
&\mathcal{L} \left({\left \{{\boldsymbol{\theta}_{m}}\right \}_{m=1}^{M}, \mathbf {a}; \left \{{\boldsymbol {\lambda }_{m}}\right \}_{m=1}^{M} }\right) \nonumber \\
=&\sum \limits _{m=1}^{M} \left\{f_{m}\left ({\boldsymbol{\theta}_{m}}\right)+\boldsymbol {\lambda }_{m}^{\mathrm {T}}\left ({\boldsymbol{\theta}_{m}-\mathbf {a}}\right) +\frac {\mu }{2}\left \Vert{ \boldsymbol{\theta}_{m}-\mathbf {a}}\right \Vert_{2}^{2}\right\},
\end{align}
where $\boldsymbol{\lambda}_m \in \mathbb{R}^{N}$ is the dual variable associated with the consensus constraint $\boldsymbol{\theta}_{m} = \mathbf{a}$, and $\mu > 0$ is a penalty parameter. We adopt the alternating direction method of multipliers framework by alternately updating $\mathbf{a}$, $\{\boldsymbol{\theta}_{m}\}_{m=1}^M$, and $\{\boldsymbol{\lambda}_m\}_{m=1}^M$.

\subsubsection{Subproblem with respect to $\left \{{\boldsymbol{\theta}_{m}}\right \}_{m=1}^{M}$} At the $i$-th iteration, we decompose the problem with respect to $\left \{{\boldsymbol{\theta}_{m}}\right \}_{m=1}^{M}$ into $M$ parallel \mbox{subproblems}, with each  processed by the corresponding AP and given by
\begin{align}
\label{sub b}
\min_{\boldsymbol{\theta}_{m}\in [0,1]^N}f_{m}\left (\!{\boldsymbol{\theta}_{m}}\!\right)\!+\!\!\left ({\boldsymbol {\lambda }_{m}^{(i-1)}}\right)^{\mathrm {T}} \left ({\boldsymbol{\theta}_{m}\!\!-\!\!\mathbf {a}^{(i-1)}}\right)\! +\!\frac {\mu }{2}\left \Vert{ \boldsymbol{\theta}_{m}\!\!-\!\!\mathbf {a}^{(i-1)}}\right \Vert_{2}^{2}. 
\end{align}
This is a single-cell  activity detection
problem with additional
linear and quadratic terms. However, due to  channel correlation, the rank-one update in the existing covariance-based approach \cite{ganesan2021clustering,li2022asynchronous,wang2024scalinglaw} is not applicable anymore. To address this challenge, we propose a CD algorithm to solve problem (\ref{sub b}), which will be discussed in detail in Section \ref{CD Algorithm}.

\subsubsection{Updating the dual variables $\{\boldsymbol{\lambda}_{m}\}_{m=1}^{M}$} After updating  $\{\boldsymbol{\theta}_{m}\}_{m=1}^M$, $\{\boldsymbol{\lambda}_{m}\}_{m=1}^{M}$ are updated by a dual ascent step:
\begin{align}
\label{dual ascent step}
\boldsymbol{\lambda}_m^{(i)} \!= \!\boldsymbol{\lambda}_m^{(i-1)} + \mu \left( \boldsymbol{\theta}_{m}^{(i)} - \mathbf{a}^{(i-1)} \right), \forall~m = 1,2, \dots, M. 
\end{align}

\subsubsection{Subproblem with respect to $\mathbf{a}$}  We decompose the problem with respect to $\mathbf{a}$ into $N$ parallel subproblems, with each written as
\begin{align}
\label{sub a}
\min \limits _{a_n \in \left [{0,1}\right]} \sum _{m=1}^{M} \left\{{\lambda_{m,n}^{(i)}}\left ({\theta_{m,n}^{(i)}\!-\!a_n}\right)+\frac {\mu }{2}\left ({\theta_{m,n}^{(i)}-a_n}\right)^{2}\right\},
\end{align}
where  $\theta_{m,n}^{(i)}$ denotes the $n$-th entry of $\boldsymbol{\theta}_{m}^{(i)}$. Problem (\ref{sub a}) is a one-dimensional convex quadratic problem, and hence its optimal solution can be derived in a closed form:
\begin{align}
a_{n}^{(i)} = \Pi_{[0,1]} \left( \delta_{n}^{(i)} \right), 
\quad \forall~n = 1,2, \dots, N,
\label{eq:close form}
\end{align}
where $\delta_{n}^{(i)} = \sum_{m=1}^M \left( \mu \theta_{m,n}^{(i)} + \lambda_{m,n}^{(i)}\right)/(M\mu)$, and $\Pi_{[0,1]} (\cdot)$ is the projection operation onto $[0,1]$.

Through iterative updates of both primal and dual variables,  the distributed optimization framework is outlined in Algorithm~\ref{distributed algorithm}.

\begin{algorithm}[t]
	\caption{Proposed Distributed Algorithm for Solving Problem (\ref{distributed formulation})}
	\label{distributed algorithm}
	\begin{algorithmic}[1]
	\STATE \textbf{Initialize:}  $ \boldsymbol{\theta}_{m}^{(0)}$, $\boldsymbol{\lambda}_{m}^{(0)}$,  $\forall~m = 1, 2, \dots, M$, and $\mathbf{a}^{(0)}$ using~(\ref{eq:close form});
	\STATE \textbf{repeat} ($i = 1, 2,\dots$)
	\STATE \quad The CPU broadcasts $\mathbf{a}^{(i-1)}$ to each AP $m$, $\forall~m = 1, 2,\dots, M$;
	\STATE  \quad Each AP $m$ updates $\boldsymbol{\theta}_{m}^{(i)}$ by solving~(\ref{sub b}), $\forall~m = 1,2,\dots, M$;
	\STATE \quad Each AP $m$ updates $\boldsymbol{\lambda}_m^{(i)}$ by a dual ascent step~(\ref{dual ascent step}), $\forall~m = 1,2, \dots, M$;
	\STATE \quad Each AP $m$ sends $\mu \boldsymbol{\theta}_{m}^{(i)} + \boldsymbol{\lambda}_m^{(i)}$ to the CPU, $\forall~m = 1,2, \dots, M$;
	\STATE \quad The CPU updates $\mathbf{a}^{(i)}$ with (\ref{eq:close form});
	\STATE \textbf{until} convergence
	\end{algorithmic}
	\end{algorithm}

\subsection{Proposed CD Algorithm for Solving Subproblem (\ref{sub b})}
\label{CD Algorithm}
Due to the inapplicability of the conventional rank-1 update, 
we propose a novel CD algorithm to solve subproblem~(\ref{sub b}). Specifically, we randomly permute the indices of all coordinates $\{1,2,...,N\}$, and update each coordinate $\theta_{m,n}$ sequentially according to this permutation. The update of $\theta_{m,n}$ can be expressed as $\theta_{m,n}+d$ with $\{\theta_{m,n'}\}_{n'=1,n'\neq n}^N$ fixed, where $d$ can be determined by solving a one-dimensional subproblem. For notational simplicity, we define
\begin{align}
\boldsymbol{\varpi}_{m,n} &= 
\begin{cases} 
\beta_{m,n} \mathbf{b}(\mathbf{r}_{m,n}), & \text{if } n \in \mathcal{U}_m, \\
\mathbf{0}, & \text{if } n \in \mathcal{U}_m^{\text{c}}.
\end{cases} 
\end{align}
Let $\mathbf{e}_n \in \mathbb{R}^N$ denote the standard basis vector whose $n$-th entry is 1 while all other entries are 0. Consequently, treating $d$ as the optimization variable,  substituting the expression of $f_m(\boldsymbol{\theta}_m)$ into (\ref{sub b}), and using the definition of $\mathbf{X}_{m,n}$ in (\ref{defineX_mn}),  (\ref{sub b}) can be written as
\begin{align}
\label{origin_update_distributed}
  & \underset{d \in [-\theta_{m,n},\,1-\theta_{m,n}]}{\operatorname{min}} \log \left| \tilde{\mathbf{C}}_{m} + d \mathbf{X}_{m,n} \mathbf{X}_{m,n}^{\text{H}} \right|  \notag\\
    &\quad\quad+ \left( \mathbf{y}_m - \tilde{\bar{\mathbf{y}}}_{m} - d\boldsymbol{\varpi}_{m,n} \otimes \mathbf{s}_n \right)^{\text{H}} \left( \tilde{\mathbf{C}}_{m} + d \mathbf{X}_{m,n} \mathbf{X}_{m,n}^{\text{H}} \right)^{-1}\notag\\
    &\quad\quad\times  \left( \mathbf{y}_m - \tilde{\bar{\mathbf{y}}}_{m} - d\boldsymbol{\varpi}_{m,n} \otimes \mathbf{s}_n \right) \notag\\
    &\quad\quad+ (\boldsymbol{\lambda}_m^{(i-1)})^{\text{T}}(\boldsymbol{\theta}_{m} \!\!+\! d\mathbf{e}_n \!\!-\! \mathbf{a}^{(i-1)})
   \! +\! \frac{\mu}{2}\|\boldsymbol{\theta}_{m}\! + \!d\mathbf{e}_n \!\!- \!\mathbf{a}^{(i-1)}\|_2^2. \notag\\
\end{align}

\textcolor{black}{Compared with the traditional far-field only scenario where $\tilde{\mathbf{C}}_{m}$ reduces to $K$ identical $L \times L$ blocks with each block $\mathbf{Q}_m=\sum_{n=1}^{N} \theta_{m,n} g_{m,n} \mathbf{s}_n \mathbf{s}_n^\text{H}+\varsigma^2_m \mathbf{I}_{L}$~\cite{chen2021phase,lin2024communication,ganesan2021clustering}, the dimension of the covariance matrix $\tilde{\mathbf{C}}_{m}$ in problem (\ref{origin_update_distributed}) is $LK \times LK$. Furthermore, the block diagonal structure of $\tilde{\mathbf{C}}_{m}$ in far-field scenario  allows the rank-1 update $(\mathbf{Q}_{m} + d \cdot g_{m,n} \mathbf{s}_n \mathbf{s}_n^\text{H})^{-1} = \mathbf{Q}_{m}^{-1} - d \cdot g_{m,n} \frac{\mathbf{Q}_{m}^{-1} \mathbf{s}_n \mathbf{s}_n^\text{H} \mathbf{Q}_{m}^{-1}}{1 + d \cdot g_{m,n} \mathbf{s}_n^\text{H} \mathbf{Q}_{m}^{-1} \mathbf{s}_n}$. However, when near-field devices exist, the channel correlation
destroys the block-diagonal structure of $\tilde{\mathbf{C}}_{m}$ and makes the
perturbation no longer decomposable into independent rank-one
updates per block. Hence, updating $\tilde{\mathbf{C}}_{m}$  in the hybrid-field case is much more complicated than the traditional far-field setting.} To address this challenge, we rewrite the log-determinant term as
\begin{align}\label{eq:term1}
	& \log\left| \tilde{\mathbf{C}}_{m} + d \, \mathbf{X}_{m,n} \mathbf{X}_{m,n} ^\text{H}\right| \nonumber \\
	= &\log \left| \mathbf{I}_{LK} + d\,\mathbf{X}_{m,n} \mathbf{X}_{m,n} ^\text{H} \tilde{\mathbf{C}}_{m}^{-1} \right| + \log \left| \tilde{\mathbf{C}}_{m}\right| \nonumber \\
	= &\log \left| \mathbf{I}_{J_{m,n}} + d\, \mathbf{X}_{m,n} ^\text{H} \tilde{\mathbf{C}}_{m}^{-1} \mathbf{X}_{m,n} \right| + \log \left| \tilde{\mathbf{C}}_{m}\right|\nonumber\\
    \overset{(a)}{\approx} &d\operatorname{tr} \left( \mathbf{X}_{m,n} ^\text{H} \tilde{\mathbf{C}}_{m}^{-1} \mathbf{X}_{m,n} \right) + \log \left| \tilde{\mathbf{C}}_{m}\right|,
\end{align}
where $J_{m,n}= \operatorname{rank}(\mathbf{\Xi}_{m,n})$. The first two equalities in (\ref{eq:term1}) are derived from the property of the determinant to reduce the matrix dimension, and $(a)$ is derived from the first-order Taylor expansion to avoid the computational complexity in calculating the determinant. On the other hand, for the matrix inverse term, based on the Sherman-Morrison-Woodbury formula, we have
\begin{align}\label{eq:term2}
	&\left(\tilde{\mathbf{C}}_{m} + d \mathbf{X}_{m,n} \mathbf{X}_{m,n}^{\text{H}} \right)^{-1} \notag\\
    =&\,\,\,\tilde{\mathbf{C}}_{m}^{-1}- d\, \tilde{\mathbf{C}}_{m}^{-1} \mathbf{X}_{m,n}\notag\\
    &\quad\quad\,\,\,\times\left( \mathbf{I}_{J_{m,n}} + d\, \mathbf{X}_{m,n}^{\text{H}} \tilde{\mathbf{C}}_{m}^{-1} \mathbf{X}_{m,n} \right)^{-1} \mathbf{X}_{m,n}^{\text{H}} \tilde{\mathbf{C}}_{m}^{-1}\notag\\
	\overset{(b)}{\approx}&\,\,\,\tilde{\mathbf{C}}_{m}^{-1}- d\, \tilde{\mathbf{C}}_{m}^{-1} \mathbf{X}_{m,n} \left( \mathbf{I}_{J_{m,n}} - d\, \mathbf{X}_{m,n}^{\text{H}} \tilde{\mathbf{C}}_{m}^{-1} \mathbf{X}_{m,n} \right) \notag\\
    &\quad\quad\,\,\,\times\mathbf{X}_{m,n}^{\text{H}} \tilde{\mathbf{C}}_{m}^{-1},
		\end{align}
where $(b)$ also applies the first-order Taylor expansion to avoid the computational complexity in repeatedly calculating the matrix inverse. \textcolor{black}{Substituting the approximations~(\ref{eq:term1}) and~(\ref{eq:term2}) into subproblem~(\ref{origin_update_distributed}) and adding a penalty term $(\omega/2)d^2$ with a properly chosen $\omega$, (\ref{origin_update_distributed}) is transformed into:}
\begin{equation}\label{eq:one-dim-approx}
	\underset{d \in [-\theta_{m,n},\,1-\theta_{m,n}]}{\operatorname{min}} \quad p_{\mathrm{approx}}(d) + \frac{\omega}{2} d^2,
\end{equation}
where $p_{\mathrm{approx}}(d)$ is given by (\ref{eq:distributed inexact}) in Appendix~\ref{appB} with $\lambda_{m,n}^{(i-1)}$ denoting the $n$-th entry of $\boldsymbol{\lambda}_m^{(i-1)}$. \textcolor{black}{As will be shown in Proposition~\ref{proposition1} in Section~\ref{convergence analysis}, the penalty term $(\omega/2)d^2$ compensates for the approximation error in~(\ref{eq:term1}) and~(\ref{eq:term2}), ensuring that each coordinate update sufficiently decreases the original objective function of problem~(\ref{sub b}), and thereby guaranteeing convergence to a stationary point.} Setting the gradient of  (\ref{eq:one-dim-approx}) to zero,  we could obtain three roots. From the roots and the two boundary points
$-\theta_{m,n}$ and $1 -\theta_{m,n}$, the  optimal solution can be obtained by selecting the one corresponding to the smallest objective value. After sequentially updating $\theta_{m,n}$ for all $n$, we obtain $\boldsymbol{\theta}^{(i)}_m$ at each AP $m$, and the above  procedure is summarized as Algorithm~\ref{alg:cd}.

\begin{algorithm}[t]
	\caption{Proposed CD Algorithm for Solving Subproblem~(\ref{sub b})}% 
	\label{alg:cd}
	\begin{algorithmic}[1]
		\STATE \textbf{Input:} $\mathbf{a}^{(i-1)}$, $\boldsymbol{\lambda}_m^{(i-1)}$, $\mathbf{y}_m$;
		\STATE \textbf{Initialize:} $\boldsymbol{\theta}_m \leftarrow  \mathbf{a}^{(i-1)}$;
		\STATE \textbf{repeat} 
		\STATE  \quad Randomly select a permutation $\{ q_1, q_2, \ldots, q_{N} \}$ of the coordinate indices $\{1, 2, \ldots, N\}$;
		\STATE \quad \textbf{for} {$n = q_1,$ $q_2, \ldots, q_{N}$} \textbf{do}
		\STATE  \quad\quad Solve problem (\ref{eq:one-dim-approx}) using the cubic formula to obtain~$\bar{d}$ for coordinate $n$;
		\STATE  \quad\quad $\boldsymbol{\theta}_m \leftarrow \boldsymbol{\theta}_m + \bar{d} \mathbf{e}_n$;
		\STATE  \quad \textbf{end for}
		\STATE \textbf{until} convergence
		\STATE \textbf{Output:} $\boldsymbol{\theta}_m^{(i)}\leftarrow \boldsymbol{\theta}_m$.
	\end{algorithmic}
\end{algorithm}

\subsection{Applicability to Special Cases}
	The proposed Algorithms \ref{distributed algorithm} and \ref{alg:cd} exhibit generalizability and can be applied to different conventional scenarios:
	\begin{itemize}
	\item \textbf{Cell-free case:} When all devices are located in the near-field regions ($\mathcal{U}^c_m = \emptyset$,~$\forall m = 1, 2, \ldots, M$), the proposed algorithms can be directly applied to the cell-free near-field activity detection. On the other hand, when all devices are located in the far-field regions ($\mathcal{U}_m = \emptyset$,~$\forall m = 1, 2, \ldots, M$), the covariance matrix $\tilde{\mathbf{C}}_m$ reduces to a block-diagonal structure:
	\begin{align}
	\tilde{\mathbf{C}}_m = \text{diag}\left( \mathbf{Q}_{m}, \mathbf{Q}_{m}, \ldots, \mathbf{Q}_{m} \right),
	\end{align}
	where $\mathbf{Q}_{m} = \sum_{n=1}^{N} \theta_{m,n} g_{m,n} \mathbf{s}_n \mathbf{s}_n^\text{H} + \varsigma^2_m \mathbf{I}_{L}$. In this case, (\ref{eq:term2}) reduces to the popular rank-1 update for each~$\mathbf{Q}_{m}$~\cite{li2022asynchronous,ganesan2021clustering}. Together with the update of $\boldsymbol{\theta}_m$ in line~7 of Algorithm~\ref{alg:cd}, the overall Algorithm~\ref{distributed algorithm} provides a distributed algorithm for the conventional cell-free far-field activity detection.  
	
	\item \textbf{Single-cell case:} When there exists only $M=1$ AP, Algorithm~\ref{alg:cd} can be individually applied for the single-cell activity detection. When all devices are located in the near-field region ($\mathcal{U}^c_1 = \emptyset$), it reduces to the single-cell near-field activity detection; when all devices are in the far-field region ($\mathcal{U}_1 = \emptyset$), it reduces to the single-cell far-field activity detection.
	\end{itemize}
	This unified algorithmic framework demonstrates broad applicability across diverse communication scenarios.

\subsection{Computational Complexity and Communication Overhead Analysis}
\label{sec:comparison}
In this subsection, we analyze the advantages of the proposed distributed algorithm over the centralized approach from two aspects: computational complexity and communication overhead. The analysis demonstrates that the proposed distributed algorithm can significantly reduce both of the computational burden at the CPU and the fronthaul overhead.

\subsubsection{Computational Complexity Analysis}

In Algorithm~\ref{distributed algorithm}, the dominant computational complexity at each AP $m$ lies in updating $\boldsymbol{\theta}_m$ by solving problem (\ref{sub b}). For each coordinate update, the computational cost consists of two main parts: (i) solving the problem (\ref{eq:one-dim-approx}) using the cubic formula with complexity $\mathcal{O}(1)$, and (ii) updating the matrix inverse $\tilde{\mathbf{C}}_{m}^{-1}$ using the Sherman-Morrison-Woodbury formula~(\ref{eq:term2}) with complexity $\mathcal{O}((LK)^2J_{m,n} + LKJ_{m,n}^2+J_{m,n}^3)$. Since $J_{m,n}= \operatorname{rank}(\mathbf{\Xi}_{m,n})\le K$, the overall complexity for each coordinate update is dominanted by $\mathcal{O}((LK)^2J_{m,n})$.

For comparison, if we solve problem (\ref{eq:optimization_problem}) in a centralized manner, it is equivalent to solving the following problem at each coordinate update using the CD algorithm:
\begin{align}
	\label{eq:centralized_obj}
	\underset{d \in [-a_{n},\,1-a_{n}]}{\operatorname{min}} &\sum_{m=1}^M  \left\{ d \hat{\rho}_{1,m}(\mathbf{a}) + d^2 \hat{\rho}_{2,m}(\mathbf{a}) \right. \nonumber \\
	& \left. + d^3 \hat{\rho}_{3,m}(\mathbf{a}) + d^4 \hat{\rho}_{4,m}(\mathbf{a}) \right\} + \frac{\tilde{\omega}}{2} d^2,
	\end{align}
where  $\hat{\rho}_{i,m}(\mathbf{a})$ for $i = 2, 3, 4$ are given by (\ref{eq:centralized_obj_terms}) in Appendix~\ref{appB}, and $\tilde{\omega} > 0$ is analogous to $\omega$ in (\ref{eq:one-dim-approx}).  Note that $\{\hat{\rho}_{i,m}(\mathbf{a})\}_{i=1}^4$ in (\ref{eq:centralized_obj_terms}) differs from $\{\rho_{i}(\boldsymbol{\theta}_{m})\}_{i=1}^4$ in (\ref{eq:distributed inexact})  as they do not contain the dual variables and the penalty related terms. For such a centralized approach, the summation over $M$ APs in (\ref{eq:centralized_obj}) leads to a complexity of $\mathcal{O}\!\left((LK)^2 \sum_{m=1}^{M} J_{m,n}\right)$ for each coordinate update.  \textcolor{black}{The computational complexity comparison is summarized in Table~\ref{tab:complexity_r2}.}  \textcolor{black}{We can see that the proposed distributed algorithm decomposes problem~(\ref{eq:optimization_problem}) so that each AP independently solves its own subproblem~(\ref{eq:one-dim-approx}) in parallel, yielding a per-coordinate complexity independent of $M$. In contrast, even with parallel computing at the CPU, problem~(\ref{eq:centralized_obj}) cannot be decomposed into $M$ independent subproblems due to the variable coupling among the $M$ components. Hence, its per-coordinate complexity inherently scales with $M$.} We will compare the computational complexity by simulations in Section \ref{sec:simulation}.

\begin{table}[t]
	\centering
	\caption{\textcolor{black}{Computational Complexity Comparison}}
	\label{tab:complexity_r2}
	\begin{tabular}{|c|c|c|}
	\hline
	& \textbf{Centralized} & \textbf{Proposed Distributed} \\
	\hline
	Per-coordinate & $\mathcal{O}\!\left((LK)^2 \sum_{m=1}^{M} J_{m,n}\right)$ & $\mathcal{O}((LK)^2 J_{m,n})$ \\
	\hline
	\end{tabular}
\end{table}

\subsubsection{Communication Overhead} 
In addition to computational complexity, the communication overhead between the APs and the CPU is another crucial aspect. For the proposed distributed algorithm, the signaling exchange occurs in two directions:
\begin{itemize}
    \item In the downlink, at each iteration $i$, the CPU broadcasts the $N$-dimensional vector $\mathbf{a}^{(i-1)}$ to each AP, requiring to transmit $MN$ real-valued numbers.
    \item In the uplink, each AP sends the $N$-dimensional vector $\mu \boldsymbol{\theta}_{m}^{(i)} + \boldsymbol{\lambda}_m^{(i)}$ to the CPU, requiring to transmit $MN$ real-valued numbers.
\end{itemize}
Therefore, with $I$ iterations, the total communication overhead of the proposed distributed algorithm is to transmit $2IMN$ real-valued numbers. Due to the small dynamic range $[0,1]$ of $\mathbf{a}$ and $\boldsymbol{\theta}_m$, the required number of quantization bits for transmitting the $2IMN$ numbers can be very small. Furthermore, since most of the entries in $\mathbf{a}$ and $\boldsymbol{\theta}_m$ are zeros due to the sparse activities, the communication overhead can be further reduced  by using various data compression schemes.

In contrast, for the centralized approach, each AP needs to transmit its received signal $\mathbf{Y}_m$ to the CPU with $2LK$ real-valued numbers\footnote{Different from the conventional far-field activity detection, where each AP $m$ can alternatively transmit $\mathbf{Y}_m\mathbf{Y}_m^{\text{H}}$ to the CPU, it should be noted from (\ref{eq:centralized_obj_terms}) that the terms such as $\left( \mathbf{y}_{m} - \bar{\mathbf{y}}_{m} \right)^\text{H} \mathbf{C}_{m}^{-1} \left( \boldsymbol{\varpi}_{m,n} \otimes \mathbf{s}_n \right)$ cannot be computed from $\mathbf{Y}_m\mathbf{Y}_m^{\text{H}}$ alone without the knowledge of $\mathbf{Y}_m$. Thus, for hybrid far-field activity detection, each AP $m$ must transmit $\mathbf{Y}_m$ to the CPU.}. Thus, the total real-valued numbers required to be transmitted by the centralized approach is $2MLK$. However, since the the dynamic range of $\mathbf{Y}_m$ can be much larger than that of $\mathbf{a}$ and $\boldsymbol{\theta}_m$, the required number of quantization bits for transmitting the $2MLK$ numbers can be much larger. We will further demonstrate the communication overhead comparison by simulations in Section \ref{sec:simulation}.

\section{Convergence Analysis} \label{convergence analysis}

Due to the approximation in (\ref{eq:one-dim-approx}), the convergence behavior of Algorithm~\ref{alg:cd} is unknown. To this end, the following proposition is presented on its convergence property.

\begin{proposition}
    \label{proposition1}
    Let $L_m > 0$ denote the Lipschitz constant of $\nabla U_m(\boldsymbol{\theta}_{m})$, where  $U_m(\boldsymbol{\theta}_m)$ denotes the objective function of problem (\ref{sub b}). There exists a constant $\rho=\bar{\rho}_2+\bar{\rho}_3+\bar{\rho}_4$ such that $\bar{\rho}_2$, $\bar{\rho}_3$, and $\bar{\rho}_4$ are the upper bounds of $\rho_{2}(\boldsymbol{\theta}_{m})$, $\rho_{3}(\boldsymbol{\theta}_{m})$, and $\rho_{4}(\boldsymbol{\theta}_{m})$ in (\ref{eq:distributed inexact}), respectively. Moreover, for any $\omega \geq L_m + \rho$ and $\omega+\rho \ge 1$, Algorithm~\ref{alg:cd} guarantees the convergence to at least a stationary point of problem (\ref{sub b}) .
    \end{proposition}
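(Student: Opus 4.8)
The plan is to establish convergence of Algorithm~\ref{alg:cd} by viewing it as an instance of block coordinate descent with an inexact (surrogate) model, and invoking a sufficient-decrease argument. The key realization is that the one-dimensional function $p_{\mathrm{approx}}(d)$ in (\ref{eq:distributed inexact}) is obtained from the true coordinate objective by two first-order Taylor truncations, steps $(a)$ in (\ref{eq:term1}) and $(b)$ in (\ref{eq:term2}); the associated approximation error is controlled by the higher-order terms, whose magnitudes are exactly $\rho_2(\boldsymbol{\theta}_m)$, $\rho_3(\boldsymbol{\theta}_m)$, $\rho_4(\boldsymbol{\theta}_m)$ (up to the quadratic $\tfrac{\mu}{2}$ contribution). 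Therefore the first step is to show these three quantities are uniformly bounded on the feasible box $[0,1]^N$ --- this follows because $\tilde{\mathbf{C}}_m \succeq \varsigma_m^2 \mathbf{I}_{LK}$ gives $\|\tilde{\mathbf{C}}_m^{-1}\| \le 1/\varsigma_m^2$, while $\mathbf{X}_{m,n}$, $\boldsymbol{\varpi}_{m,n}\otimes\mathbf{s}_n$, and $\mathbf{y}_m - \tilde{\bar{\mathbf{y}}}_m$ all have norms bounded in terms of fixed system parameters (signature sequences, channel statistics, received data). Call the resulting bounds $\bar\rho_2,\bar\rho_3,\bar\rho_4$ and set $\rho = \bar\rho_2+\bar\rho_3+\bar\rho_4$.

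Next I would prove the per-coordinate sufficient-decrease inequality. For the coordinate subproblem in direction $\mathbf{e}_n$, write $\phi(d) := U_m(\boldsymbol{\theta}_m + d\mathbf{e}_n)$ for the true restriction and $\hat\phi(d) := p_{\mathrm{approx}}(d) + U_m(\boldsymbol{\theta}_m) - \rho_1(\boldsymbol{\theta}_m)\cdot 0 + \tfrac{\omega}{2}d^2$ (appropriately constant-shifted) for the surrogate actually minimized in (\ref{eq:one-dim-approx}). The point is that $\hat\phi$ majorizes $\phi$: because $\phi(0) = \hat\phi(0)$, $\phi'(0) = \hat\phi'(0)$ (the Taylor truncations preserve the value and first derivative at $d=0$), and $|\phi''(d) - p_{\mathrm{approx}}''(d)| \le$ a constant no larger than $2\rho$ on the feasible interval, choosing $\omega \ge L_m + \rho$ makes $\hat\phi'' \ge \phi''$, hence $\hat\phi(d) \ge \phi(d)$ for all feasible $d$. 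The exact minimizer $\bar d$ of $\hat\phi$ over the interval then satisfies the standard majorization-minimization descent bound $\phi(\bar d) \le \hat\phi(\bar d) \le \hat\phi(0) - \tfrac{\omega+\rho}{2}\bar d^{\,2} \le \phi(0) - \tfrac{1}{2}\bar d^{\,2}$, using $\omega + \rho \ge 1$ at the last step. Summing over the $N$ coordinates in one inner sweep gives $U_m(\boldsymbol{\theta}_m^{\text{new}}) \le U_m(\boldsymbol{\theta}_m^{\text{old}}) - \tfrac{1}{2}\sum_n (\bar d_n)^2$, so the objective sequence is monotonically nonincreasing and, being bounded below (it is a negative log-likelihood plus a nonnegative quadratic, bounded below since $\log|\tilde{\mathbf{C}}_m| \ge LK\log\varsigma_m^2$), converges; consequently the coordinate steps $\bar d_n \to 0$.

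Finally I would upgrade step-size vanishing to stationarity. Since $\bar d_n$ exactly minimizes the restricted surrogate $\hat\phi$ over $[-\theta_{m,n}, 1-\theta_{m,n}]$, it satisfies the variational inequality $\hat\phi'(\bar d_n)(t - \bar d_n) \ge 0$ for all feasible $t$; using $\hat\phi'(\bar d_n) = \phi'(0) + O(\bar d_n)$ (again because the surrogate matches value and gradient at $0$ and its remaining curvature is bounded) and passing to a limit point $\boldsymbol{\theta}_m^\star$ of the bounded iterate sequence along the subsequence where all $\bar d_n \to 0$, one obtains $\partial_n U_m(\boldsymbol{\theta}_m^\star)(t - \theta_{m,n}^\star) \ge 0$ for every feasible $t$ and every coordinate $n$, which is precisely the first-order stationarity condition for the box-constrained problem (\ref{sub b}); here the random permutation of coordinates guarantees every coordinate is revisited infinitely often so the limiting inequality holds for all $n$ simultaneously. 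The main obstacle I anticipate is the second step --- rigorously verifying the majorization $\hat\phi \ge \phi$ on the whole feasible interval rather than merely locally, since $p_{\mathrm{approx}}$ is a genuine quartic in $d$ (not a quadratic model), so one must carefully bound the discrepancy between the true coordinate objective's second derivative and $p_{\mathrm{approx}}''(d) = 2\rho_2 + 6\rho_3 d + 12\rho_4 d^2$ uniformly over $d \in [-1,1]$, which is exactly where the constants $\bar\rho_2,\bar\rho_3,\bar\rho_4$ and the conditions $\omega \ge L_m + \rho$, $\omega + \rho \ge 1$ enter.
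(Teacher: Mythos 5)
Your overall architecture --- bound the higher-order coefficients, build a majorizing surrogate, extract a per-coordinate sufficient decrease, telescope, and pass to a limit point --- matches the paper's, and your final stationarity argument via the variational inequality at the surrogate minimizer is a legitimate variant of the paper's projected-gradient fixed-point argument. However, there is a concrete gap in your sufficient-decrease step. You assert $\hat\phi(\bar d)\le\hat\phi(0)-\tfrac{\omega+\rho}{2}\bar d^{\,2}$, but $\bar d$ is the constrained minimizer of a genuine quartic, and minimality only gives $\hat\phi(\bar d)\le\hat\phi(0)$; a decrease proportional to $\bar d^{\,2}$ with modulus $\omega+\rho$ would require $\hat\phi$ to be strongly convex with that modulus, which $\omega\ge L_m+\rho$ does not deliver (the second derivative of $p_{\mathrm{approx}}$ is $2\rho_2+6\rho_3 d+12\rho_4 d^2$, whose negative part is controlled by $2\bar\rho_2+6\bar\rho_3+12\bar\rho_4$, not by $\rho$). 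The paper's device, which you are missing, is to measure the decrease not against $\bar d$ but against the minimizer $\tilde d$ of the simpler quadratic $\rho_1(\boldsymbol{\theta}_m)d+\tfrac{\omega+\rho}{2}d^2$ over the interval: that quadratic upper-bounds $p_{\mathrm{approx}}(d)+\tfrac{\omega}{2}d^2$ (because $\tfrac{\rho}{2}d^2+\rho_2 d^2+\rho_3 d^3+\rho_4 d^4\ge0$ for $|d|\le1$), and since $\bar d$ minimizes the surrogate one gets $U_m(\boldsymbol{\theta}_m+\bar d\mathbf{e}_n)\le U_m(\boldsymbol{\theta}_m)+\rho_1(\boldsymbol{\theta}_m)\tilde d+\tfrac{\omega+\rho}{2}\tilde d^{\,2}\le U_m(\boldsymbol{\theta}_m)-\tfrac{\omega+\rho}{2}\tilde d^{\,2}$ by the projection inequality. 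This matters doubly because $\tilde d$ is exactly the coordinate-wise projected-gradient step, so $\tilde d\to0$ immediately yields the first-order condition in the limit, whereas $\bar d\to0$ alone forces your additional (and more delicate) argument that the surrogate's derivative at $\bar d$ tracks the true partial derivative.

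A second, smaller issue: your majorization route via the claim $|\phi''(d)-p_{\mathrm{approx}}''(d)|\le 2\rho$ is not substantiated, and the natural uniform bound on $p_{\mathrm{approx}}''$ involves $6\bar\rho_3$ and $12\bar\rho_4$, so the stated threshold $\omega\ge L_m+\rho$ would not suffice along your route. The paper instead applies the descent lemma to $U_m$ directly, obtaining $U_m(\boldsymbol{\theta}_m+d\mathbf{e}_n)\le U_m(\boldsymbol{\theta}_m)+\rho_1(\boldsymbol{\theta}_m)d+\tfrac{L_m}{2}d^2$ after observing that $\rho_1(\boldsymbol{\theta}_m)$ equals the true partial derivative, and then shows the surrogate dominates this quadratic --- a comparison that needs only the zeroth-order bounds $|\rho_i(\boldsymbol{\theta}_m)|\le\bar\rho_i$ together with $|d|\le1$, never a second-derivative estimate. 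You correctly flagged this as the delicate point; the fix is to avoid the curvature comparison altogether rather than to sharpen it.
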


	\begin{proof}
		See Appendix \ref{app2}.
	\end{proof}
		
 Moreover, due to the non-convexity of problem (\ref{distributed formulation}) and the approximation in (\ref{eq:one-dim-approx}), it is still unknown about the overall convergence behavior of the proposed distributed algorithm. Based on Proposition~\ref{proposition1}, we can further establish the convergence behavior of the Algorithm~\ref{distributed algorithm}. The following theorem demonstrates that Algorithm~\ref{distributed algorithm} guarantees the convergence to a stationary point of problem~(\ref{distributed formulation}).

\begin{theorem}\label{Distributed Convergency Theorem}
When $\omega \ge L_{m} + \rho$, $\omega+\rho \ge 1$, and $\mu>2\tilde{L}_{m}$, where $\tilde{L}_{m}$ is the lipschitz constant of $\nabla f_m(\boldsymbol{\theta}_{m})$, the solution sequence  $\{\mathbf{a}^{(i)}\}$ generated by Algorithm \ref{distributed algorithm} (with $\boldsymbol{\theta}_n$ updated by Algorithm \ref{alg:cd}) converges to at least a stationary point of problem (\ref{distributed formulation}).
\end{theorem}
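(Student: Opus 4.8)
The plan is to run the standard three-part argument for nonconvex ADMM — (i) a sufficient-decrease estimate for the augmented Lagrangian $\mathcal{L}$ of (\ref{lagrangian}) along the iterations, (ii) lower-boundedness of $\mathcal{L}$, and (iii) a subsequence/limit-point argument — with the extra ingredient that the $\boldsymbol{\theta}_m$-block is only approximately minimized by the inner routine Algorithm~\ref{alg:cd}, for which Proposition~\ref{proposition1} already certifies convergence to a stationary point of the exact subproblem (\ref{sub b}).

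First I would collect the per-iteration optimality relations. The $\mathbf{a}$-update (\ref{eq:close form}) is the exact minimizer over $[0,1]^N$ of the $M\mu$-strongly convex quadratic $\mathbf{a}\mapsto\mathcal{L}(\boldsymbol{\theta}^{(i)},\mathbf{a};\boldsymbol{\lambda}^{(i)})$, which yields both a variational inequality at $\mathbf{a}^{(i)}$ and the descent $\mathcal{L}(\boldsymbol{\theta}^{(i)},\mathbf{a}^{(i)};\boldsymbol{\lambda}^{(i)})\le\mathcal{L}(\boldsymbol{\theta}^{(i)},\mathbf{a}^{(i-1)};\boldsymbol{\lambda}^{(i)})-\tfrac{M\mu}{2}\|\mathbf{a}^{(i)}-\mathbf{a}^{(i-1)}\|_2^2$. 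For the $\boldsymbol{\theta}_m$-block, Proposition~\ref{proposition1} (valid because $\omega\ge L_m+\rho$ and $\omega+\rho\ge1$) gives that $\boldsymbol{\theta}_m^{(i)}$ is stationary for (\ref{sub b}); substituting the dual step (\ref{dual ascent step}) this reads $0\in\nabla f_m(\boldsymbol{\theta}_m^{(i)})+\boldsymbol{\lambda}_m^{(i)}+N_{[0,1]^N}(\boldsymbol{\theta}_m^{(i)})$, and the monotonicity of the CD sweep from its initialization furnishes a decrease of $f_m$ plus the consensus-penalty terms that I would turn into a $-c_1\sum_m\|\boldsymbol{\theta}_m^{(i)}-\boldsymbol{\theta}_m^{(i-1)}\|_2^2$ contribution, using here the aggregate-decrease bound established inside the proof of Proposition~\ref{proposition1} in which $\omega$ plays the role of a proximal weight.

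Next comes the key link: bounding the dual increments by the primal ones. Subtracting the $\boldsymbol{\theta}_m$-stationarity conditions at iterations $i$ and $i-1$ and invoking the $\tilde L_m$-Lipschitz continuity of $\nabla f_m$ gives $\|\boldsymbol{\lambda}_m^{(i)}-\boldsymbol{\lambda}_m^{(i-1)}\|_2\le\tilde L_m\|\boldsymbol{\theta}_m^{(i)}-\boldsymbol{\theta}_m^{(i-1)}\|_2$. Since the dual ascent step raises $\mathcal{L}$ by exactly $\tfrac1\mu\sum_m\|\boldsymbol{\lambda}_m^{(i)}-\boldsymbol{\lambda}_m^{(i-1)}\|_2^2$, the assumption $\mu>2\tilde L_m$ makes this increase strictly smaller than the $\boldsymbol{\theta}$- and $\mathbf{a}$-block decreases, so that $\mathcal{L}^{(i)}\le\mathcal{L}^{(i-1)}-c_1'\sum_m\|\boldsymbol{\theta}_m^{(i)}-\boldsymbol{\theta}_m^{(i-1)}\|_2^2-c_2'\|\mathbf{a}^{(i)}-\mathbf{a}^{(i-1)}\|_2^2$ for some $c_1',c_2'>0$. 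Using $\boldsymbol{\lambda}_m^{(i)}=-\nabla f_m(\boldsymbol{\theta}_m^{(i)})$ (up to the normal-cone component) together with the descent lemma, $\mathcal{L}^{(i)}$ is bounded below by $\sum_m f_m$ on the compact box $[0,1]^N$; hence $\mathcal{L}^{(i)}$ converges, $\sum_i\big(\sum_m\|\boldsymbol{\theta}_m^{(i)}-\boldsymbol{\theta}_m^{(i-1)}\|_2^2+\|\mathbf{a}^{(i)}-\mathbf{a}^{(i-1)}\|_2^2\big)<\infty$, and therefore all successive differences — and, by (\ref{dual ascent step}), the consensus residuals $\boldsymbol{\theta}_m^{(i)}-\mathbf{a}^{(i-1)}$ — vanish. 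Because $\{(\boldsymbol{\theta}_m^{(i)},\mathbf{a}^{(i)})\}\subset[0,1]^N$ and $\{\boldsymbol{\lambda}_m^{(i)}\}$ is bounded, any limit point satisfies, by taking limits in the two variational inequalities and the vanishing residual, the KKT system of (\ref{distributed formulation}), i.e. it is a stationary point; the statement for $\{\mathbf{a}^{(i)}\}$ follows.

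The main obstacle I expect is making the $\boldsymbol{\theta}_m$-block step fully rigorous: Algorithm~\ref{alg:cd} minimizes the Taylor / Sherman-Morrison-Woodbury surrogate (\ref{eq:one-dim-approx}) rather than the exact subproblem, and it is warm-started at $\mathbf{a}^{(i-1)}$ instead of $\boldsymbol{\theta}_m^{(i-1)}$, so obtaining a clean $-c_1\|\boldsymbol{\theta}_m^{(i)}-\boldsymbol{\theta}_m^{(i-1)}\|_2^2$ descent term — and a limiting stationarity condition stated for the exact $f_m$ — requires feeding in the intermediate inequalities from the proof of Proposition~\ref{proposition1}; this is precisely where $\omega\ge L_m+\rho$ and $\omega+\rho\ge1$ are consumed, with $\mu>2\tilde L_m$ consumed in the sufficient-decrease step. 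A secondary technicality is the box constraint on the $\boldsymbol{\theta}_m$-block, which injects a normal-cone term into the dual-increment bound; a convenient way around it is to observe that, since $\boldsymbol{\theta}_m=\mathbf{a}$ and $\mathbf{a}\in[0,1]^N$, this box is redundant and may be transferred to the $\mathbf{a}$-block, after which each $\boldsymbol{\theta}_m$-update is an unconstrained smooth minimization and $\|\boldsymbol{\lambda}_m^{(i)}-\boldsymbol{\lambda}_m^{(i-1)}\|_2\le\tilde L_m\|\boldsymbol{\theta}_m^{(i)}-\boldsymbol{\theta}_m^{(i-1)}\|_2$ is immediate. If convergence of the whole sequence $\{\mathbf{a}^{(i)}\}$ (not merely of its limit points to stationarity) is intended, I would additionally invoke the Kurdyka--{\L}ojasiewicz property, which holds here because $\mathcal{L}$ is real-analytic on a neighborhood of $[0,1]^N$, being built from log-determinants and inverses of affine positive-definite matrix pencils together with polynomials.
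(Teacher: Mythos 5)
Your proposal follows essentially the same route as the paper's proof: a three-step nonconvex-ADMM argument showing sufficient decrease of the augmented Lagrangian (with the dual increment bounded by the primal increment via the $\tilde L_m$-Lipschitz continuity of $\nabla f_m$ and absorbed using $\mu>2\tilde L_m$), lower-boundedness of $\mathcal{L}$ via $\boldsymbol{\lambda}_m^{(i)}=-\nabla f_m(\boldsymbol{\theta}_m^{(i)})$ and the descent lemma, and a limit-point argument passing the optimality conditions and vanishing consensus residuals to the limit. The only substantive difference is that you explicitly worry about the normal-cone term from the box constraint on $\boldsymbol{\theta}_m$ (the paper writes the $\boldsymbol{\theta}_m$-stationarity condition as an unconstrained equality), and the warm-start concern you raise is moot in the paper's argument since the $\mu>2\tilde L_m$ condition makes the subproblem strongly convex, so the stationary point returned by Algorithm~\ref{alg:cd} is its global minimizer regardless of initialization.
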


\begin{proof}
	See Appendix \ref{app3}.
\end{proof}

\section{Simulation Results}
\label{sec:simulation}
In this section, we validate the performance of the proposed method through simulations in terms of  PM and  PF~\cite{ganesan2021clustering}. \textcolor{black}{We consider a system with parameters summarized in Table~\ref{tab:simulation_parameters}.} APs and devices are uniformly distributed in the area. \textcolor{black}{The $L_m$ scatterers are independently and uniformly distributed within a disk of radius $20$~m centered at each AP~\cite{liuNearFieldCommunicationsTutorial2023}, \cite{Dong2022}. The channel gains $g_{m,n}$, $|\beta_{m,n}|^2$, and $|\tilde{\beta}_{m,n,\ell}|^2$ are computed from the path-loss model $128.1 + 37.6\log_{10}(\tau)$~(dB), where $\tau$ is the distance in kilometers between device~$n$ and AP~$m$ for $g_{m,n}$ and $|\beta_{m,n}|^2$, or the distance between device~$n$ and scatterer~$\ell$ of AP~$m$ for $|\tilde{\beta}_{m,n,\ell}|^2$ \cite{liuNearFieldCommunicationsTutorial2023}, \cite{Dong2022}, \cite{Wang2026}.} 

\begin{table}[t]
	\centering
	\caption{\textcolor{black}{Summary of Simulation Parameters}}
	\label{tab:simulation_parameters}
	\begin{tabular}{|c|c|}
	\hline
	\textbf{Parameter} & \textbf{Value} \\
	\hline
	Simulation area & $200 \times 200$~m$^2$ \\
	\hline
	Number of devices $N$ & $100$ (default), $[40,\, 120]$ \\
	\hline
	Active device ratio $\alpha$ & $0.1$ (default), $[0.05,\, 0.25]$ \\
	\hline
	Carrier wavelength $\lambda_\text{c}$ & $0.2$~m (default), $[0.05,\, 0.3]$~m \\
	\hline
	Signature sequence $ \mathbf{s}_n$ (uniformly sampled) & $\left\{ \pm \frac{\sqrt{2}}{2} \pm j \frac{\sqrt{2}}{2} \right\}^L$ \\
	\hline
	Number of scatterers per AP $L_m$ & $8$ (default), $[4,\, 20]$ \\
	\hline
	Reflection coefficient variance $\sigma_{m,\ell}^2$ & $1$ \\
	\hline
	Penalty parameter $\omega$ & $20$ \\
	\hline
	Background noise power & $-99$~dBm \\
	\hline
	\end{tabular}
\end{table}

\subsection{Convergence Behavior of Proposed Distributed Algorithm}

Figure~\ref{prob of error versus iteration number} demonstrates the convergence behavior of the proposed distributed algorithm  across different total antenna configurations, where the number of APs is set to $M=3$, the signature sequence length is $L=6$, and the total number of antennas $MK$ varies from $48$ to $96$ with the antennas distributed equally among the APs. We evaluate the detection performance by setting the threshold such that the PM equals to the PF. The results show that the distributed algorithm achieves fast convergence, reaching the performance of the centralized approach within only $2$ iterations.
This fast convergence can efficiently reduce the computational complexity and communication overhead of the proposed distributed algorithm, which will be further discussed in Section~\ref{sec:complexity_comparison}.

\begin{figure}[t]
	\centering	\includegraphics[width=0.85\columnwidth]{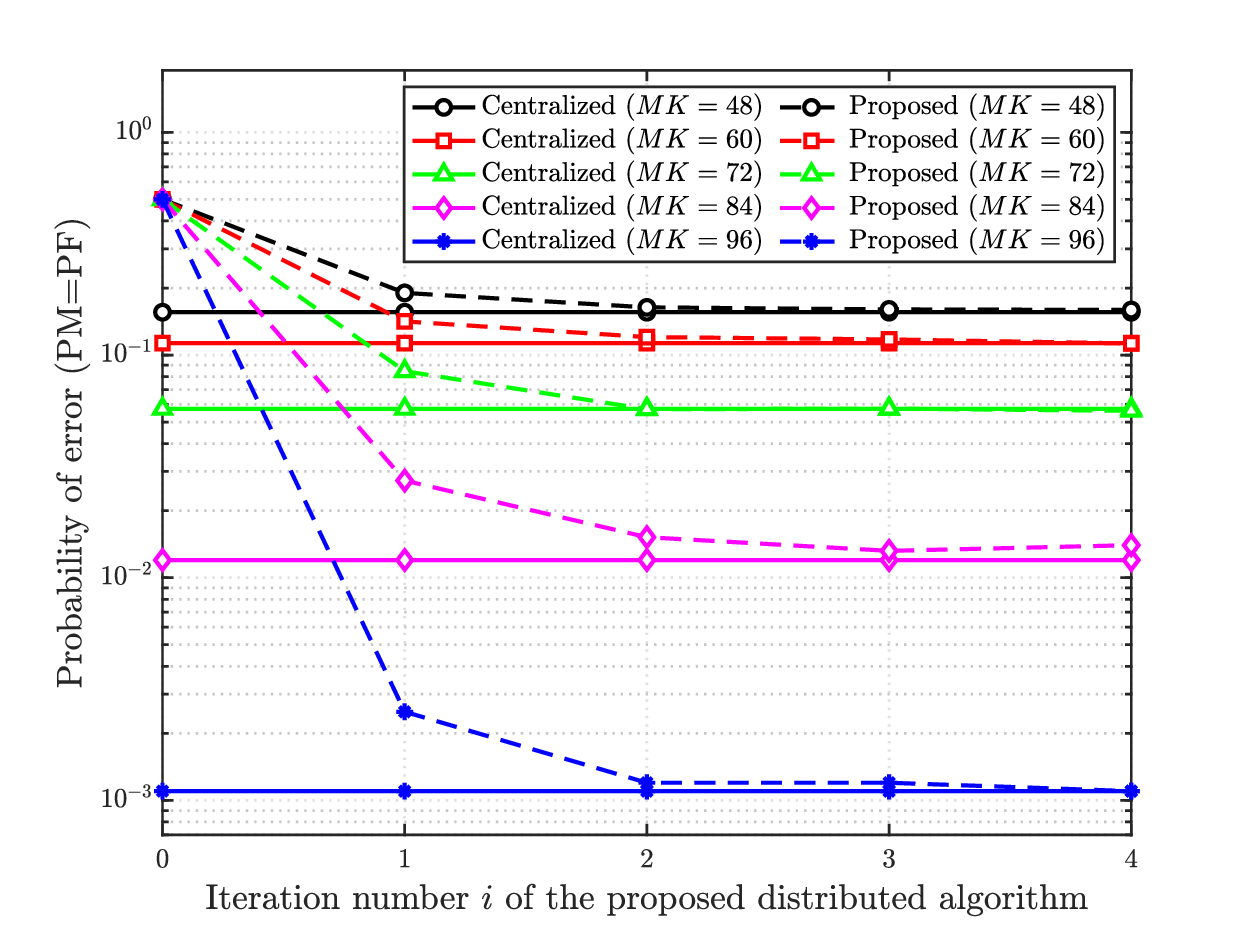}
	\caption{Probability of error versus iteration number.}
	\label{prob of error versus iteration number}
	\vspace{-10pt}
\end{figure}

Figure~\ref{prob of error versus number of APs for distributed algorithm} further illustrates the detection performance of the proposed distributed algorithm under different numbers of iterations, where the AP number varies from $1$ to $12$ with the total number of antennas fixed at $72$. It can be seen that the proposed distributed algorithm  rapidly converges within $1$ or $2$ iterations across different numbers of APs, while maintaining comparable detection performance to that of the centralized approach.

\begin{figure}[t]
	\centering
	\includegraphics[width=0.85\columnwidth]{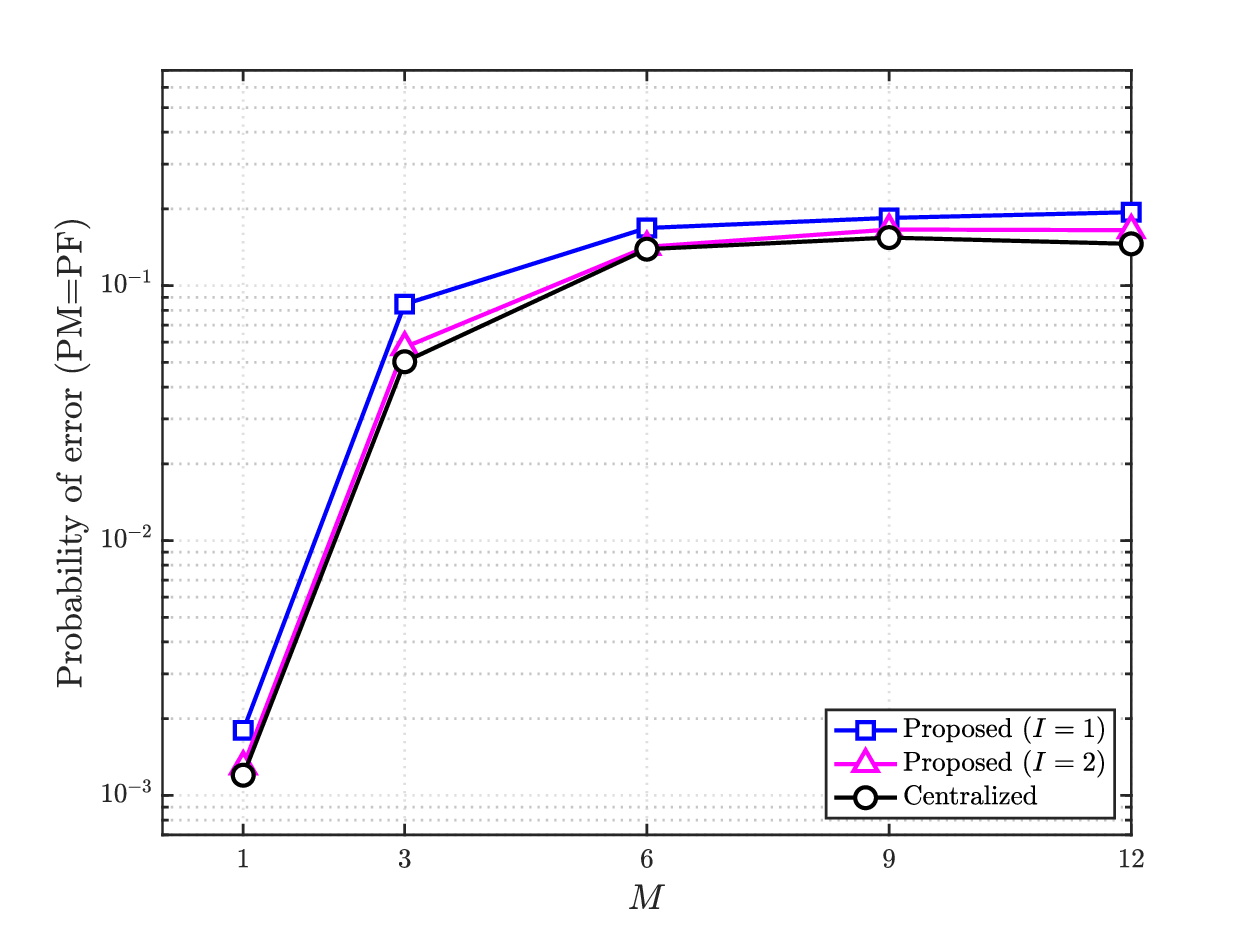}
	\caption{Probability of error versus number of APs.}
	\label{prob of error versus number of APs for distributed algorithm}
	\vspace{-10pt}
\end{figure}

\textcolor{black}{To further validate the convergence of Algorithm~\ref{alg:cd} as stated in Proposition~\ref{proposition1}, Fig.~\ref{fig:func_conv} plots the objective function value of problem (\ref{sub b}) versus the number of iterations of Algorithm~\ref{alg:cd} under different numbers of antennas, where $M=3$ and $L=6$. It can be observed that the objective function value  of problem (\ref{sub b}) decreases monotonically and converges within a few iterations, confirming that the Taylor approximation  does not hinder the convergence of Algorithm~\ref{alg:cd} for solving problem~(\ref{sub b}) as stated in Proposition~\ref{proposition1}.}

\begin{figure}[t]
	\centering
	\includegraphics[width=0.85\columnwidth]{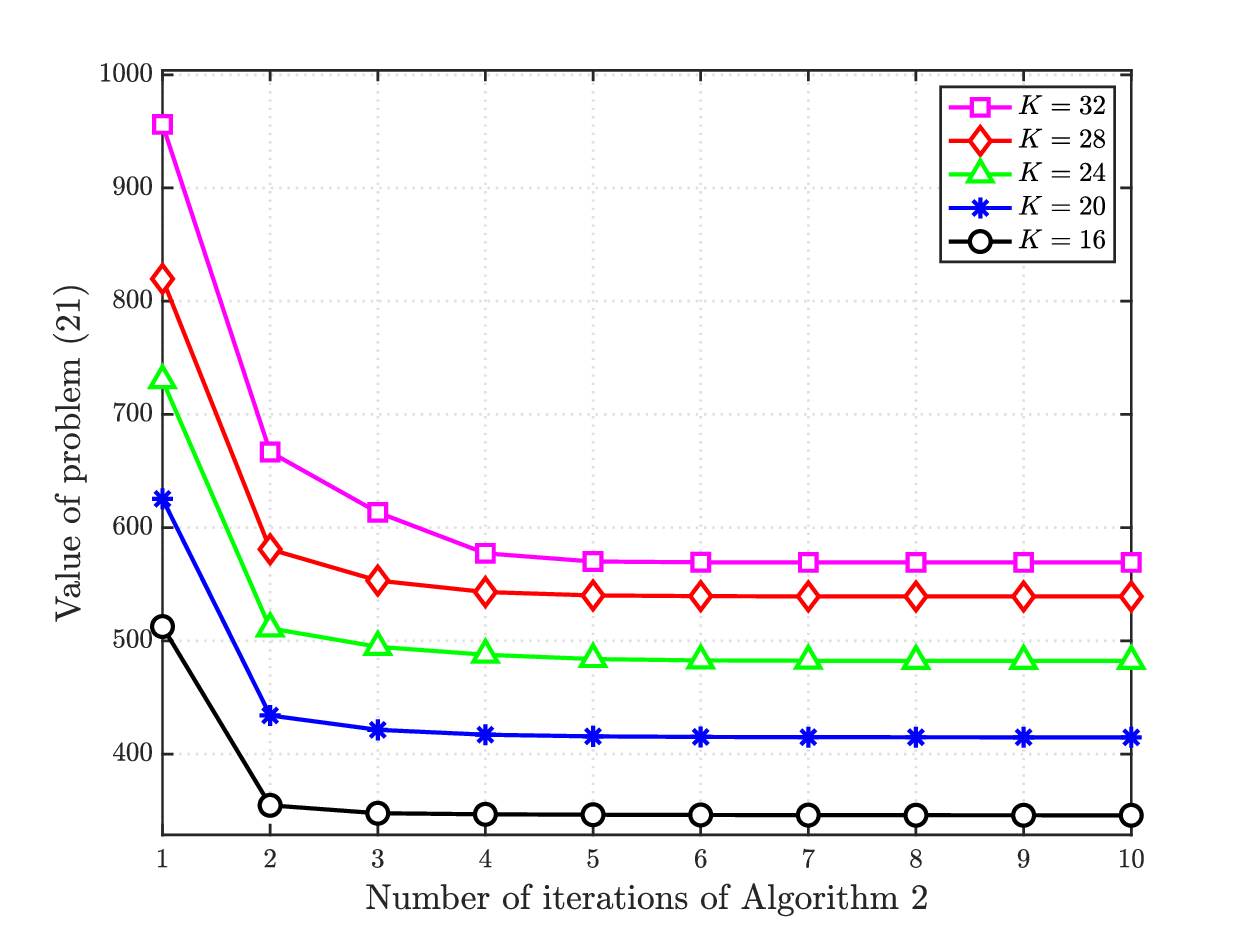}
	\caption{\textcolor{black}{The objective function value of problem (\ref{sub b}) versus the number of iterations of Algorithm~\ref{alg:cd}.}}
	\label{fig:func_conv}
\end{figure}

\subsection{Performance Comparison} 
Next, we demonstrate the superior performance of the proposed distributed algorithm against some benchmarks. For comparison, we consider four benchmarks, i.e., Mismatched CD~\cite{ganesan2021clustering}, Vanilla MLE~\cite{liu2024mle}, AMP~\cite{djelouat2021user}, and OAMP~\cite{cheng2020orthogonal}.
While Vanilla MLE, AMP, and OAMP are designed for single-cell activity detection, we extend them to cell-free massive MIMO. We describe these approaches as follows.
\begin{itemize}
\item Mismatched CD and Vanilla MLE are covariance-based approaches. Mismatched CD is the traditional rank-one update based approach~\cite{ganesan2021clustering}, which becomes inapplicable to hybrid near-far field scenarios because the joint distribution of near-field channels destroys the rank-one update structure. To implement this baseline, we directly utilize the traditional rank-one update by assuming all channels are far-field channels. On the other hand, Vanilla MLE still considers the near-field LoS component but ignores the  correlation structure of the channels.
     \item AMP and OAMP are CS-based approaches, which are designed for joint activity detection and channel estimation. OAMP is an improved version of AMP, which is applicable to more types of signature sequences. Notice that both approaches require the knowledge of the active probability of each device, while the covariance-based approaches do not require such prior information.
\end{itemize}

\begin{figure}[t]
	\centering
\includegraphics[width=0.85\columnwidth]{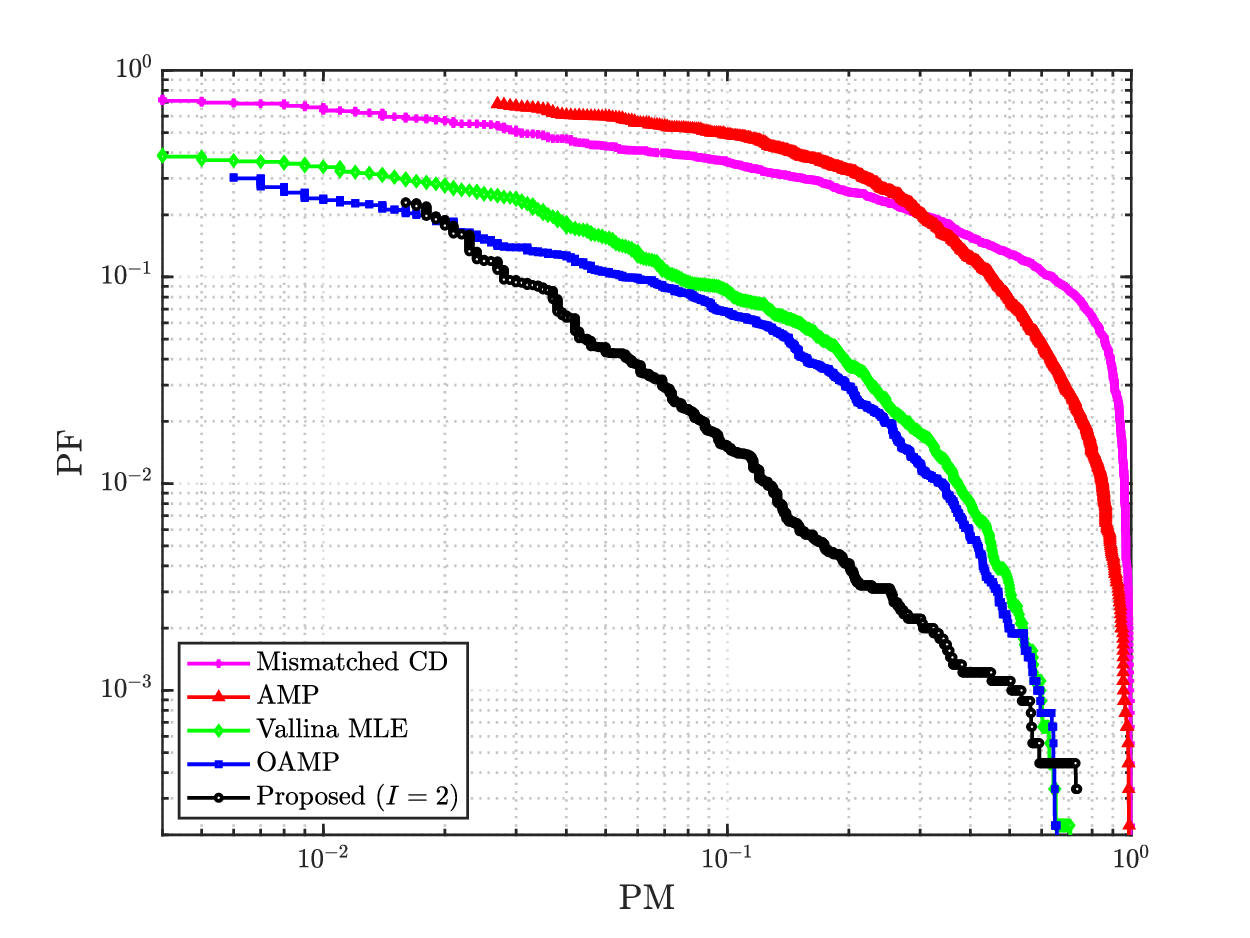}
	\caption{PM-PF trade-offs achieved by different  algorithms.}
	\label{PM-PF trade-offs}
	\vspace{-10pt}
\end{figure}

In Fig.~\ref{PM-PF trade-offs}, we compare the performance of different approaches in terms of PM and PF,
where $M=3$, $K=24$, and $L=6$. We can see that the proposed distributed algorithm outperforms all benchmarks. This superior performance stems from the accurate hybrid near-far field channel modeling and the covariance-based formulation that effectively exploits the spatial correlation structure of the hybrid near-far field channels. 
\begin{figure}[t]
	\centering
\includegraphics[width=0.85\columnwidth]{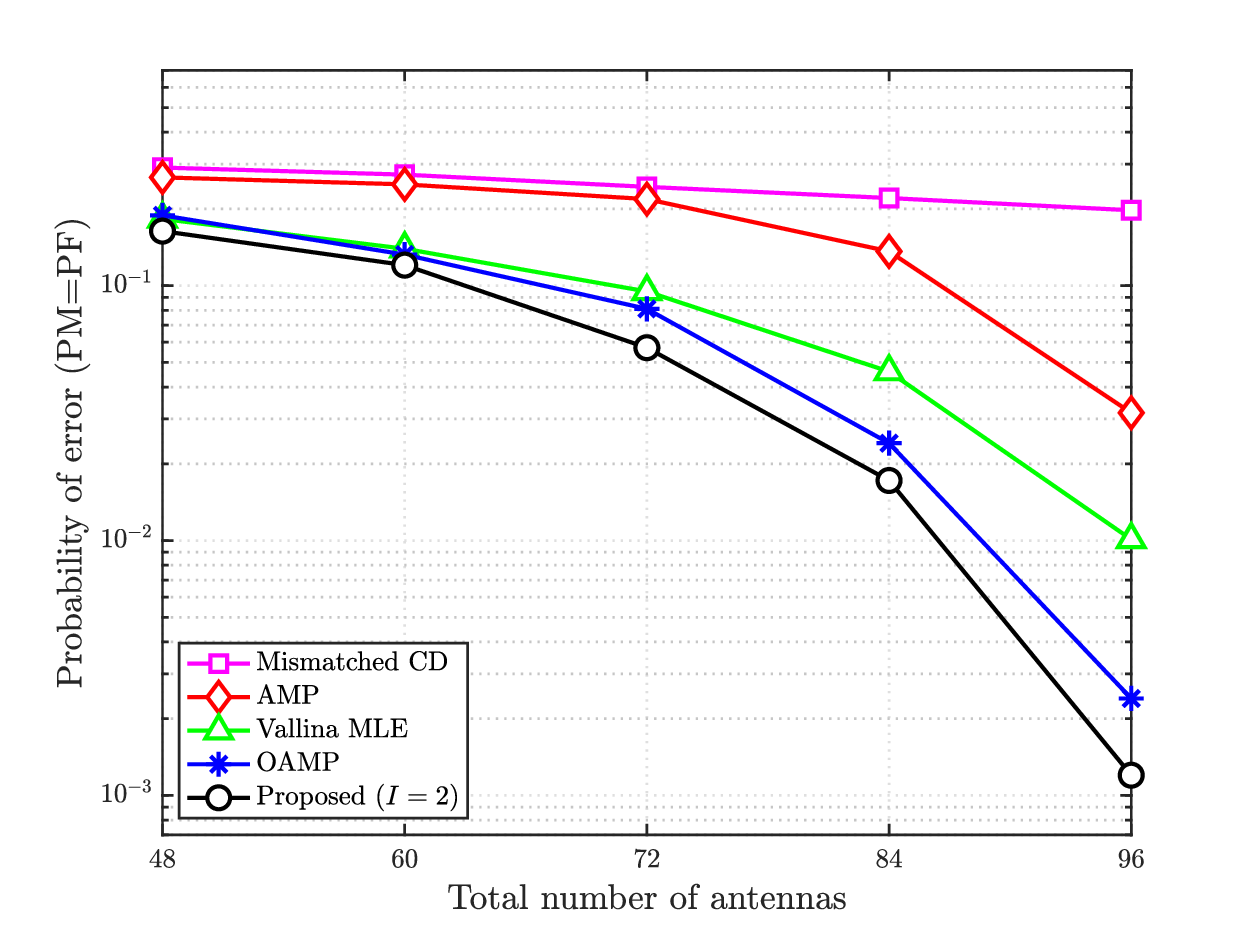}
	\caption{Probability of error versus total number of antennas.}
	\label{prob of error versus total number of antennas}
	\vspace{-10pt}
\end{figure}

Figure~\ref{prob of error versus total number of antennas} depicts the probability of error versus the total number of antennas,
where $M=3$ and $L=6$.
The total number of antennas varies from $48$ to $96$, with the antennas distributed equally among the APs. We can see that all approaches demonstrate improved detection performance as the total number of antennas increases,
with the proposed distributed algorithm consistently achieving superior performance. 

\begin{figure}[t]
	\centering
	\includegraphics[width=0.85\columnwidth]{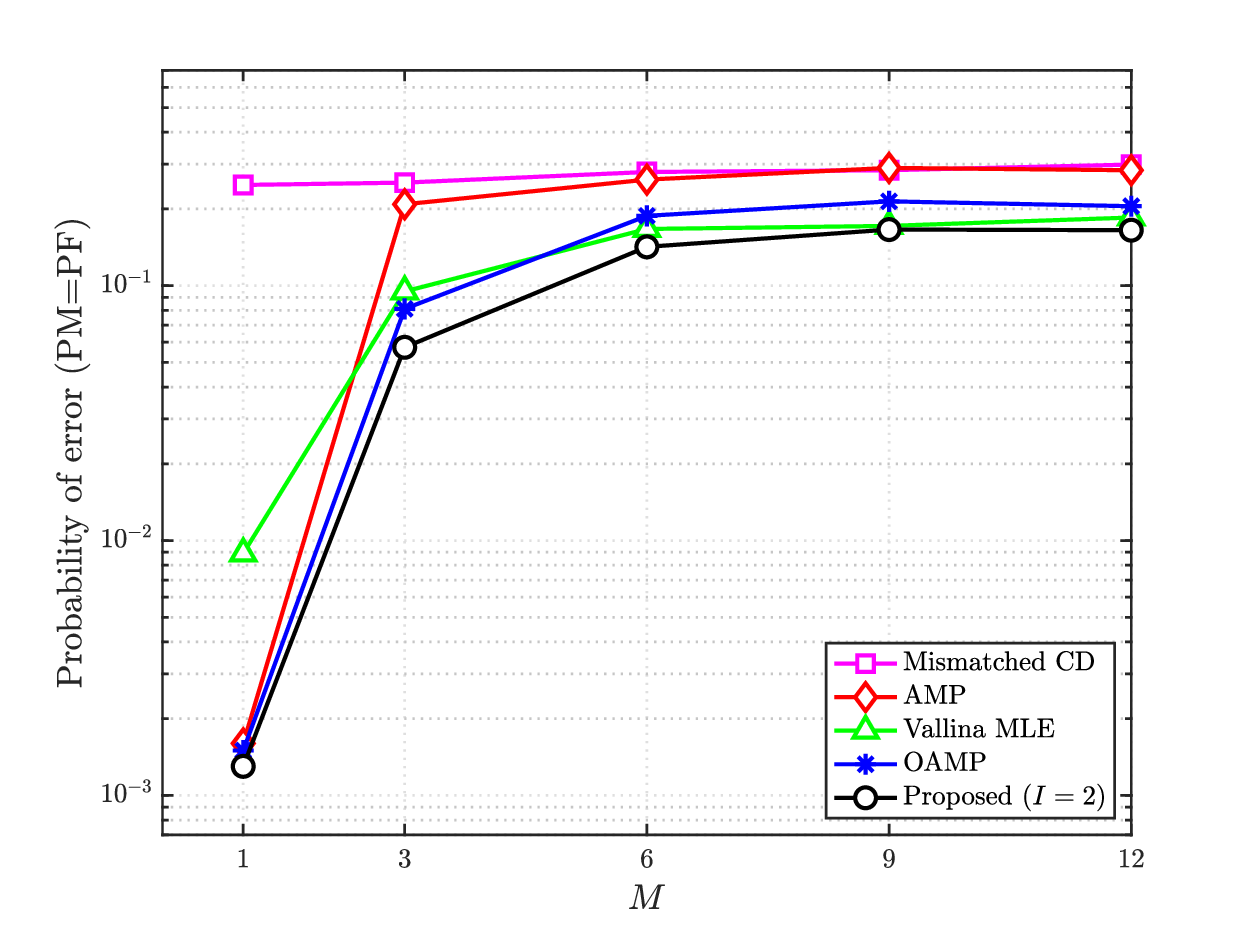}
	\caption{Probability of error versus $M$.}
	\label{prob of error versus number of APs}
	\vspace{-10pt}
\end{figure}

In addition to varying the total number of antennas, we further investigate the impact of the number of APs on detection performance.
Figure~\ref{prob of error versus number of APs} illustrates the probability of error versus the number of APs,
where the total number of antennas is fixed as $72$, and the signature sequence length is $L=6$. The number of APs varies from $1$ to $12$, with the antennas distributed equally among the APs. 
We can see that the proposed distributed algorithm consistently outperforms all benchmarks across different numbers of APs. Notably, all algorithms (except Mismatched CD) achieve their best performance when $M=1$, with performance degrading as $M$ increases.
This trend results from the trade-off between two competing factors: while increasing $M$ reduces the device-to-AP distance, which results in higher  SNR at each AP, it simultaneously decreases the Rayleigh distance at each AP due to fewer antennas per AP.
The reduction in near-field coverage outweighs the SNR benefits, leading to the overall performance degradation. We will further discuss the advantage of enlarging the near-field coverage in Section~\ref{sec:advantages_of_hybrid_near_far_field_channels}.

\begin{figure}[t]
	\centering
	\includegraphics[width=0.85\columnwidth]{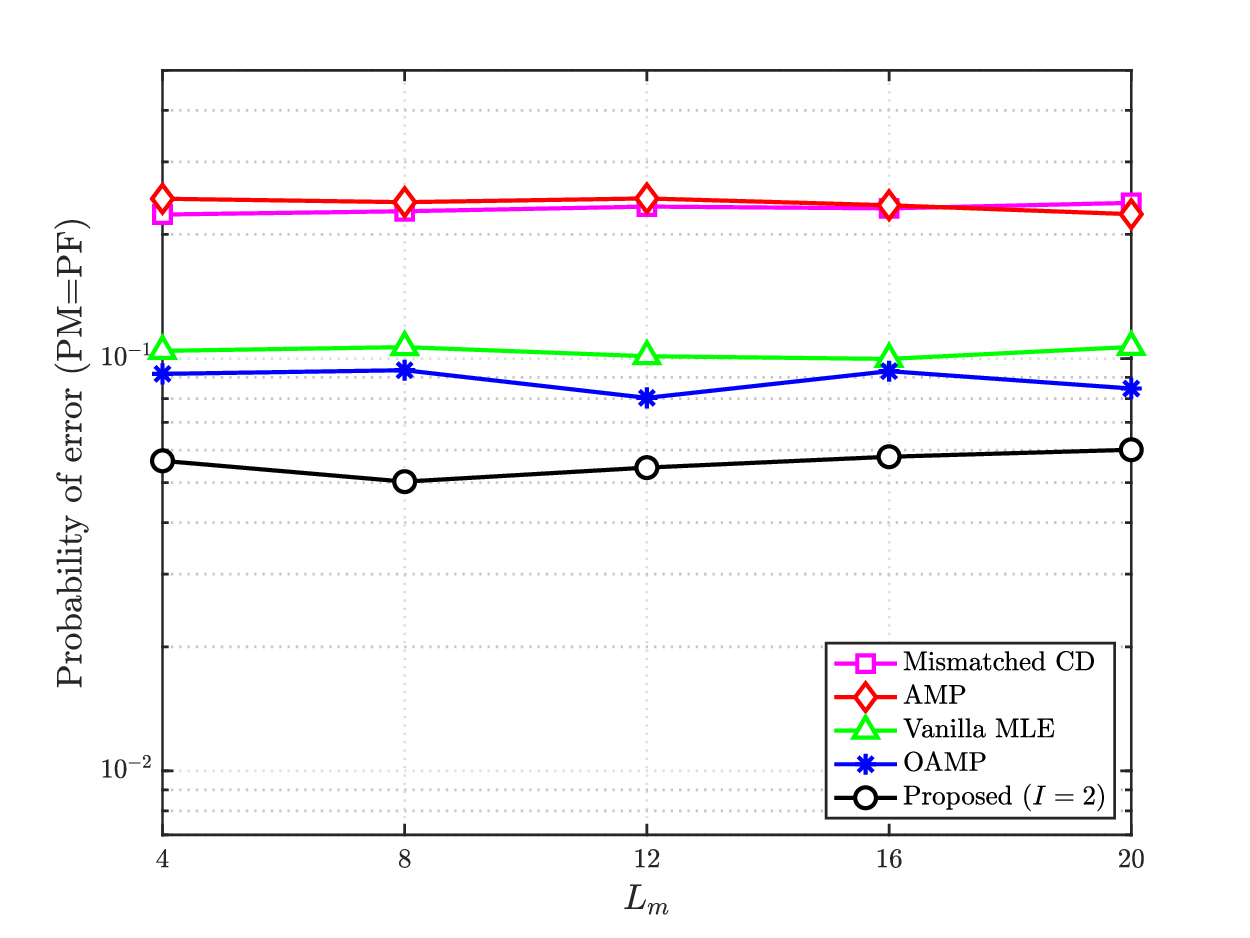}
	\caption{\textcolor{black}{Probability of error versus the number of scatterers $L_m$.}}
	\label{fig:scatter}
	\vspace{-10pt}
\end{figure}

\textcolor{black}{Furthermore, Fig.~\ref{fig:scatter} illustrates the probability of error versus the number of scatterers $L_m$, where $M=3$, $K=24$, and $L=6$. It can be observed that the proposed distributed algorithm consistently outperforms all benchmarks across different values of $L_m$, demonstrating the robustness of the proposed algorithm with respect to the number of scatterers.}

\begin{figure}[t]
	\centering
	\includegraphics[width=0.85\columnwidth]{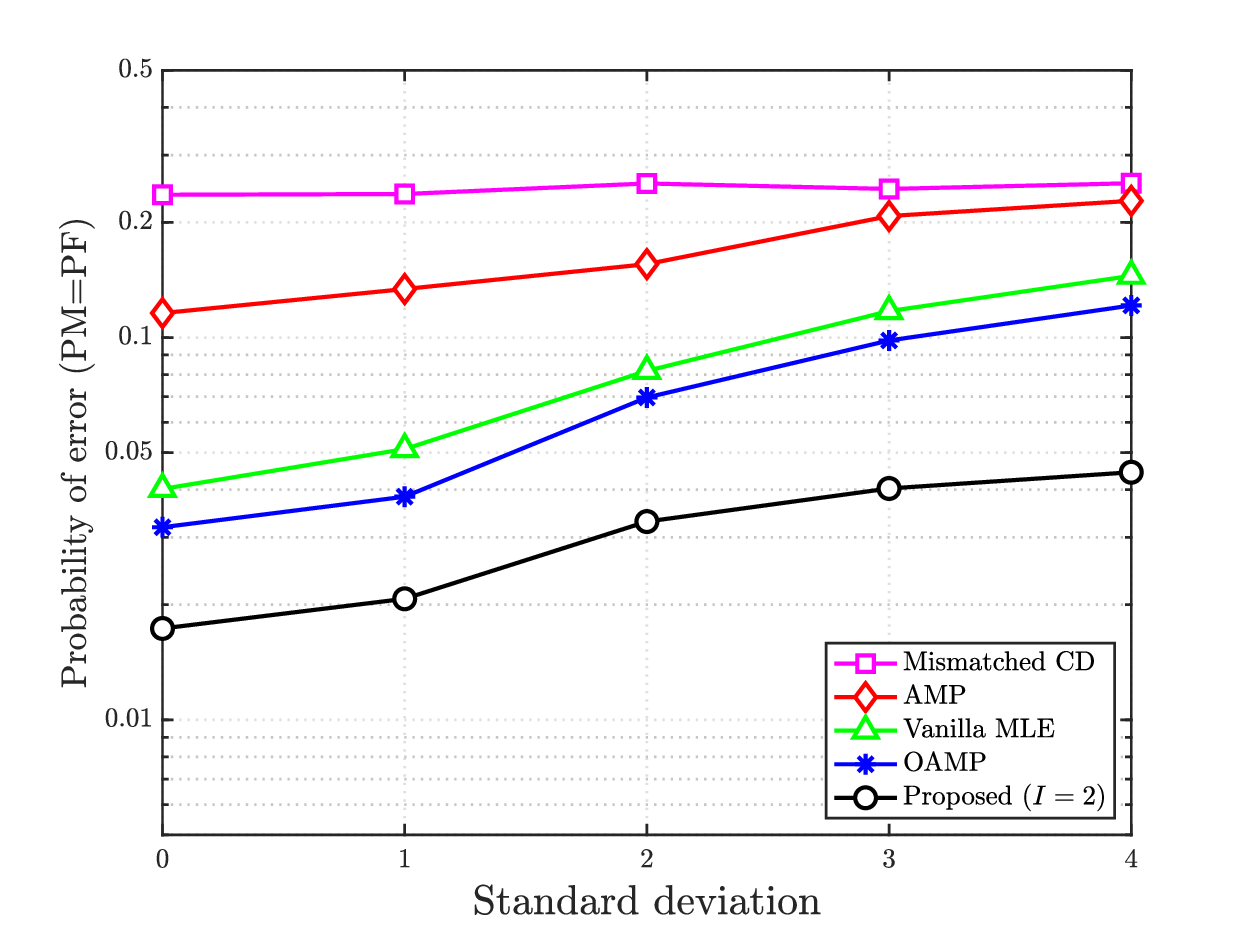}
	\caption{\textcolor{black}{Probability of error versus standard deviation $\sigma_{\text{loc}}$ of the location error.}}
	\label{fig:loc_error}
\end{figure}

\textcolor{black}{In practice, the device locations obtained during the registration phase may contain small errors, which are typically within a few meters with modern localization techniques \cite{peral2018survey}. To comprehensively evaluate the robustness of the proposed method, we provide additional simulation results in Fig.~\ref{fig:loc_error}. Specifically, we model the imperfect location by adding a Gaussian perturbation to the true position with standard deviation $\sigma_{\text{loc}}$ in meters. As shown in Fig.~\ref{fig:loc_error}, where $M=3$, $K=28$, and $L=6$, the proposed algorithm consistently outperforms all benchmarks across different levels of location error, demonstrating the robustness of the proposed algorithm with respect to location uncertainty.}

\subsection{ Computational Complexity and Communication Overhead Comparison}
\label{sec:complexity_comparison}

\textcolor{black}{Table~\ref{tab:complexity_benchmarks} summarizes the computational complexity and the runtime of all algorithms when $K=24$, $L=6$, and $M=3$. All experiments are implemented in MATLAB R2020a  on a computer equipped with an Intel Core i7-13700 CPU and 16 GB of memory. Specifically, the proposed algorithm requires $\mathcal{O}\!\left((LK)^2 \sum_{n=1}^{N} J_{m,n}\right)$ for all coordinate updates, which is independent of $M$ due to the parallel computation across different APs. Mismatched CD treats all channels as far-field and directly applies a rank-one update to the $L \times L$ covariance matrix, yielding $\mathcal{O}(MNL^2)$~\cite{li2022asynchronous}, \cite{ganesan2021clustering}. Vanilla MLE shares a similar update procedure as the proposed algorithm but operates in a centralized manner across $M$ APs with full-rank updates (i.e., $J_{m,n}=K$), which upon substituting into the Sherman--Morrison--Woodbury formula yields $\mathcal{O}(MNL^2K^3)$~\cite{Wang2026}, \cite{liu2024mle}. AMP involves standard matrix-vector multiplications with complexity $\mathcal{O}(MNLK)$~\cite{djelouat2021user}. OAMP involves a linear detection step that exploits the fast Hadamard transform to avoid explicit matrix multiplications, yielding $\mathcal{O}(MNK\log_2 N)$~\cite{cheng2020orthogonal}, and a nonlinear detection step dominated by a $K \times K$ matrix inversion per AP-device pair, yielding $\mathcal{O}(MNK^3)$~\cite{cheng2020orthogonal}. We can see that although the near-field channels impose much heavier computational burdens on the activity detection problem, the proposed algorithm does not incur much more complexity because of the parallel computation across multiple APs.}

\begin{table}[t]
\centering
\caption{\textcolor{black}{Computational Complexity Comparison}}
\label{tab:complexity_benchmarks}
\begin{tabular}{|c|c|c|}
\hline
\textbf{Algorithm} & \textbf{Complexity} & \textbf{Runtime (s)} \\
\hline
Proposed &  $\mathcal{O}\!\left((LK)^2 \sum_{n=1}^{N} J_{m,n}\right)$ & 1.62 \\
\hline
Mismatched CD & $\mathcal{O}\big(MNL^2\big)$ & 0.04\\
\hline
Vanilla MLE & $\mathcal{O}\big(MNL^2K^3\big)$ & 1.77 \\%
\hline
AMP & $\mathcal{O}\big(MNLK\big)$ & 0.69  \\
\hline
OAMP & $\mathcal{O}\big(MNK(\log_2 N+K^2)\big)$ & 1.52\\
\hline
\end{tabular}%
\end{table}

Table~\ref{tab:runtime_comparison} further demonstrates the parallel processing advantage of the proposed distributed algorithm by comparing the total computation time  across different AP numbers with $K=32$, and $L=8$. We can see that as $M$ increases, the total computation time of the centralized approach grows dramatically from 7.41 s to 114.96 s due to the increasing computational load at the CPU. In contrast, the distributed algorithm maintains relatively stable computation time from 2.59 s to 7.82 s by leveraging parallel computation. The speedup increases significantly with more APs, achieving a remarkable 14.70× improvement when $M=12$. This clearly demonstrates that the distributed algorithm's advantage becomes more pronounced as the system scales up.

\begin{table}[t]
	\centering
	\caption{Computation Time Comparison}
	\label{tab:runtime_comparison}
	\vspace{-5pt}
	\begin{tabular}{|c|c|c|c|}
	\hline
	$M$ & \textbf{Centralized (s)} & \textbf{Proposed (s)} & \textbf{Speedup} \\
	\hline
	3 & 7.41 & 2.59 & 2.86× \\
	\hline
	6  & 22.83 & 3.59 & 6.36× \\
	\hline
	9  & 52.59 & 5.35 & 9.83× \\
	\hline
	12 & 114.96 & 7.82 & 14.70× \\
	\hline
	\end{tabular}
	\vspace{-10pt}
	\end{table}
	\,

Figure~\ref{transmission bit} illustrates the probability of error versus the total number of bits transmitted over fronthaul links  with $M=3$, $K=24$, and $L=6$. For the centralized approach, we consider different numbers of bits from $ \{6, 8, 10\}$ for quantizing each entry of $\mathbf{Y}_m$, while for the proposed distributed algorithm, we can set a much smaller number of bits for quantization each entry, since the dynamic range $[0,1]$ for each entry of $\mathbf{a}$ and $\boldsymbol{\theta}_m$ is much smaller. In particular, the number of quantization bits per entry is set as $3$, and each AP transmits the local detection results of 60, 80, and 100 devices with the largest SNR.  We can see that the proposed distributed algorithm requires substantially lower communication overhead  while approximating the ideal performance assuming perfect transmission over the fronthaul links without quantization.

\begin{figure}[t]
	\centering
	\includegraphics[width=0.85\columnwidth]{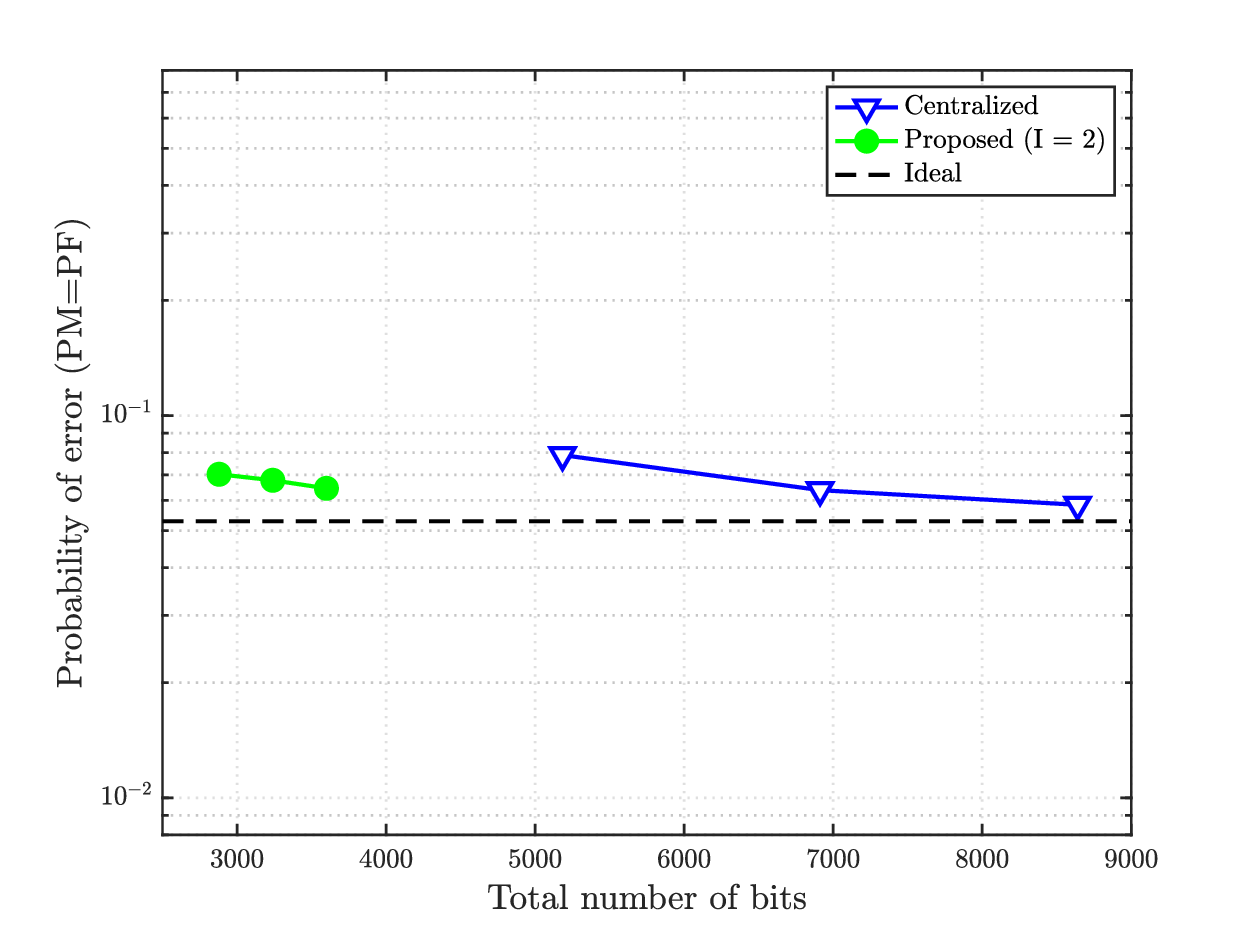}
	\caption{Probability of error versus total number of bits.}
	\label{transmission bit}
	\vspace{-10pt}
\end{figure}

\subsection{Advantages of Hybrid Near-Far Field Channels} 
\label{sec:advantages_of_hybrid_near_far_field_channels}
\begin{figure}[t]
	\centering
	\includegraphics[width=0.85\columnwidth]{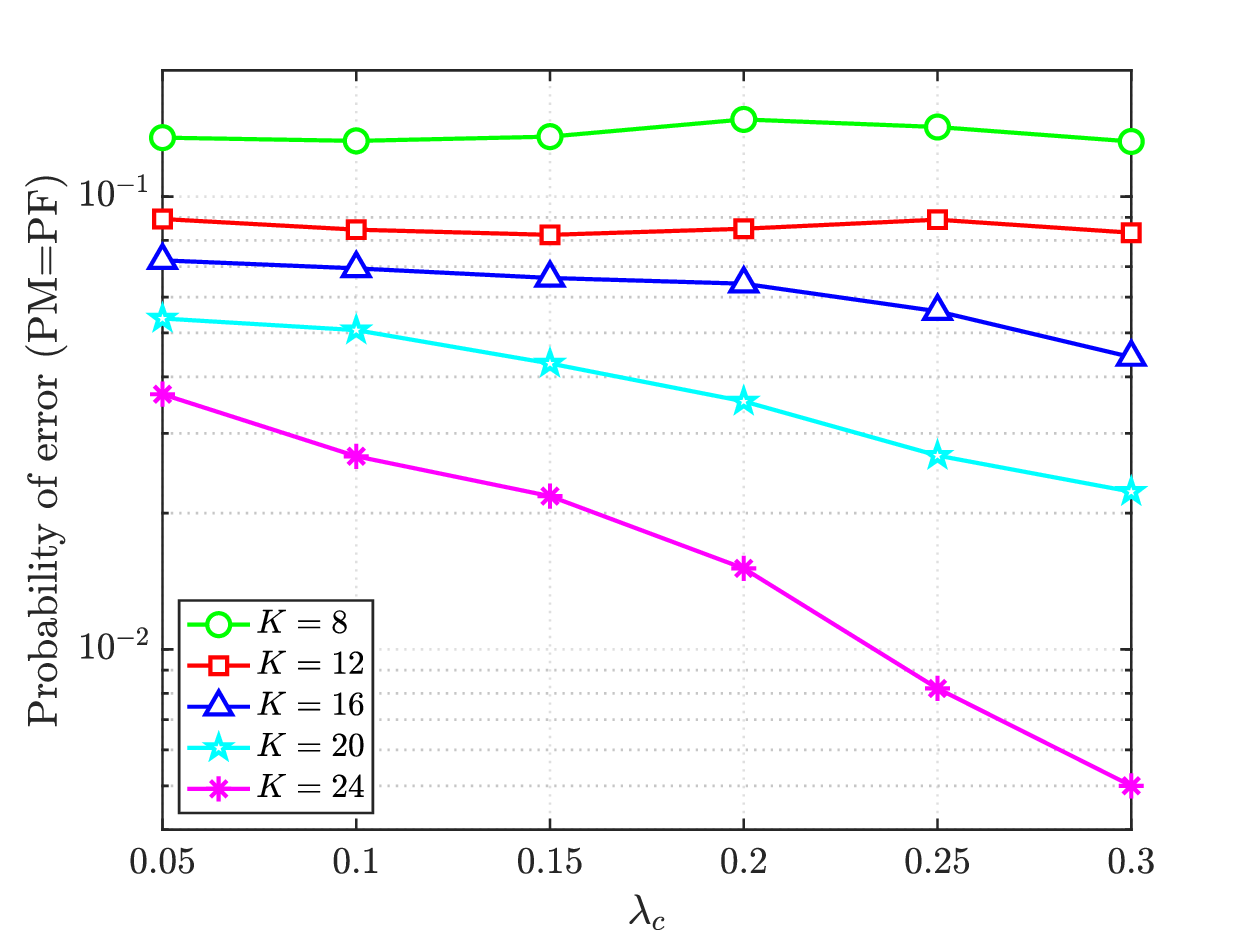}
	\caption{Probability of error versus $\lambda_c$.}
	\label{prob of error versus lambda}
	\vspace{-10pt}
\end{figure}

\begin{figure}[t]
	\centering
	\begin{subfigure}[t]{0.26\textwidth}
		\includegraphics[width=\textwidth]{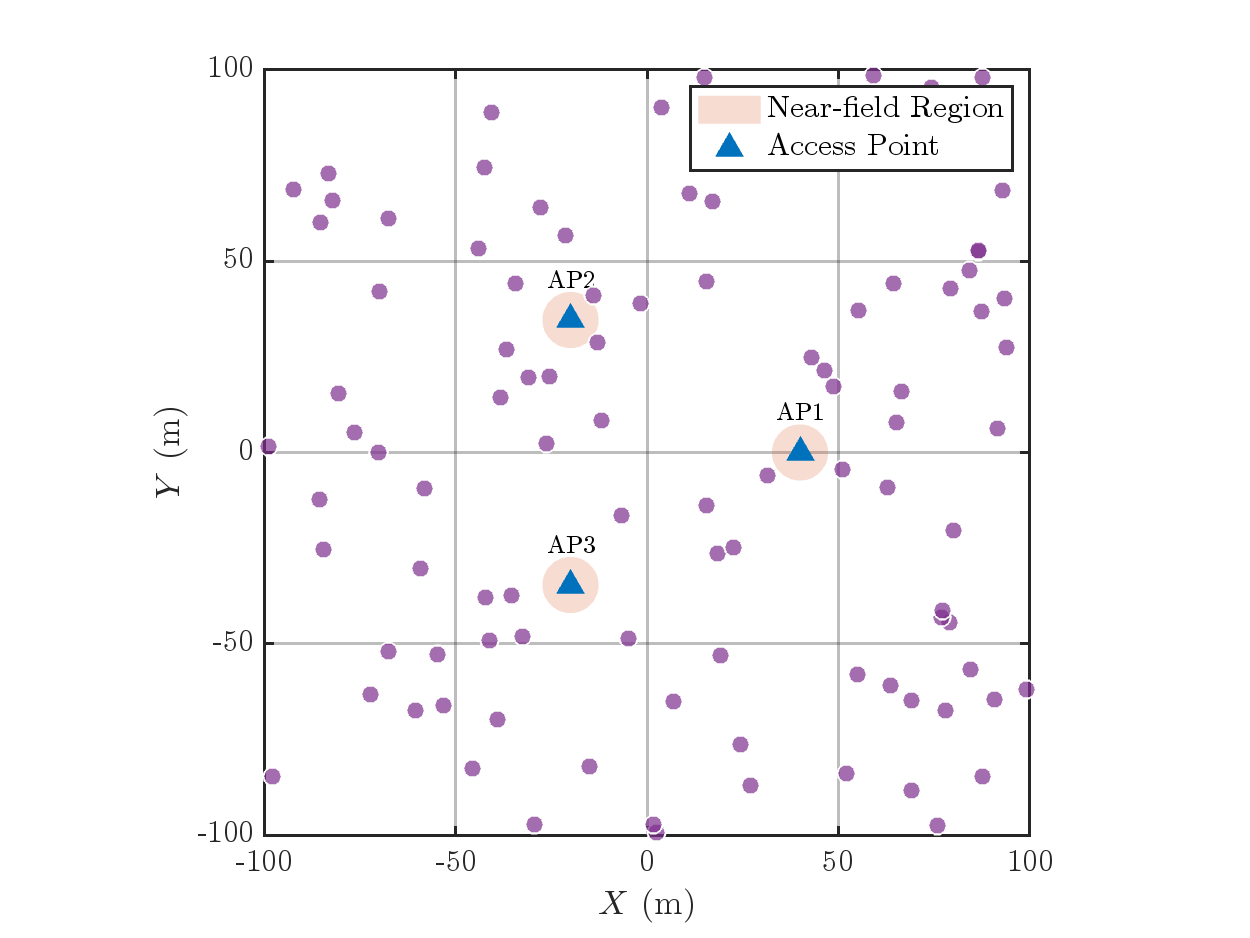}
		\caption{}
		\label{subfig:small_nfc}
	\end{subfigure}
	\hspace{-25pt}
	\begin{subfigure}[t]{0.26\textwidth}
		\includegraphics[width=\textwidth]{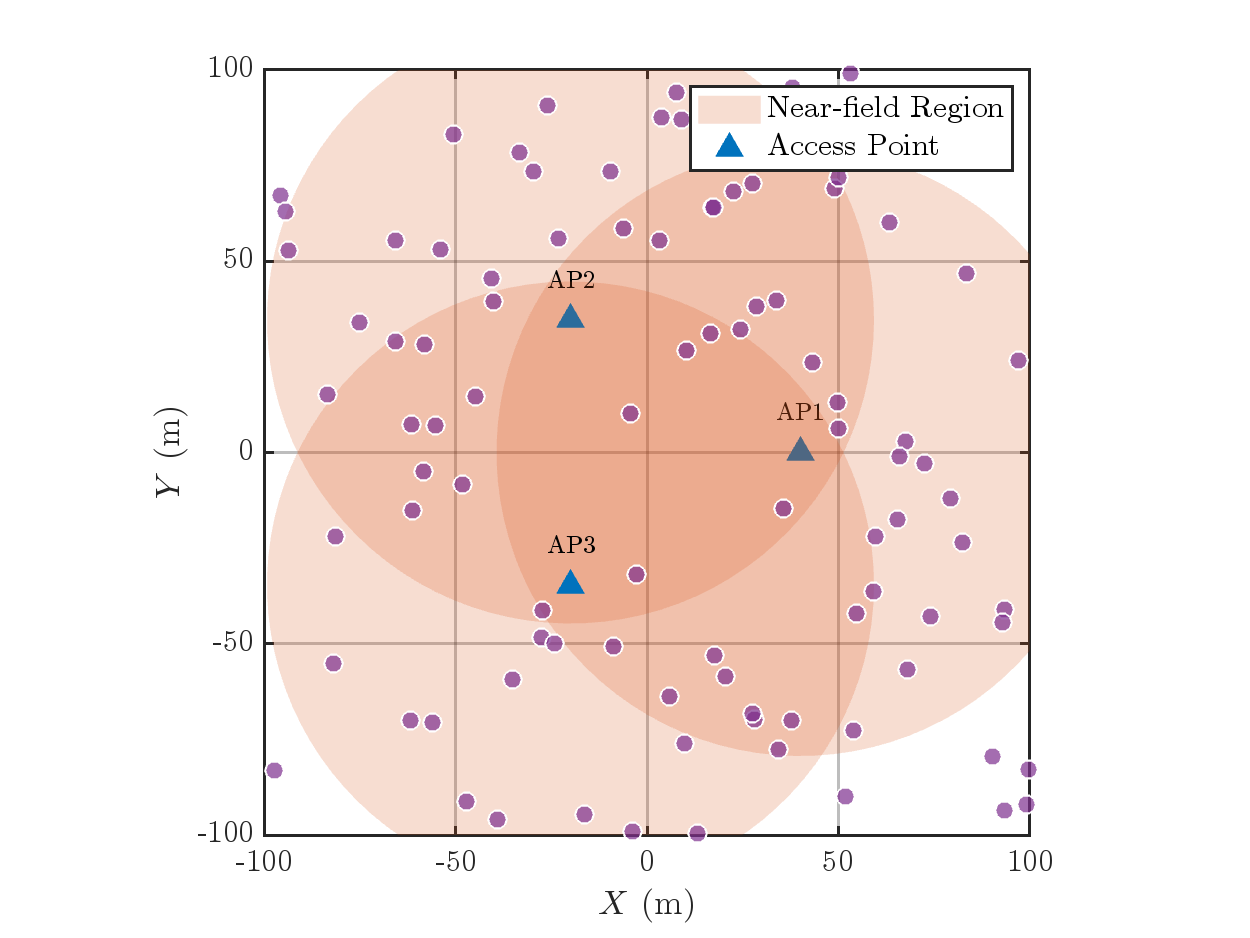}
		\caption{}
		\label{subfig:large_nfc}
	\end{subfigure}
	\caption{Visualization of near-field coverage in the $200 \times 200$ m$^2$ area. (a) Rayleigh distance $7$ m with $K=8$ and $\lambda_c=0.3$ m. (b) Rayleigh distance $79$ m with $K=24$ and $\lambda_c=0.3$ m.}
	\label{fig:nfc_coverage}
	\vspace{-10pt}
\end{figure}
Since the near-field channels are closely related to the Rayleigh distance $2D^{2}/\lambda_\text{c}$, where $D=(K-1)\lambda_\text{c}/2$, it is essential to explore the impact of both $\lambda_\text{c}$ and  $K$ on detection performance. To gain deeper insights into this relationship, Fig.~\ref{prob of error versus lambda} illustrates the probability of error versus $\lambda_\text{c}$ for different $K$'s, where  $M=3$ and $L=8$. When $K=8$ or $12$, it can be seen that the detection performance remains relatively stable across different wavelengths, showing minimal sensitivity to $\lambda_\text{c}$ variations.
However, for larger $K=16, 20, 24$, increasing $\lambda_\text{c}$ leads to significant performance improvements. This performance behavior is fundamentally explained by the expansion of near-field coverage, as visualized in Fig.~\ref{fig:nfc_coverage}.
The figure compares two scenarios: (a) a limited near-field coverage with Rayleigh distance of $7$ m when $K=8$ and $\lambda_\text{c}=0.3$ m, and (b) an extensive near-field coverage with Rayleigh distance of $79$ m when $K=24$ and $\lambda_\text{c}=0.3$ m.
The stark contrast between these scenarios demonstrates that when increasing the wavelength $\lambda_\text{c}$ can significantly expands the near-field region, it will incorporate more near-field channels, leading to improved detection performance, as suggested by Proposition~\ref{prop:mutual_similarity}.

\begin{figure}[t]
	\centering
	\includegraphics[width=0.85\columnwidth]{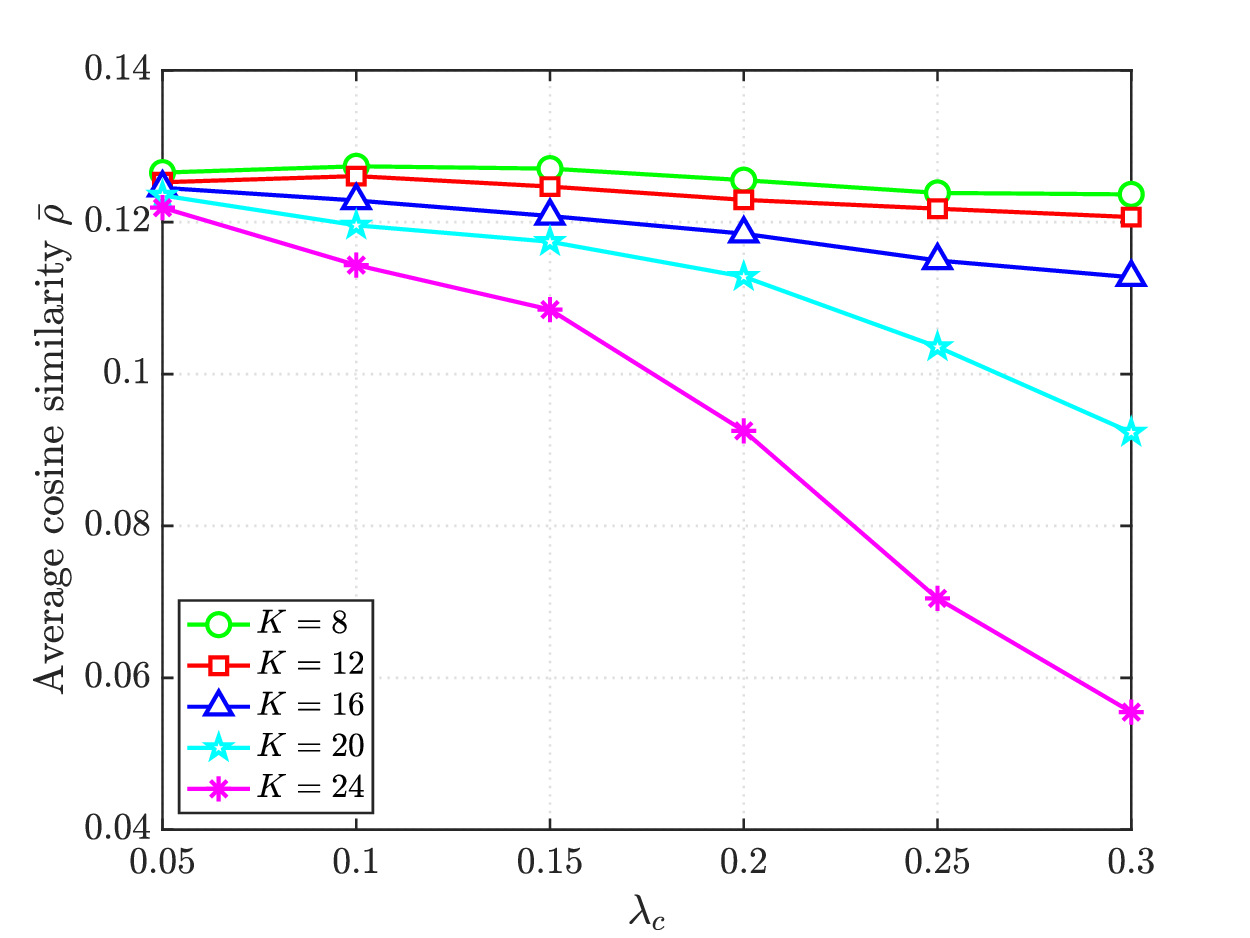}
	\caption{\textcolor{black}{Average cosine similarity $\bar{\rho}$ versus $\lambda_\text{c}$.}}
	\label{fig:cos_simi}
\end{figure}

\textcolor{black}{To validate Proposition~\ref{prop:mutual_similarity}, we define the average cosine similarity $\bar{\rho} = \frac{1}{N(N-1)} \sum_{n \neq n'} \rho_{n,n'}$, where $\rho_{n,n'}$ is the cosine similarity between the columns of $\boldsymbol{\Psi}_m$ defined in~(\ref{pho similarity}). Figure~\ref{fig:cos_simi}  plots $\bar{\rho}$ versus the carrier wavelength $\lambda_\text{c}$ under different numbers of antennas, where $M=3$ and $L=8$. In particular, as $\lambda_\text{c}$ increases from 0.05 to 0.3, the Rayleigh distance increases from 13 m to 79~m when the antenna number $K = 24$, while the Rayleigh distance increases from 1 m to 7~m when $K = 8$. It can be observed from this figure that as $\lambda_\text{c}$ increases, which means that the near-field proportion increases, $\bar{\rho}$ decreases under different numbers of antennas. Especially, when $K$ is larger, since the near-field percentage increases more dramatically as $\lambda_\text{c}$ increases, $\bar{\rho}$ decreases more notably. This is consistent with Proposition~\ref{prop:mutual_similarity}, which establishes that near-field channels yield smaller cosine similarities. The similar trends in Figures~\ref{prob of error versus lambda} and \ref{fig:cos_simi} both justify that the cosine similarity (\ref{pho similarity}) used in Proposition~\ref{prop:mutual_similarity} is a good indicator of detection performance.}

\begin{figure*}
	\centering
    	\begin{subfigure}[t]{0.325\textwidth}
		\includegraphics[width=\textwidth]{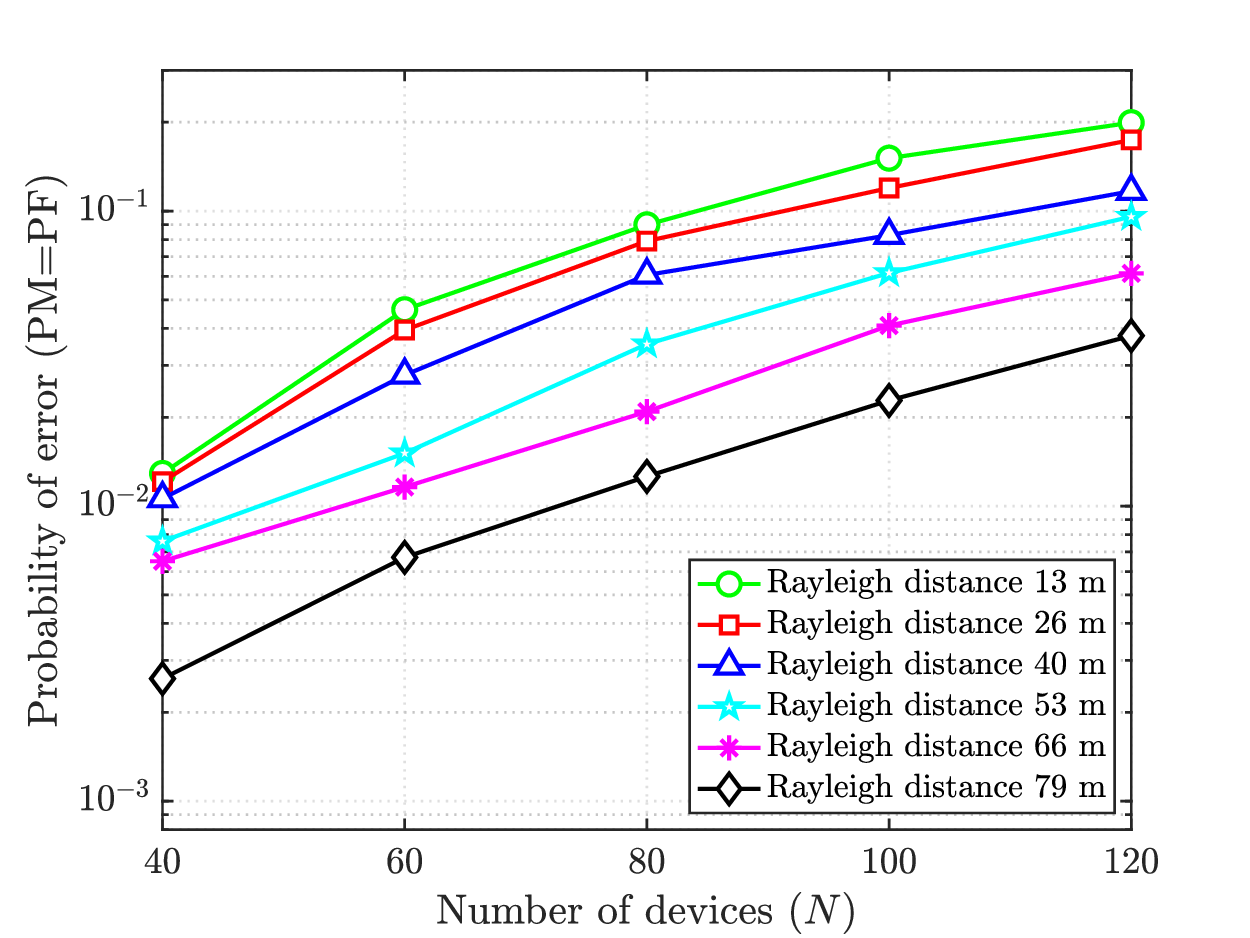}
		\caption{Probability of error versus $N$, when $L=6$ and $\alpha=0.1$.}
		\label{subfig-a}
	\end{subfigure}
	\begin{subfigure}[t]{0.325\textwidth}
		\includegraphics[width=\textwidth]{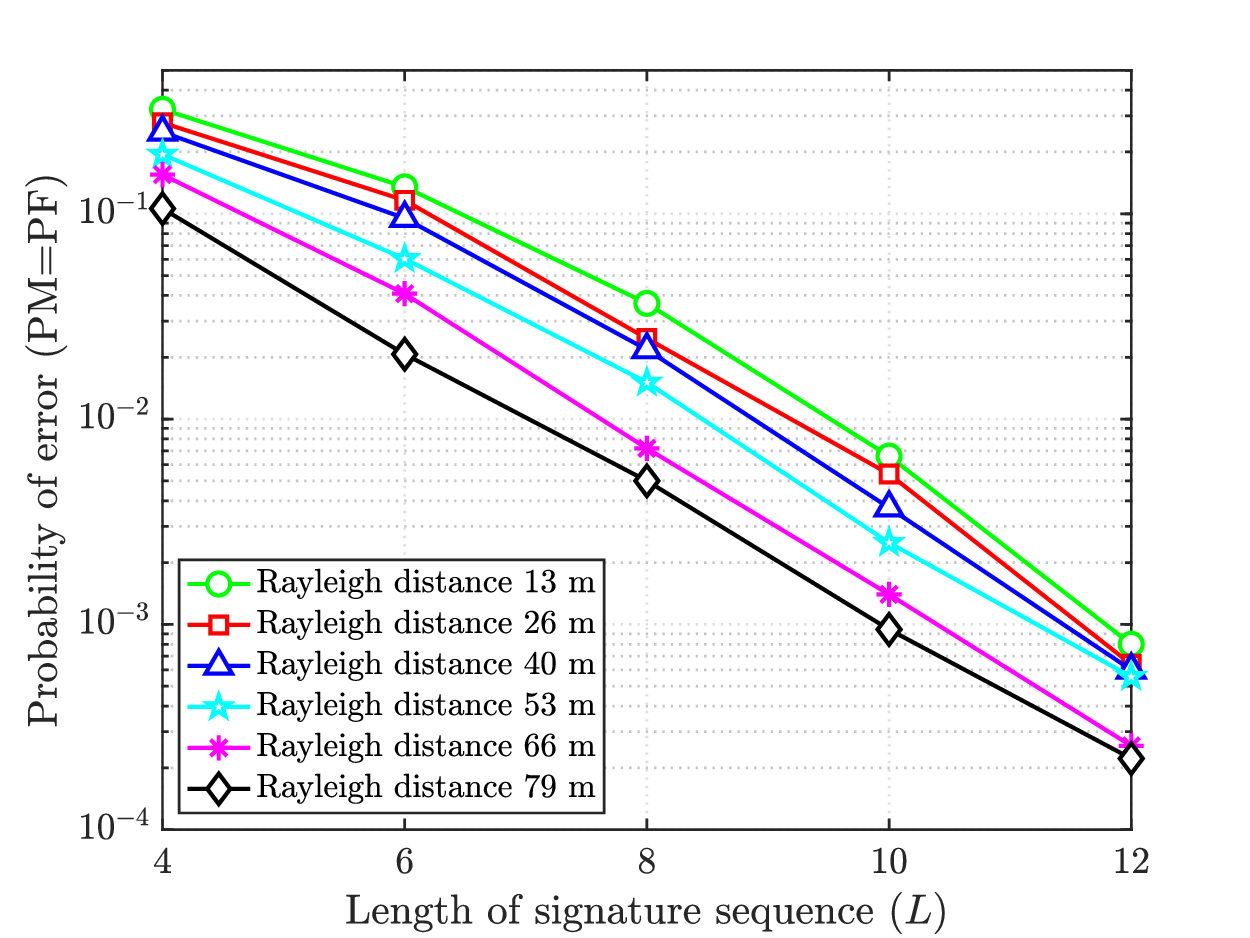}
		\caption{Probability of error versus $L$, when $N=100$ and $\alpha=0.1$}
		\label{subfig-b}
	\end{subfigure}
    \begin{subfigure}[t]{0.325\textwidth}
		\includegraphics[width=\textwidth]{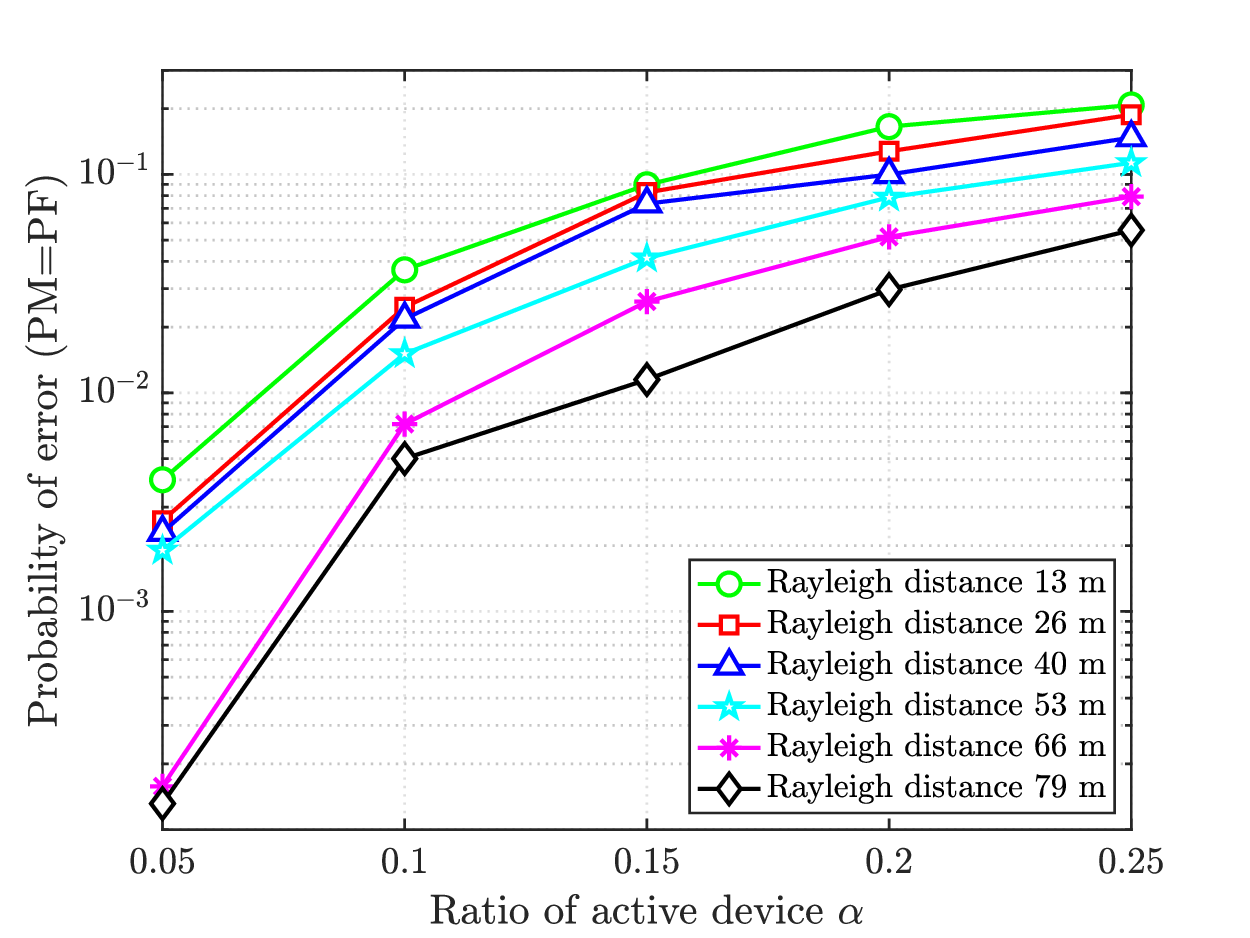}
		\caption{Probability of error versus $\alpha$, when $N=100$ and $L=8$.}
		\label{subfig-c}
	\end{subfigure}
	\caption{Performance comparison across different Rayleigh distances.}
	\label{analysis simulation}
	\vspace{-5pt}
\end{figure*}
Finally, Figure~\ref{analysis simulation} comprehensively illustrates the detection performance as a function of the number of devices, signature sequence length, and the active device ratio  across different Rayleigh distances, where  $M = 3$, and $K = 24$. The Rayleigh distance is $13$ m, $26$ m, $40$ m, $53$ m, $66$ m, and $79$ m, corresponding to $\lambda_\text{c}=0.05$~m, $0.1$ m, $0.15$ m, $0.20$, $0.25$, and $0.3$ m, respectively.  It can be observed that a
longer Rayleigh distance, which incorporates more near-field
channels, leads to better detection performance under different
system parameters, further validating the theoretical analysis in Section~\ref{sec:detection_performance}.

\section{Conclusion}
\label{sec:conclusion}
This paper investigated hybrid near-far field activity detection in cell-free massive MIMO systems. We first established a covariance-based formulation that captures the statistical property of the hybrid near-far field channels. Theoretical analysis demonstrated that increasing the proportion of near-field channels enhances the detection performance. Then, a unified distributed  algorithm was proposed with convergence guarantees. Extensive simulation results corroborated the theoretical analysis and showed that the proposed distributed algorithm achieves detection performance close to that of the centralized approach within just a few iterations while requiring much shorter computation time and lower communication overhead.

\appendices
\section{Proof of Proposition~\ref{prop:mutual_similarity}}
\label{app1}
For devices $n$ and $n'$, the cosine similarity is given by
\begin{align}
\label{prove proposition1}
    &\frac{\boldsymbol{\psi}_{m,n}^{\text{H}} \boldsymbol{\psi}_{m,n'}}{\|\boldsymbol{\psi}_{m,n}\|_2 \|\boldsymbol{\psi}_{m,n'}\|_2} \nonumber \\
	=&\frac{\text{tr}(\mathbf{\Xi}_{m,n} \mathbf{\Xi}_{m,n'})}{\|\mathbf{\Xi}_{m,n}\|_F \|\mathbf{\Xi}_{m,n'}\|_F} \times \left(\frac{|\mathbf{s}_n^{\text{H}} \mathbf{s}_{n'}|}{\|\mathbf{s}_n\|_2 \|\mathbf{s}_{n'}\|_2}\right)^2.
\end{align}
Let us define $\hat{\rho}_{n,n'} \triangleq \frac{\text{tr}(\mathbf{\Xi}_{m,n} \mathbf{\Xi}_{m,n'})}{\|\mathbf{\Xi}_{m,n}\|_F \|\mathbf{\Xi}_{m,n'}\|_F}$ as the matrix correlation factor. Based on (\ref{prove proposition1}), to compare the cosine similarity across different cases, it suffices to compare $\hat{\rho}_{n,n'}$ for different cases.

\textbf{Case 1: Two far-field devices:} 
With $\mathbf{\Xi}_{m,n} = g_{m,n}\mathbf{I}_K$ and $\mathbf{\Xi}_{m,n'} = g_{m,n'}\mathbf{I}_K$, we have
\begin{align}
\hat{\rho}_{\text{FF-FF}} = \frac{\text{tr}(g_{m,n}\mathbf{I}_K \cdot g_{m,n'}\mathbf{I}_K)}{\|g_{m,n}\mathbf{I}_K\|_F \|g_{m,n'}\mathbf{I}_K\|_F}  = 1.
\end{align}

\textbf{Case 2: Two near-field devices:} With $\mathbf{\Xi}_{m,n} = \mathbf{R}_{m,n}$ and $\mathbf{\Xi}_{m,n'} = \mathbf{R}_{m,n'}$, we have
\begin{align}
	\label{large eigenvalue}
\hat{\rho}_{\text{NF-NF}} = \frac{\text{tr}(\mathbf{R}_{m,n} \mathbf{R}_{m,n'})}{\|\mathbf{R}_{m,n}\|_F \|\mathbf{R}_{m,n'}\|_F}\leq 1;
\end{align}
where the inequality follows from the Cauchy-Schwarz inequality
for matrices.

\textbf{Case 3: One near-field and one far-field device:} With $\mathbf{\Xi}_{m,n} = \mathbf{R}_{m,n}$ and $\mathbf{\Xi}_{m,n'} = g_{m,n'}\mathbf{I}_K$, we have
\begin{align}
\hat{\rho}_{\text{NF-FF}} = \frac{\text{tr}(\mathbf{R}_{m,n} \cdot g_{m,n'}\mathbf{I}_K)}{\|\mathbf{R}_{m,n}\|_F \|g_{m,n'}\mathbf{I}_K\|_F}< 1,
\end{align}
where the inequality strictly holds due to $\mathbf{R}_{m,n}$ is not a diagonal matrix.

Combining the three cases, we have $\hat{\rho}_{\text{NF-NF}} \leq \hat{\rho}_{\text{FF-FF}}$, and $\hat{\rho}_{\text{NF-FF}} < \hat{\rho}_{\text{FF-FF}}$. Since the signature sequence factor $\left(\frac{|\mathbf{s}_n^{\text{H}} \mathbf{s}_{n'}|}{\|\mathbf{s}_n\|_2 \|\mathbf{s}_{n'}\|_2}\right)^2$ in (\ref{prove proposition1}) is identical across all cases, we have $\rho_{\text{NF-NF}} \leq \rho_{\text{FF-FF}}$, and $\rho_{\text{NF-FF}} < \rho_{\text{FF-FF}}$.

\section{\textcolor{black}{Detailed Expressions of $p_{\mathrm{approx}}(d)$ and $\hat{\rho}_{i,m}(\mathbf{a})$}}
\textcolor{black}{The detailed expressions of $p_{\mathrm{approx}}(d)$ and $\hat{\rho}_{i,m}(\mathbf{a})$ are given in (\ref{eq:distributed inexact}) and (\ref{eq:centralized_obj_terms}) on the top of the next page, respectively.}
\label{appB}
\begin{figure*}
	\vspace{-6pt}
\begin{align}\label{eq:distributed inexact}
&\quad\, \,\:\: p_{\mathrm{approx}}(d)=d \rho_{1}(\boldsymbol{\theta}_{m}) + d^2 \rho_{2}(\boldsymbol{\theta}_{m}) + d^3 \rho_{3}(\boldsymbol{\theta}_{m}) + d^4 \rho_{4}(\boldsymbol{\theta}_{m}), \nonumber\\
&\quad\quad\quad\rho_{1}(\boldsymbol{\theta}_{m}) = \text{tr} \left( \mathbf{X}_{m,n}^\text{H} \tilde{\mathbf{C}}_{m}^{-1} \mathbf{X}_{m,n} \right) - 2 \, \text{Re} \left( \left( \mathbf{y}_{m} - \tilde{\bar{\mathbf{y}}}_{m} \right)^\text{H} \tilde{\mathbf{C}}_{m}^{-1} \left( \boldsymbol{\varpi}_{m,n} \otimes \mathbf{s}_n \right) \right) \nonumber\\
&\quad\quad\quad\quad\quad\quad\quad - \left( \mathbf{y}_{m} - \tilde{\bar{\mathbf{y}}}_{m} \right)^\text{H} \tilde{\mathbf{C}}_{m}^{-1} \mathbf{X}_{m,n} \mathbf{X}_{m,n}^\text{H} \tilde{\mathbf{C}}_{m}^{-1} \left( \mathbf{y}_{m} - \tilde{\bar{\mathbf{y}}}_{m} \right) + \lambda_{m, n}^{(i-1)} + \mu \left(\theta_{m,n}-a_{n}^{(i-1)}\right), \nonumber\\
&\quad\quad\quad\rho_{2}(\boldsymbol{\theta}_{m}) = \left( \boldsymbol{\varpi}_{m,n} \otimes \mathbf{s}_n \right)^\text{H} \tilde{\mathbf{C}}_{m}^{-1} \left( \boldsymbol{\varpi}_{m,n} \otimes \mathbf{s}_n \right) + 2 \, \text{Re} \left( \left( \mathbf{y}_{m} - \tilde{\bar{\mathbf{y}}}_{m} \right)^\text{H} \tilde{\mathbf{C}}_{m}^{-1} \mathbf{X}_{m,n} \mathbf{X}_{m,n}^\text{H} \tilde{\mathbf{C}}_{m}^{-1} \left( \boldsymbol{\varpi}_{m,n} \otimes \mathbf{s}_n \right) \right) \nonumber\\
&\quad\quad\quad\quad\quad\quad\quad + \left( \mathbf{y}_{m} - \tilde{\bar{\mathbf{y}}}_{m} \right)^\text{H} \tilde{\mathbf{C}}_{m}^{-1} \mathbf{X}_{m,n} \mathbf{X}_{m,n}^\text{H} \tilde{\mathbf{C}}_{m}^{-1} \mathbf{X}_{m,n} \mathbf{X}_{m,n}^\text{H} \tilde{\mathbf{C}}_{m}^{-1} \left( \mathbf{y}_{m} - \tilde{\bar{\mathbf{y}}}_{m} \right)+ \frac{\mu}{2}, \nonumber\\
&\quad\quad\quad\rho_{3}(\boldsymbol{\theta}_{m}) = -2 \, \text{Re} \left( \left( \mathbf{y}_{m} - \tilde{\bar{\mathbf{y}}}_{m} \right)^\text{H} \tilde{\mathbf{C}}_{m}^{-1} \mathbf{X}_{m,n} \mathbf{X}_{m,n}^\text{H} \tilde{\mathbf{C}}_{m}^{-1} \mathbf{X}_{m,n} \mathbf{X}_{m,n}^\text{H} \tilde{\mathbf{C}}_{m}^{-1} \left( \boldsymbol{\varpi}_{m,n} \otimes \mathbf{s}_n \right) \right) \nonumber\\
&\quad\quad\quad\quad\quad\quad\quad - \left( \boldsymbol{\varpi}_{m,n} \otimes \mathbf{s}_n \right)^\text{H} \tilde{\mathbf{C}}_{m}^{-1} \mathbf{X}_{m,n} \mathbf{X}_{m,n}^\text{H} \tilde{\mathbf{C}}_{m}^{-1} \left( \boldsymbol{\varpi}_{m,n} \otimes \mathbf{s}_n \right), \nonumber\\
&\quad\quad\quad\rho_{4}(\boldsymbol{\theta}_{m}) = \left( \boldsymbol{\varpi}_{m,n} \otimes \mathbf{s}_n \right)^\text{H} \tilde{\mathbf{C}}_{m}^{-1} \mathbf{X}_{m,n} \mathbf{X}_{m,n}^\text{H} \tilde{\mathbf{C}}_{m}^{-1} \mathbf{X}_{m,n} \mathbf{X}_{m,n}^\text{H} \tilde{\mathbf{C}}_{m}^{-1} \left( \boldsymbol{\varpi}_{m,n} \otimes \mathbf{s}_n \right).
\end{align}
\vspace{-5pt}
\hrulefill
\end{figure*}

\begin{figure*}
	\vspace{-6pt}
\begin{align}
\label{eq:centralized_obj_terms}
\hat{\rho}_{1,m}(\mathbf{a}) &= \text{tr} \left( \mathbf{X}_{m,n}^\text{H} \mathbf{C}_{m}^{-1} \mathbf{X}_{m,n} \right) - 2 \, \text{Re} \left( \left( \mathbf{y}_{m} - \bar{\mathbf{y}}_{m} \right)^\text{H} \mathbf{C}_{m}^{-1} \left( 
	\boldsymbol{\varpi}_{m,n} \otimes \mathbf{s}_n \right) \right) \nonumber \\
&\quad - \left( \mathbf{y}_{m} - \bar{\mathbf{y}}_{m} \right)^\text{H} \mathbf{C}_{m}^{-1} \mathbf{X}_{m,n} \mathbf{X}_{m,n}^\text{H} \mathbf{C}_{m}^{-1} \left( \mathbf{y}_{m} - \bar{\mathbf{y}}_{m} \right), \nonumber \\
\hat{\rho}_{2,m}(\mathbf{a}) &= \left( \boldsymbol{\varpi}_{m,n} \otimes \mathbf{s}_n \right)^\text{H} \mathbf{C}_{m}^{-1} \left( \boldsymbol{\varpi}_{m,n} \otimes \mathbf{s}_n \right) + 2 \, \text{Re} \left( \left( \mathbf{y}_{m} - \bar{\mathbf{y}}_{m} \right)^\text{H} \mathbf{C}_{m}^{-1} \mathbf{X}_{m,n} \mathbf{X}_{m,n}^\text{H} \mathbf{C}_{m}^{-1} \left( \boldsymbol{\varpi}_{m,n} \otimes \mathbf{s}_n \right) \right) \nonumber \\
&\quad + \left( \mathbf{y}_{m} - \bar{\mathbf{y}}_{m} \right)^\text{H} \mathbf{C}_{m}^{-1} \mathbf{X}_{m,n} \mathbf{X}_{m,n}^\text{H} \mathbf{C}_{m}^{-1} \mathbf{X}_{m,n} \mathbf{X}_{m,n}^\text{H} \mathbf{C}_{m}^{-1} \left( \mathbf{y}_{m} - \bar{\mathbf{y}}_{m} \right), \nonumber \\
\hat{\rho}_{3,m}(\mathbf{a}) &= -2 \, \text{Re} \left( \left( \mathbf{y}_{m} - \bar{\mathbf{y}}_{m} \right)^\text{H} \mathbf{C}_{m}^{-1} \mathbf{X}_{m,n} \mathbf{X}_{m,n}^\text{H} \mathbf{C}_{m}^{-1} \mathbf{X}_{m,n} \mathbf{X}_{m,n}^\text{H} \mathbf{C}_{m}^{-1} \left( \boldsymbol{\varpi}_{m,n} \otimes \mathbf{s}_n \right) \right) \nonumber \\
&\quad - \left( \boldsymbol{\varpi}_{m,n} \otimes \mathbf{s}_n \right)^\text{H} \mathbf{C}_{m}^{-1} \mathbf{X}_{m,n} \mathbf{X}_{m,n}^\text{H} \mathbf{C}_{m}^{-1} \left( \boldsymbol{\varpi}_{m,n} \otimes \mathbf{s}_n \right), \nonumber \\
\hat{\rho}_{4,m}(\mathbf{a}) &= \left( \boldsymbol{\varpi}_{m,n} \otimes \mathbf{s}_n \right)^\text{H} \mathbf{C}_{m}^{-1} \mathbf{X}_{m,n} \mathbf{X}_{m,n}^\text{H} \mathbf{C}_{m}^{-1} \mathbf{X}_{m,n} \mathbf{X}_{m,n}^\text{H} \mathbf{C}_{m}^{-1} \left( \boldsymbol{\varpi}_{m,n} \otimes \mathbf{s}_n \right).
\end{align}
\vspace{-5pt}
\hrulefill
\end{figure*}

\section{Proof of Proposition~\ref{proposition1}}
\label{app2}
The outline of the proof is as follows. First, we prove that there exists a constant $\rho=\bar{\rho}_2+\bar{\rho}_3+\bar{\rho}_4$ such that $\bar{\rho}_2$, $\bar{\rho}_3$, and $\bar{\rho}_4$ are the upper bounds of $\rho_{2}(\boldsymbol{\theta}_{m})$, $\rho_{3}(\boldsymbol{\theta}_{m})$, and $\rho_{4}(\boldsymbol{\theta}_{m})$ in (\ref{eq:distributed inexact}), respectively. Then, we show that Algorithm~\ref{alg:cd} achieves a sufficient decrease of the objective function $U_m(\boldsymbol{\theta}_m)$. Finally, we demonstrate that Algorithm~\ref{alg:cd} guarantees the convergence to a stationary point of problem (\ref{sub b}).

Substituting  (\ref{eq:term1}) and (\ref{eq:term2}) into (\ref{sub b}), we have
\begin{equation}
	\label{eq:original}
	U_m(\boldsymbol{\theta}_m + d\mathbf{e}_n) = U_m(\boldsymbol{\theta}_m) + p_{\mathrm{approx}}(d) + o(d).
\end{equation}
The gradient of $U_m(\boldsymbol{\theta}_m)$ with respect to $\theta_{m,n}$ is given by
\begin{equation}
	\label{eq:gradient}
	\left. [\nabla U_m(\boldsymbol{\theta}_m)]_n = \frac{\partial U_m(\boldsymbol{\theta}_m + d\mathbf{e}_n)}{\partial \, d} \right|_{d = 0} = \rho_{1}(\boldsymbol{\theta}_{m}).
\end{equation}
Since $\rho_{1}(\boldsymbol{\theta}_{m})$ is continuous, and $\boldsymbol{\theta}_m$ is in the bounded region $[0,1]^N,$ according to (\ref{eq:gradient}), we can conclude that $\nabla U_m(\boldsymbol{\theta}_m)$ is Lipschitz continuous. That is, there exists a constant $L_m>0$ such that for any $\boldsymbol{\hat{\theta}}_m, \boldsymbol{\tilde{\theta}}_m \in [0,1]^N,$ the following inequality holds:
\begin{equation}
	\left\| \nabla U_m(\boldsymbol{\hat{\theta}}_m) - \nabla U_m(\boldsymbol{\tilde{\theta}}_m) \right\|_2 \le L_m \left\| \boldsymbol{\hat{\theta}}_m - \boldsymbol{\tilde{\theta}}_m \right\|_2.
\end{equation}
Consequently, we have a quadratic upper bound on $U_m(\boldsymbol{\theta}_m)$~\cite[Proposition A.24]{bertsekas1997nonlinear}:
\begin{align}
\label{eq:quadratic upper bound}
U_m(\boldsymbol{\tilde{\theta}}_m) \le U_m(\boldsymbol{\hat{\theta}}_m) &+ \left(\nabla U_m(\boldsymbol{\hat{\theta}}_m)\right)^T \left( \boldsymbol{\tilde{\theta}}_m - \boldsymbol{\hat{\theta}}_m \right) \notag\\ 
	&+ \frac{L_m}{2} \left\| \boldsymbol{\tilde{\theta}}_m - \boldsymbol{\hat{\theta}}_m \right\|^2_2.
\end{align}
Setting $\boldsymbol{\hat{\theta}}_m = \boldsymbol{\theta}_m$ and $\boldsymbol{\tilde{\theta}}_m = \boldsymbol{\theta}_m + d\mathbf{e}_n$, and substituting (\ref{eq:gradient}) into (\ref{eq:quadratic upper bound}), we obtain
\begin{equation}
    \label{eq:term4}
	U_m(\boldsymbol{\theta}_m + d\mathbf{e}_n) \le U_m(\boldsymbol{\theta}_m) + \rho_{1}(\boldsymbol{\theta}_{m}) d + \frac{L_m}{2} d^2.
\end{equation}
Since $\rho_{2}(\boldsymbol{\theta}_m),$ $\rho_{3}(\boldsymbol{\theta}_m),$ and $\rho_{4}(\boldsymbol{\theta}_m)$ are continuous, and $\boldsymbol{\theta}_m$ is in the bounded region $[0,1]^N,$ the three functions are also bounded. That is, there exist a constant $\rho = 2(\bar{\rho}_{2}+\bar{\rho}_{3}+\bar{\rho}_{4})$ such that $\bar{\rho}_{2},$ $\bar{\rho}_{3},$ and $\bar{\rho}_{4}$ are three upper bounds:
\begin{equation}
	|\rho_{i}(\boldsymbol{\theta}_m)| \le \bar{\rho}_{i}, \quad i = 2,3,4.
\end{equation}
Since $d \in [-\theta_{m,n},\,1-\theta_{m,n}],$  we have
\begin{align}
\label{eq:term12}
d^2\frac{\rho}{2}+d^2 \rho_{2}(\boldsymbol{\theta}_{m}) + d^3 \rho_{3}(\boldsymbol{\theta}_{m}) + d^4 \rho_{4}(\boldsymbol{\theta}_{m}) \ge 0.
\end{align}
Therefore, when $\omega \ge L_m + \rho,$ substituting (\ref{eq:term12}) into (\ref{eq:distributed inexact}), we have
\begin{equation}
    \label{eq:term3}
U_m(\boldsymbol{\theta}_m)+\rho_{1}(\boldsymbol{\theta}_{m}) d + \frac{L_m}{2} d^2 \le U_m(\boldsymbol{\theta}_m) + p_{\mathrm{approx}}(d) + \frac{\omega}{2} d^2.
\end{equation}
By substituting (\ref{eq:term3}) into (\ref{eq:term4}), we obtain
\begin{equation}
\label{eq:term11}	U_m(\boldsymbol{\theta}_m + d\mathbf{e}_n) \le U_m(\boldsymbol{\theta}_m) + p_{\mathrm{approx}}(d) + \frac{\omega}{2} d^2,
\end{equation}
which means that $U_m(\boldsymbol{\theta}_m) + p_{\mathrm{approx}}(d) + \frac{\omega}{2} d^2$ is an upper bound of $U_m(\boldsymbol{\theta}_m + d\mathbf{e}_n)$. 

Next, let us check the property of the upper bound. For the right-hand side of (\ref{eq:term11}), we have
\begin{equation}
	\label{eq:term13}
	U_m(\boldsymbol{\theta}_m) + p_{\mathrm{approx}}(d) + \frac{\omega}{2} d^2 \le U_m(\boldsymbol{\theta}_m)+\rho_{1}(\boldsymbol{\theta}_{m}) d + \frac{\omega + \rho}{2} d^2,
\end{equation}
where the inequality derives from (\ref{eq:term12}). Therefore, letting $\tilde{d}$ denote the minimizer
of the right-hand side of (\ref{eq:term13}) within $[-\theta_{m,n}, 1-\theta_{m,n}]$, and noting that the solution of problem (\ref{eq:one-dim-approx}), i.e., $\bar{d}$, is actually the minimizer of the left-hand side of (\ref{eq:term13}) within $[-\theta_{m,n}, 1-\theta_{m,n}]$, we have
\begin{equation}
    \label{eq:term7}
U_m(\boldsymbol{\theta}_m)\!+\!p_{\mathrm{approx}}\big(\bar{d}\big) \!\!+ \!\!\frac{\omega}{2} \bar{d}^2 \!\!\le \!\!U_m(\boldsymbol{\theta}_m)\!+\!\rho_{1}(\boldsymbol{\theta}_{m}) \tilde{d} +\!\! \frac{\omega + \rho}{2} \tilde{d}^2,\\
\end{equation}
where $\tilde{d}$ can be obtained in a closed form:
\begin{equation}
    \label{eq:term5}
	\tilde{d} = \Pi_{[0,1]}\left(\theta_{m,n}- \frac{\rho_{1}(\boldsymbol{\theta}_{m})}{\omega + \rho}\right) - \theta_{m,n}.
\end{equation}
Due to the properties of the projection operator, we have
\begin{equation}
\label{eq:projection property}
	(x-\Pi_{[0,1]}(x))(y-\Pi_{[0,1]}(x))\le 0, ~\forall~y \in [0,1].
\end{equation}
Then, by substituting $x=\theta_{m,n}- \frac{\rho_{1}(\boldsymbol{\theta}_{m})}{\omega + \rho}$, and $y=\theta_{m,n}$ into (\ref{eq:term5}) and (\ref{eq:projection property}), we obtain
\begin{equation}
    \label{eq:term6}
	\rho_{1}(\boldsymbol{\theta}_{m})\tilde{d} \le - (\omega + \rho) \tilde{d}^2.
\end{equation}
Then, substituting (\ref{eq:term6}) into (\ref{eq:term7}) and (\ref{eq:term11}), we obtain
\begin{equation}
	\label{update_bound}
	U_m(\boldsymbol{\theta}_m + \bar{d}\mathbf{e}_n) \le U_m(\boldsymbol{\theta}_m) - \frac{\omega + \rho}{2} \tilde{d}^2. 
\end{equation}
This inequality demonstrates that the solution of problem (\ref{eq:one-dim-approx}) sufficiently decreases the objective function by $ (\omega + \rho)\tilde{d}^2 /2$.

Finally, we prove that Algorithm~\ref{alg:cd} guarantees the convergence to a stationary point of problem (\ref{distributed formulation}). We define a function $\chi(t)$ with respect to $t>0$ as follows:
\begin{equation}
    \label{eq:chi-function}
    \chi(t) = t^2 \left(\Pi_{[0,1]} \left(\theta_{m,n}- \frac{\rho_{1}(\boldsymbol{\theta}_{m})}{t}\right)- \theta_{m,n} \right)^2,
\end{equation}
which allows us to rewrite (\ref{update_bound}) as
\begin{equation}
    \label{eq:tighter-bound}
    U_m(\boldsymbol{\theta}_m + \bar{d}\mathbf{e}_n) \le U_m(\boldsymbol{\theta}_m) - \frac{\chi(\omega + \rho)}{2(\omega + \rho)}.
\end{equation}
Since $\chi(t)$ is an increasing function of $t$ \cite[Lemma 2.3.1]{bertsekas1997nonlinear}, and $ \omega + \rho \geq 1$, we have
\begin{equation}
    \label{eq:tighter-bound2}
    U_m(\boldsymbol{\theta}_m + \bar{d}\mathbf{e}_n) \le U_m(\boldsymbol{\theta}_m) - \frac{\chi(1)}{2(\omega + \rho)},
\end{equation}
where $ \chi(1) = \left(\Pi_{[0,1]} \left(\theta_{m,n}- \rho_{1}(\boldsymbol{\theta}_{m})\right)- \theta_{m,n} \right)^2.$ Next, we define 
\begin{equation}
	\label{eq:v-function-def}
	[\mathbf{v}(\boldsymbol{\theta}_{m})]=\Pi_{[0,1]}(\boldsymbol{\theta}_{m} - \nabla U_m(\boldsymbol{\theta}_m))  - \boldsymbol{\theta}_{m}.
\end{equation}
Then, using (\ref{eq:gradient}), we have
\begin{equation}
    \label{eq:v-function}
    [\mathbf{v}(\boldsymbol{\theta}_{m})]^2_n = \chi(1).
\end{equation}
Let $n_k$ denote the coordinate index selected at the $k$-th update, and let $\boldsymbol{\theta}^{k}_m$ represent the value after $k$ updates, i.e., $\boldsymbol{\theta}^{k+1}_m = \boldsymbol{\theta}^{k}_m + \bar{d}\,\bm{e}_{n_{k+1}}.$ Substituting (\ref{eq:v-function}) into (\ref{eq:tighter-bound2}), we have
\begin{equation}
\label{theta_difference}
U_m(\boldsymbol{\theta}^{k}_m) - U_m(\boldsymbol{\theta}^{k+1}_m) \ge \frac{[\mathbf{v}(\boldsymbol{\theta}^{k}_m)]^2_{n_{k+1}}}{2(\omega + \rho)}.
\end{equation}
 Summing (\ref{theta_difference}) over $k$ yields
\begin{equation}
	\label{theta_sum}
	U_m(\boldsymbol{\theta}_m^{0}) - U_m(\boldsymbol{\theta}_m^{\infty})  \ge \frac{1}{ 2(\omega + \rho)} \sum_{k=0}^{\infty} [\mathbf{v}(\boldsymbol{\theta}^{k}_m)]^2_{n_{k+1}}.
\end{equation}
Since $\boldsymbol{\theta}_m\!\in\![0,1]^N$ and 
$U_m(\boldsymbol{\theta}_m)$ is continuous on this compact set, there exists a
finite constant $U_m^{\mathrm{lb}}$ such that
$U_m(\boldsymbol{\theta}_m) \ge U_m^{\mathrm{lb}}\text{ for all }\,\boldsymbol{\theta}_m\in[0,1]^N$.
Hence, the left-hand side of (\ref{theta_sum}) is finite, which implies
\begin{equation}
	\label{d_summable}
	\lim_{k \to \infty} [\mathbf{v}(\boldsymbol{\theta}^{k}_m)]^2_{n_{k+1}} = 0.
\end{equation}
Since the coordinates  are randomly selected  at each iteration, the expected value of $[\mathbf{v}(\boldsymbol{\theta}^{k}_m)]^2_{n_{k+1}}$ is given by
\begin{equation}
	\label{eq:expected_value}
	\lim_{k \to \infty} \mathbb{E}[[\mathbf{v}(\boldsymbol{\theta}^{k}_m)]^2_{n_{k+1}}] =\lim_{k \to \infty} \frac{1}{N} \|\mathbf{v}(\boldsymbol{\theta}^{k}_m)\|_2^2=0.
\end{equation}
Let $\boldsymbol{\theta}_m^{\star}$ be a limit point of $\boldsymbol{\theta}^{k}_m$ generated by Algorithm~\ref{alg:cd}. Substituting (\ref{eq:expected_value}) into (\ref{eq:v-function-def}), we have
\begin{equation}
\label{eq:fixed_point_new}
\boldsymbol{\theta}_{m}^{\star}=\Pi_{[0,1]}(\boldsymbol{\theta}_{m}^{\star} - \nabla U_m(\boldsymbol{\theta}_m^{\star})).
\end{equation}
Then, let $\theta_{m,n}^{\star}$ be the $n$-th coordinate of $\boldsymbol{\theta}_m^{\star}$. By substituting $x=\theta_{m,n}^{\star} - [\nabla U_m(\boldsymbol{\theta}_m^{\star})]_n$ and $y=\tilde{\theta}_{m,n}$ into (\ref{eq:projection property}), and using (\ref{eq:fixed_point_new}), we obtain
\begin{equation}
[\nabla U_m(\boldsymbol{\theta}_m^{\star})]_n \,(\tilde{\theta}_{m,n}\!-\theta_{m,n}^{\star})\ge0,~\forall~\tilde{\theta}_{m,n}\!\!\in[0,1], n\!=\!1,2,...,N,
\end{equation}
which is the first-order optimality condition of problem~(\ref{sub b}). Hence, $\boldsymbol{\theta}_m^{\star}$ is a stationary point of problem~(\ref{sub b}).

\section{\textcolor{black}{Proof of Theorem~\ref{Distributed Convergency Theorem}}} \label{app3}
We first prove that $\mathcal {L}\left ({\left \{{\boldsymbol{\theta}_{m}^{(i)}}\right \}_{m=1}^{M}, \mathbf {a}^{(i)}; \left \{{\boldsymbol {\lambda }_{m}^{(i)}}\right \}_{m=1}^{M} }\right)$ is monotonically decreasing as $i$ increases. Since $\mu>2\tilde{L}_{m}$,  we have $\mu \mathbf {I}_{N}-\nabla ^{2} f_{m}(\boldsymbol{\theta}_{m})\succeq \tilde{L}_{m}\mathbf {I}_{N}$, which guarantees the strong convexity of problem (\ref{sub b}) with modulus $\tilde{L}_{m}$. Meanwhile, as demonstrated in Proposition~\ref{proposition1}, line 4 of Algorithm~\ref{distributed algorithm} can effectively solve problem (\ref{sub b}) to  a stationary point. Therefore, we have
\begin{align}
\label{D.2}
&\hspace {-3pc}\mathcal {L}\left({\left \{{\boldsymbol{\theta}_{m}^{(i-1)}}\right \}_{m=1}^{M}, \mathbf {a}^{(i-1)}; \left \{{\boldsymbol {\lambda }_{m}^{(i-1)}}\right \}_{m=1}^{M} }\right) \notag\\ 
&-\,\mathcal {L}\left ({\left \{{\boldsymbol{\theta}_{m}^{(i)}}\right \}_{m=1}^{M}, \mathbf {a}^{(i-1)}; \left \{{\boldsymbol {\lambda }_{m}^{(i-1)}}\right \}_{m=1}^{M} }\right) \notag\\
\geq&\sum _{m=1}^{M}\frac {\tilde{L}_{m}}{2}\left \Vert{ \boldsymbol{\theta}_{m}^{(i)}-\boldsymbol{\theta}_{m}^{(i-1)}}\right \Vert _{2}^{2}.
\end{align}

Moreover, since $\boldsymbol{\theta}_{m}^{(i)}$ is the stationary point of problem (\ref{sub b}), we have
\begin{align} 
\label{D.3}
\nabla f_{m}\left ({\boldsymbol{\theta}_{m}^{(i)}}\right) +\boldsymbol {\lambda }_{m}^{(i-1)} +{\mu }\left ({\boldsymbol{\theta}_{m}^{(i)}-\mathbf {a}^{(i-1)}}\right)=\mathbf {0},\notag\\\forall~m=1,\ldots,M. 
\end{align}
\noindent Substituting (\ref{dual ascent step}) into (\ref{D.3}), we have
\begin{align}
\label{D.4}
\left \Vert{ \boldsymbol {\lambda }_{m}^{(i-1)}-\boldsymbol {\lambda }_{m}^{(i)}}\right \Vert _{2}=&\left \Vert{ \nabla f_{m}\left ({\boldsymbol{\theta}_{m}^{(i-1)}}\right)-\nabla f_{m}\left ({\boldsymbol{\theta}_{m}^{(i)}}\right)}\right \Vert _{2} \notag\\
\leq&\tilde{L}_{m} \left \Vert{ \boldsymbol{\theta}_{m}^{(i-1)}-\boldsymbol{\theta}_{m}^{(i)}}\right \Vert _{2}.
\end{align}

\noindent Consequently, we have
\begin{align}
\label{D.5}
&\hspace {-3.5pc}\mathcal {L}\left ({\left \{{\boldsymbol{\theta}_{m}^{(i)}}\right \}_{m=1}^{M}, \mathbf {a}^{(i-1)}; \left \{{\boldsymbol {\lambda }_{m}^{(i-1)}}\right \}_{m=1}^{M} }\right) \notag\\ 
&-\,\mathcal {L}\left ({\left \{{\boldsymbol{\theta}_{m}^{(i)}}\right \}_{m=1}^{M}, \mathbf {a}^{(i-1)}; \left \{{\boldsymbol {\lambda }_{m}^{(i)}}\right \}_{m=1}^{M} }\right) \notag\\
=&\sum _{m=1}^{M} \left ({\boldsymbol {\lambda }_{m}^{(i-1)}-\boldsymbol {\lambda }_{m}^{(i)}}\right)^{\mathrm {T}} \left ({\boldsymbol{\theta}_{m}^{(i)}-\mathbf {a}^{(i-1)}}\right) \notag\\
=&-\frac {1}{\mu }\sum _{m=1}^{M}\left \Vert{ \boldsymbol {\lambda }_{m}^{(i-1)}-\boldsymbol {\lambda }_{m}^{(i)}}\right \Vert _{2}^{2} \notag\\
\geq&-\frac {1}{\mu }\sum _{m=1}^{M}\tilde{L}_{m}^{2}\left \Vert{ \boldsymbol{\theta}_{m}^{(i-1)}-\boldsymbol{\theta}_{m}^{(i)}}\right \Vert _{2}^{2},
\end{align}
where the second equality comes from (\ref{dual ascent step}) and the last inequality is due to (\ref{D.4}).

Letting $G_{1}\left ({\mathbf {a}}\right)\triangleq\sum \limits _{m=1}^{M} \left\{\boldsymbol {\lambda }_{m}^{\mathrm {T}}\left ({\boldsymbol{\theta}_{m}-\mathbf {a}}\right) +\frac {\mu }{2}\left \Vert{ \boldsymbol{\theta}_{m}-\mathbf {a}}\right \Vert_{2}^{2}\right\}$, we have
\begin{align}
\label{D.1}
&\hspace {-1pc}\mathcal {L}\left ({\left \{{\boldsymbol{\theta}_{m}^{(i)}}\right \}_{m=1}^{M}, \mathbf {a}^{(i-1)}; \left \{{\boldsymbol {\lambda }_{m}^{(i)}}\right \}_{m=1}^{M} }\right) \notag\\ 
&-\,\mathcal {L}\left ({\left \{{\boldsymbol{\theta}_{m}^{(i)}}\right \}_{m=1}^{M}, \mathbf {a}^{(i)}; \left \{{\boldsymbol {\lambda }_{m}^{(i)}}\right \}_{m=1}^{M} }\right) \notag\\
=&G_{1}\left ({\mathbf {a}^{(i-1)}}\right)-G_{1}\left ({\mathbf {a}^{(i)}}\right) \geq 0,
\end{align}
where the last inequality holds because $\mathbf {a}^{(i)}$ is the optimal solution of the convex quadratic problem (\ref{sub a}), and hence also the minimizer of $G_{1}\left ({\mathbf {a}}\right)$.

Combining (\ref{D.2}), (\ref{D.5}), and (\ref{D.1}), we obtain
\begin{align}
\label{D.6}
&\hspace {-1.5pc}\mathcal {L}\left ({\left \{{\boldsymbol{\theta}_{m}^{(i-1)}}\right \}_{m=1}^{M}, \mathbf {a}^{(i-1)}; \left \{{\boldsymbol {\lambda }_{m}^{(i-1)}}\right \}_{m=1}^{M} }\right) \notag\\ 
&-\,\mathcal {L}\left ({\left \{{\boldsymbol{\theta}_{m}^{(i)}}\right \}_{m=1}^{M}, \mathbf {a}^{(i)}; \left \{{\boldsymbol {\lambda }_{m}^{(i)}}\right \}_{m=1}^{M} }\right) \notag\\
\geq&\sum _{m=1}^{M}\left ({\frac {\tilde{L}_{m}}{2}-\frac {\tilde{L}_{m}^{2}}{\mu }}\right)\left \Vert{ \boldsymbol{\theta}_{m}^{(i)}-\boldsymbol{\theta}_{m}^{(i-1)}}\right \Vert _{2}^{2}\geq 0,
\end{align}
where the last inequality is due to $\mu >2\tilde{L}_{m}$. Therefore, $\mathcal {L}\left ({\left \{{\boldsymbol{\theta}_{m}^{(i)}}\right \}_{m=1}^{M}, \mathbf {a}^{(i)}; \left \{{\boldsymbol {\lambda }_{m}^{(i)}}\right \}_{m=1}^{M} }\right)$ is monotonically decreasing.

Next, we prove that $\mathcal {L}\left ({\left \{{\boldsymbol{\theta}_{m}^{(i)}}\right \}_{m=1}^{M}, \mathbf {a}^{(i)}; \left \{{\boldsymbol {\lambda }_{m}^{(i)}}\right \}_{m=1}^{M} }\right)$ is lower bounded. Substituting (\ref{dual ascent step}) into (\ref{D.3}), we have $\boldsymbol {\lambda }_{m}^{(i)}=-\nabla f_{m}\left ({\boldsymbol{\theta}_{m}^{(i)}}\right)$. Substituting this equation into (\ref{lagrangian}), we obtain
\begin{align}
\label{D.7}
&\hspace {-1.2pc}\mathcal {L}\left ({\left \{{\boldsymbol{\theta}_{m}^{(i)}}\right \}_{m=1}^{M}, \mathbf {a}^{(i)}; \left \{{\boldsymbol {\lambda }_{m}^{(i)}}\right \}_{m=1}^{M} }\right) \notag \\
=&\sum _{m=1}^{M}f_{m}\left ({\boldsymbol{\theta}_{m}^{(i)}}\right) -\sum _{m=1}^{M}\nabla f_{m}\left ({\boldsymbol{\theta}_{m}^{(i)}}\right)^{\mathrm {T}}\left ({\boldsymbol{\theta}_{m}^{(i)}-\mathbf {a}^{(i)}}\right)\notag \\
&+\frac {\mu }{2}\sum _{m=1}^{M}\left \Vert{ \boldsymbol{\theta}_{m}^{(i)}-\mathbf {a}^{(i)}}\right \Vert _{2}^{2}\notag\\
\overset{(a)}{\geq} &\sum _{m=1}^{M}f_{m}\left ({\mathbf {a}^{(i)}}\right)\geq\sum _{m=1}^{M}\log|\mathbf{C}_{m}|\overset{(b)}{\geq}\sum _{m=1}^{M}LK\log \varsigma _{m}^{2},
\end{align}
where the inequality (a) is derived from $\mu \mathbf {I}_{N}\succeq \tilde{L}_{m}\mathbf {I}_{N} \succeq \nabla ^{2} f_{m}\left ({\mathbf {x}_{m}}\right)$, and the inequality (b) is due to the fact that $\mathbf{C}_{m}$ contains the non-negative noise term $\varsigma^2_m\mathbf{I}_{LK}$ plus semipositve definited terms, so that $\mathbf{C}_{m} \succeq \varsigma^2_m\mathbf{I}_{LK}$. Therefore, $\mathcal {L}\left ({\left \{{\boldsymbol{\theta}_{m}^{(i)}}\right \}_{m=1}^{M}, \mathbf {a}^{(i)}; \left \{{\boldsymbol {\lambda }_{m}^{(i)}}\right \}_{m=1}^{M} }\right)$ is lower bounded.

Finally, we prove that any limit point of the sequence $\left ({\left \{{\boldsymbol{\theta}_{m}^{(i)}}\right \}_{m=1}^{M}, \mathbf {a}^{(i)}; \left \{{\boldsymbol {\lambda }_{m}^{(i)}}\right \}_{m=1}^{M} }\right)$ is a stationary point of problem (\ref{distributed formulation}). Summing (\ref{D.6}) over $i$ yields
\begin{align}
\label{D.10}
&\hspace {-1.5pc}\sum _{i=1}^{\infty} \sum _{m=1}^{M}\left ({\frac {\tilde{L}_{m}}{2}-\frac {\tilde{L}_{m}^{2}}{\mu }}\right)\left \Vert{ \boldsymbol{\theta}_{m}^{(i)}-\boldsymbol{\theta}_{m}^{(i-1)}}\right \Vert _{2}^{2} \notag\\
\leq&\mathcal {L}\left ({\left \{{\boldsymbol{\theta}_{m}^{(0)}}\right \}_{m=1}^{M}, \mathbf {a}^{(0)}; \left \{{\boldsymbol {\lambda }_{m}^{(0)}}\right \}_{m=1}^{M} }\right)\notag \\
-&\,\mathcal {L}\left ({\left \{{\boldsymbol{\theta}_{m}^{(\infty)}}\right \}_{m=1}^{M}, \mathbf {a}^{(\infty)}; \left \{{\boldsymbol {\lambda }_{m}^{(\infty)}}\right \}_{m=1}^{M} }\right) < \infty, 
\end{align}
where the last inequality comes from  (\ref{D.7}). Since $\mu >2\tilde{L}_{m}$, (\ref{D.10}) yields $\lim \limits _{i\rightarrow \infty }\left \Vert{ \boldsymbol{\theta}_{m}^{(i)}-\boldsymbol{\theta}_{m}^{(i-1)}}\right \Vert _{2}=0$. Together with (\ref{D.4}) and (\ref{dual ascent step}), we also have $\lim \limits _{i\rightarrow \infty }\left \Vert{ {\boldsymbol {\lambda }_{m}^{(i)}-\boldsymbol {\lambda }_{m}^{(i-1)}}}\right \Vert _{2}=\lim \limits _{i\rightarrow \infty }\left \Vert{ \boldsymbol{\theta}_{m}^{(i)}-\mathbf {a}^{(i)}}\right \Vert _{2}=0$. Thus, with $\left ({\left \{{\boldsymbol{\theta}_{m}^{*}}\right \}_{m=1}^{M}, \mathbf {a}^{*}; \left \{{\boldsymbol {\lambda }_{m}^{*}}\right \}_{m=1}^{M} }\right)$ denoting a limit point of the sequence $\left ({\left \{{\boldsymbol{\theta}_{m}^{(i)}}\right \}_{m=1}^{M}, \mathbf {a}^{(i)}; \left \{{\boldsymbol {\lambda }_{m}^{(i)}}\right \}_{m=1}^{M} }\right)$, we have
\begin{align}
\label{D.12}
\mathbf {a}^{(i-1)}\rightarrow&\mathbf {a}^{*},\quad \boldsymbol{\theta}_{m}^{(i-1)}\rightarrow \boldsymbol{\theta}_{m}^{*},~\boldsymbol {\lambda }_{m}^{(i-1)}\rightarrow \boldsymbol {\lambda }_{m}^{*}, \notag\\
\boldsymbol{\theta}_{m}^{*}=&\mathbf {a}^{*},\quad \forall~m=1,\ldots,M.
\end{align}
Taking limit for (\ref{D.3}), and applying (\ref{D.12}), we have
\begin{equation}
\nabla f_{m}\left ({\boldsymbol{\theta}_{m}^{*}}\right) +\boldsymbol {\lambda }_{m}^{*} =\mathbf {0},\quad \forall~m=1,\ldots,M.\label{D.13}
\end{equation}
On the other hand, since $\mathbf {a}^{(i)}$ is the optimal solution of problem (\ref{sub a}), it satisfies the following first-order optimality condition:
\begin{align}\label{D.14}
&\hspace {-.5pc}\mu \sum _{m=1}^{M}\left ({\mathbf {a}^{(i)}-\boldsymbol{\theta}_{m}^{(i)}}\right)^{\mathrm {T}}{\tilde {\mathbf {d}}}-\sum _{m=1}^{M}\left ({\boldsymbol {\lambda }_{m}^{(i)}}\right)^{\mathrm {T}}{\tilde {\mathbf {d}}}  \geq 0, \forall~{\tilde {\mathbf {d}}}\in \mathcal {T}\left ({\mathbf {a}^{(i)}}\right),
\end{align}
where $\mathcal {T}\left ({\mathbf {a}^{(i)}}\right)$ is the tangent cone of the feasible set of problem (15) at $\mathbf {a}^{(i)}$. Taking limit for (\ref{D.14}), and leveraging (\ref{D.12}), we have
\begin{align}
\label{D.15}
&\hspace {-1.75pc} -\sum _{m=1}^{M}\left ({\boldsymbol {\lambda }_{m}^{*}}\right)^{\mathrm {T}}{\tilde {\mathbf {d}}} \geq 0,\quad \forall~{\tilde {\mathbf {d}}}\in \mathcal {T}\left ({\mathbf {a}^{*}}\right),
\end{align}
where $\mathcal {T}\left ({\mathbf {a}^{*}}\right)\subseteq \mathcal {T}\left ({\mathbf {a}^{i}}\right)$.
Combining (\ref{D.12}), (\ref{D.13}), and (\ref{D.15}), we can conclude that $\left ({\left \{{\boldsymbol{\theta}_{m}^{*}}\right \}_{m=1}^{M}, \mathbf {a}^{*}; \left \{{\boldsymbol {\lambda }_{m}^{*}}\right \}_{m=1}^{M} }\right)$ is a stationary point of problem (\ref{distributed formulation}).

\bibliographystyle{IEEEtran}
\bibliography{IEEEabrv,references}

@STRING{IEEE_J_SP         = "{IEEE} Trans. Signal Process."}

@STRING{IEEE_J_COM        = "{IEEE} Trans. Commun."}

@STRING{IEEE_J_WCOM       = "{IEEE} Trans. Wireless Commun."}

@ARTICLE{jianwu2008,
  author={Chen, Jianwu and Wu, Yik-Chung and Ma, Shaodan and Ng, Tung-Sang},
  journal={IEEE Trans. Signal Process.}, 
  title={Joint {CFO} and Channel Estimation for Multiuser {MIMO-OFDM} Systems With Optimal Training Sequences}, 
  month = {Aug.},
  year={2008},
  volume={56},
  number={8},
  pages={4008-4019},
  }

@ARTICLE{zhanghao2,
  author={Zhang, Hao and Li, Yang and Lin, Qingfeng and Wu, Yik-Chung and Vincent Poor, H.},
  journal={IEEE Trans. Signal Process.}, 
  title={Harnessing Common Sparsity for Enhancing {AMP}-Based Activity Detection and Channel Estimation}, 
  year={2025},
  month = {Jul.},
  volume={73},
  pages={3220-3236},
  }

@article{shahab2020grant,
  title={Grant-free non-orthogonal multiple access for {IoT}: A survey},
  author={Shahab, Muhammad Basit and Abbas, Rana and Shirvanimoghaddam, Mahyar and Johnson, Sarah J},
  month={May},
  journal={IEEE Commun. Surveys Tuts.},
  volume={22},
  number={3},
  pages={1805--1838},
  year={2020},
  publisher={IEEE}
}

@ARTICLE{yangge3,
  author={Li, Yang and Xia, Minghua and Wu, Yik-Chung},
  journal={IEEE Trans. Wireless Commun.}, 
  title={Activity Detection for Massive Connectivity Under Frequency Offsets via First-Order Algorithms}, 
  month = {Mar.},
  year={2019},
  volume={18},
  number={3},
  pages={1988-2002},
}

@ARTICLE{Chen2018,
  author = {Zhilin Chen and Foad Sohrabi and Wei Yu},
  title = {Sparse Activity Detection for Massive Connectivity},
  journal = IEEE_J_SP,
  month = {Apr.},
  year = {2018},
  volume = {66},
  pages = {1890--1904},
  number = {7},
}

@article{bertsekas1997nonlinear,
  title={Nonlinear programming},
  author={Bertsekas, Dimitri P},
  journal={J. Oper. Res. Soc.},
  volume={48},
  number={3},
  pages={334--334},
  year={1997},
}

@ARTICLE{Ke20,
  author={Ke, Malong and Gao, Zhen and Wu, Yongpeng and Gao, Xiqi and Schober, Robert},
  journal=IEEE_J_SP,
  title={Compressive Sensing-Based Adaptive Active User Detection and Channel Estimation: {Massive} Access Meets Massive {MIMO}},
  year={2020},
  month = {Jan.},
  volume={68},
  pages={764-779}}

@ARTICLE{Chen2019,
  author={Chen, Zhilin and Sohrabi, Foad and Yu, Wei},
  journal=IEEE_J_WCOM,
  title={Multi-Cell Sparse Activity Detection for Massive Random Access: \protect{Massive MIMO} Versus Cooperative \protect{MIMO}},
  month = {Aug.},
  year={2019},
  volume={18},
  number={8},
  pages={4060-4074}}

@ARTICLE{Senel2018,
  author={Senel, Kamil and Larsson, Erik G.},
  journal=IEEE_J_COM,
  title={Grant-Free Massive \protect{MTC}-Enabled Massive \protect{MIMO: A} Compressive Sensing Approach},
  month = {Dec.},
  year={2018},
  volume={66},
  number={12},
  pages={6164-6175}}

@ARTICLE{Mei21,
  author={Mei, Yikun and Gao, Zhen and Wu, Yongpeng and Chen, Wei and Zhang, Jun and Ng, Derrick Wing Kwan and Di Renzo, Marco},
  journal=IEEE_J_WCOM,
  title={Compressive Sensing Based Joint Activity and Data Detection for Grant-Free Massive {IoT} Access},
  year={2022},
  month = {Mar.},
  volume={21},
  number={3},
  pages={1851-1869}}

@ARTICLE{Ai22,
  author={Chen, Wei and Xiao, Han and Sun, Lei and Ai, Bo},
  journal=IEEE_J_WCOM,
  title={Joint Activity Detection and Channel Estimation in Massive {MIMO} Systems With Angular Domain Enhancement},
  month = {May},
  year={2022},
  volume={21},
  number={5},
  pages={2999-3011}}

@article{gao2023compressive,
  title={Compressive-sensing-based grant-free massive access for {6G} massive communication},
  author={Gao, Zhen and Ke, Malong and Mei, Yikun and Qiao, Li and Chen, Sheng and Ng, Derrick Wing Kwan and Poor, H Vincent},
  journal={IEEE IoT J.},
  volume={11},
  number={5},
  pages={7411--7435},
  month={Mar.},
  year={2024},
  publisher={IEEE}
}

@ARTICLE{Ren2025,
  author={Ren, Zeyi and Lin, Qingfeng and Lei, Jingreng and Li, Yang and Wu, Yik-Chung},
  journal={IEEE Wireless Communications Letters}, 
  title={Mixture of Experts-Augmented Deep Unfolding for Activity Detection in {IRS}-Aided Systems}, 
  year={2025},
  month={Jun.},
  volume={14},
  number={9},
  pages={2912-2916},
  }

@INPROCEEDINGS{Haghighatshoar2018,
  author={Haghighatshoar, Saeid and Jung, Peter and Caire, Giuseppe},
  booktitle={Proc. IEEE Int. Symp. Inf. Theory (ISIT)},
  title={Improved Scaling Law for Activity Detection in Massive \protect{MIMO} Systems},
  year={2018}}

@INPROCEEDINGS{chenIcc,
  author={Chen, Zhilin and Sohrabi, Foad and Liu, Ya-Feng and Yu, Wei},
  booktitle={Proc. IEEE Int. Conf. Commun. (ICC)},
  title={Covariance Based Joint Activity and Data Detection for Massive Random Access with Massive \protect{MIMO}},
  year={2019}}

@ARTICLE{Lin2022,
  author={Lin, Qingfeng and Li, Yang and Wu, Yik-Chung},
  journal={IEEE Trans. Wireless Commun.}, 
  title={Sparsity Constrained Joint Activity and Data Detection for Massive Access: A Difference-of-Norms Penalty Framework}, 
  month = {Mar.},
  year={2023},
  volume={22},
  number={3},
  pages={1480-1494},
   }

@ARTICLE{fengler2021non,
  author={Fengler, Alexander and Haghighatshoar, Saeid and Jung, Peter and Caire, Giuseppe},
  journal={IEEE Trans. Inf. Theory}, 
  title={Non-Bayesian Activity Detection, Large-Scale Fading Coefficient Estimation, and Unsourced Random Access With a Massive {MIMO} Receiver}, 
  month = {May},
  year={2021},
  volume={67},
  number={5},
  pages={2925-2951},
  }

@ARTICLE{Shao2020-2,
  author={Shao, Xiaodan and Chen, Xiaoming and Ng, Derrick Wing Kwan and Zhong, Caijun and Zhang, Zhaoyang},
  journal=IEEE_J_SP,
  title={Cooperative Activity Detection: \protect{Sourced} and Unsourced Massive Random Access Paradigms},
  month={Nov.},
  year={2020},
  volume={68},
  pages={6578-6593}}

@ARTICLE{yangge,
  author={Li, Yang and Chen, Zhilin and Wang, Yunqi and Yang, Chenyang and Ai, Bo and Wu, Yik-Chung},
  journal={IEEE Trans. Wireless Commun.}, 
  title={Heterogeneous Transformer: A Scale Adaptable Neural Network Architecture for Device Activity Detection}, 
  month = {May},
  year={2023},
  volume={22},
  number={5},
  pages={3432-3446},
  }

@ARTICLE{lin2024intelligent,
  author={Lin, Qingfeng and Li, Yang and Wu, Yik-Chung and Zhang, Rui},
  journal={IEEE Trans. Wireless Commun.}, 
  title={Intelligent Reflecting Surface Aided Activity Detection for Massive Access: Performance Analysis and Learning Approach}, 
  month = {Nov.},
  year={2024},
  volume={23},
  number={11},
  pages={16935-16949},
  }

@article{chen2021phase,
  title={Phase transition analysis for covariance-based massive random access with massive {MIMO}},
  author={Chen, Zhilin and Sohrabi, Foad and Liu, Ya-Feng and Yu, Wei},
  journal={IEEE Trans. Inf. Theory},
  volume={68},
  number={3},
  pages={1696--1715},
  month = {Mar.},
  year={2021},
  publisher={IEEE}
}

@ARTICLE{lin2024communication,
  author={Lin, Qingfeng and Li, Yang and Kou, Wei-Bin and Chang, Tsung-Hui and Wu, Yik-Chung},
  journal={IEEE Trans. Wireless Commun.}, 
  title={Communication-Efficient Activity Detection for Cell-Free Massive {MIMO}: An Augmented Model-Driven End-to-End Learning Framework}, 
  month = {Oct.},
  year={2024},
  volume={23},
  number={10},
  pages={12888-12903},
 }

@article{ganesan2021clustering,
  title={Clustering-based activity detection algorithms for grant-free random access in cell-free massive {MIMO}},
  author={Ganesan, Unnikrishnan Kunnath and Bj{\"o}rnson, Emil and Larsson, Erik G},
  journal={IEEE Trans. Commun.},
  volume={69},
  number={11},
  pages={7520--7530},
  month = {Nov.},
  year={2021},
  publisher={IEEE}
}

@article{liuNearFieldCommunicationsTutorial2023,
  title = {Near-Field Communications: A Tutorial Review},
  author = {Liu, Yuanwei and Wang, Zhaolin and Xu, Jiaqi and Ouyang, Chongjun and Mu, Xidong and Schober, Robert},
  month = {Aug.},
  year = {2023},
  journal = {IEEE Open J. Commun. Soc.},
  volume = {4},
  pages = {1999--2049},
  doi = {10.1109/OJCOMS.2023.3305583}
}

@ARTICLE{lu2024tutorial,
  author={Lu, Haiquan and Zeng, Yong and You, Changsheng and Han, Yu and Zhang, Jiayi and Wang, Zhe and Dong, Zhenjun and Jin, Shi and Wang, Cheng-Xiang and Jiang, Tao and You, Xiaohu and Zhang, Rui},
  journal={IEEE Commun. Surveys Tuts.}, 
  title={A Tutorial on Near-Field {XL-MIMO} Communications Toward {6G}}, 
  month={Apr.},
  year={2024},
  volume={26},
  number={4},
  pages={2213-2257},
}

@ARTICLE{wangzhe,
  author={Wang, Zhe and Zhang, Jiayi and Du, Hongyang and Niyato, Dusit and Cui, Shuguang and Ai, Bo and Debbah, Mérouane and Letaief, Khaled B. and Poor, H. Vincent},
  journal={IEEE Commun. Surveys Tuts.}, 
  title={A Tutorial on Extremely Large-Scale {MIMO} for {6G}: Fundamentals, Signal Processing, and Applications}, 
  month={Jan.},
  year={2024},
  volume={26},
  number={3},
  pages={1560-1605},
 }

@ARTICLE{lu2025block,
  author={Lu, Liyang and Ma, Ke and Wang, Yue and Wang, Zhaocheng},
  journal={IEEE Commun. Mag.}, 
  title={Near-Field Communications with Block-Dominant Compressed Sensing: Fundamentals, Approaches, and Future Directions}, 
  month={Jun.},
  year={2025},
  volume={63},
  number={6},
  pages={138-145},
  keywords={6G mobile communication;Accuracy;Spectral efficiency;Channel estimation;Mobile antennas;Electromagnetic radiation;Noise measurement;Channel models;Compressed sensing;Antenna arrays},
  doi={10.1109/MCOM.002.2400399}}

@ARTICLE{liu2025nearfield,
  author={Liu, Yuanwei and Xu, Jiaqi and Wang, Zhaolin and Mu, Xidong and Hanzo, Lajos},
  journal={IEEE Wireless Commun.}, 
  title={Near-field Communications: What Will Be Different?}, 
  month={Apr.},
  year={2025},
  volume={32},
  number={2},
  pages={262-270},
}

@ARTICLE{wang2024scalinglaw,
  author={Wang, Ziyue and Liu, Ya-Feng and Wang, Zhaorui and Yu, Wei},
  journal={IEEE Trans. Inf. Theory}, 
  title={Covariance-Based Activity Detection in Cooperative Multi-Cell Massive {MIMO}: Scaling Law and Efficient Algorithms}, 
  month = {Dec.},
  year={2024},
  volume={70},
  number={12},
  pages={8770-8790},
}

@inproceedings{djelouat2021user,
  title={User activity detection and channel estimation of spatially correlated channels via {AMP} in massive {MTC}},
  author={Djelouat, Hamza and Marata, Leatile and Leinonen, Markus and Alves, Hirley and Juntti, Markku},
  booktitle={Proc. 55th Asilomar Conf. Signals Syst.
Comput.},
  year={2021},
}

@article{cheng2020orthogonal,
  title={Orthogonal {AMP} for massive access in channels with spatial and temporal correlations},
  author={Cheng, Yiyao and Liu, Lei and Ping, Li},
  journal={IEEE J. Sel. Areas in
Commun.},
  volume={39},
  number={3},
  pages={726--740},
  month = {Mar.},
  year={2021},
  publisher={IEEE}
}

@article{liu2024mle,
  title={{MLE}-based device activity detection under {Rician} fading for massive grant-free access with perfect and imperfect synchronization},
  author={Liu, Wang and Cui, Ying and Yang, Feng and Ding, Lianghui and Sun, Jun},
  journal={IEEE Trans. Wireless Commun.},
  volume={23},
  number={8},
  pages={8787--8804},
  month = {Aug.},
  year={2024},
  publisher={IEEE}
}

@article{li2022asynchronous,
  title={Asynchronous activity detection for cell-free massive {MIMO}: From centralized to distributed algorithms},
  author={Li, Yang and Lin, Qingfeng and Liu, Ya-Feng and Ai, Bo and Wu, Yik-Chung},
  journal={IEEE Trans. Wireless Commun.},
  volume={22},
  number={4},
  pages={2477--2492},
  month = {Apr.},
  year={2023},
  publisher={IEEE}
}

@ARTICLE{wang2024beamfocusing,
  author={Wang, Zhaolin and Mu, Xidong and Liu, Yuanwei},
  journal={IEEE Trans. Commun.}, 
  title={Beamfocusing Optimization for Near-Field Wideband Multi-User Communications}, 
  month = {Jan.},
  year={2025},
  volume={73},
  number={1},
  pages={555-572},
}

@ARTICLE{Yafeng,
  author={Liu, Ya-Feng and Chang, Tsung-Hui and Hong, Mingyi and Wu, Zheyu and M.-C. So, Anthony and Jorswieck, Eduard A. and Yu, Wei},
  journal={IEEE J. Sel. Areas Commun.}, 
  title={A Survey of Recent Advances in Optimization Methods for Wireless Communications}, 
  month = {Nov.},
  year={2024},
  volume={42},
  number={11},
  pages={2992-3031},
}

@article{zhanghao,
  author={Zhang, Hao and Lin, Qingfeng and Li, Yang and Cheng, Lei and Wu, Yik-Chung},
  journal={IEEE Trans. Signal Process.}, 
  title={Activity Detection for Massive Connectivity in Cell-Free Networks With Unknown Large-Scale Fading, Channel Statistics, Noise Variance, and Activity Probability: A Bayesian Approach}, 
  Month={Feb.},
  year={2024},
  volume={72},
  number={},
  pages={942-957},
  }

@ARTICLE{cuilaoshi,
  author={Cui, Mingyao and Wu, Zidong and Lu, Yu and Wei, Xiuhong and Dai, Linglong},
  journal={IEEE Commun. Mag.}, 
  title={Near-Field {MIMO} Communications for {6G}: Fundamentals, Challenges, Potentials, and Future Directions}, 
  month = {Jan.},
  year={2023},
  volume={61},
  number={1},
  pages={40-46},
  }

@ARTICLE{Cui1,
  author={Cui, Mingyao and Dai, Linglong},
  journal={IEEE Trans. Commun.}, 
  title={Channel Estimation for Extremely Large-Scale {MIMO}: Far-Field or Near-Field?}, 
  year={2022},
  month={Apr.},
  volume={70},
  number={4},
  pages={2663-2677},
  }

@ARTICLE{Cui2,
  author={Cui, Mingyao and Dai, Linglong},
  journal={Sci. China Inf. Sci.},
  title={Near-Field Wideband Channel Estimation for Extremely Large-Scale {MIMO}},
  year={2023},
  volume={66},
  number={7},
  pages={172303},
  month={Jul.},
  }

@ARTICLE{Cui3,
  author={Cui, Mingyao and Dai, Linglong and Wang, Zhaocheng and Zhou, Shidong and Ge, Ning},
  journal={IEEE Trans. Wireless Commun.}, 
  title={Near-Field Rainbow: Wideband Beam Training for {XL-MIMO}}, 
  year={2023},
  volume={22},
  number={6},
  pages={3899-3912},
  month={Jun.},
  }

@ARTICLE{Wang2026,
  author={Wang, Ziyue and Li, Yang and Liu, Ya-Feng and Ma, Junjie},
  journal={IEEE Trans. Wireless Commun.}, 
  title={Covariance-Based Device Activity Detection With Massive {MIMO} for Near-Field Correlated Channels}, 
  year={2026},
  volume={25},
  pages={4326-4342},
  }

@ARTICLE{Dong2022,
  author={Dong, Zhenjun and Zeng, Yong},
  journal={IEEE Commun. Lett.}, 
  title={Near-Field Spatial Correlation for Extremely Large-Scale Array Communications}, 
  year={2022},
  volume={26},
  number={7},
  pages={1534-1538},
  }

@ARTICLE{Wymeersch2020,
  author={Wymeersch, Henk and He, Jiguang and Denis, Benoit and Clemente, Antonio and Juntti, Markku},
  journal={IEEE Veh. Technol. Mag.}, 
  title={Radio Localization and Mapping With Reconfigurable Intelligent Surfaces: Challenges, Opportunities, and Research Directions}, 
  year={2020},
  volume={15},
  number={4},
  pages={52-61},
  }

@ARTICLE{peral2018survey,
  author={del Peral-Rosado, José A. and Raulefs, Ronald and López-Salcedo, José A. and Seco-Granados, Gonzalo},
  journal={IEEE Commun. Surveys Tuts.}, 
  title={Survey of Cellular Mobile Radio Localization Methods: From {1G} to {5G}}, 
  year={2017},
  month={Dec.},
  volume={20},
  number={2},
  pages={1124-1148},
  }

@ARTICLE{Zou2024,
  author={Zou, Qinglin and Behdad, Zinat and Tugfe Demir, {\"O}zlem and Cavdar, Cicek},
  journal={IEEE Wireless Commun. Lett.}, 
  title={Distributed Versus Centralized Sensing in Cell-Free Massive {MIMO}}, 
  year={2024},
  Month={Sep.},
  volume={13},
  number={12},
  pages={3345-3349},
  }

@book{Bjornson2017,
  author={Bj\"{o}rnson, Emil and Hoydis, Jakob and Sanguinetti, Luca},
  title={Massive {MIMO} Networks: Spectral, Energy, and Hardware Efficiency},
  year={2017},
  publisher={Now Publishers Inc.},
  address={Hanover},
  }

@article{Edith1, 
author = {Yau, Cheuk-Wang and Jewsakul, Sukanya and Luk, Man-Ho and Lee, Angela P. Y. and Chan, Yun-Hin and Ngai, Edith C. H. and Pong, Philip W. T. and Lui, King-Shan and Liu, Jiangchuan}, 
title = {{NB-IoT} Coverage and Sensor Node Connectivity in Dense Urban Environments: An Empirical Study}, year = {2022}, 
journal = {ACM Trans. Sen. Netw.}, 
month = sep.,
year = {2022}, 
}

@ARTICLE{Edith2,
  author={Ahlgren, Bengt and Hidell, Markus and Ngai, Edith C.-H.},
  journal={IEEE Internet Computing}, 
  title={Internet of Things for Smart Cities: Interoperability and Open Data}, 
  year={2016},
  volume={20},
  number={6},
  pages={52-56},
  month={Dec.},
  }

\end{document}